\newtheorem{theorem}{Theorem}
\newtheorem{lemma}{Lemma}
\newtheorem{proposition}{Proposition}
\newtheorem{assumption}{Assumption}
\newtheorem{remark}{Remark}
\theoremstyle{remark}
\title{Risk-Sensitive Option Market Making with Arbitrage-Free eSSVI Surfaces:\\
A Constrained RL and Stochastic Control Bridge}
\author{
  ZhangJian'an \\
  Guanghua School of Management, Peking University \\
  Peking University \\
  Beijing, China\\
  \texttt{2501111059@stu.pku.edu.cn}
}
\begin{document}
\pagestyle{plain}   
\maketitle

\begin{abstract}
We formulate option market making as a \emph{constrained, risk-sensitive} stochastic control problem in which the policy must jointly optimize trading revenues and maintain an \emph{arbitrage-free, smooth} implied-volatility (IV) surface. Concretely, a fully differentiable eSSVI layer is embedded \emph{inside} the learning loop, enforcing static no-arbitrage (butterfly/calendar) through smoothed lattice penalties while the agent controls half-spreads, delta-hedging intensity, and structured surface deformations (state-dependent $\rho$-shift and $\psi$-scale). Executions are \emph{intensity-driven} and respond monotonically to spreads and relative mispricing; tail risk is shaped with a differentiable \emph{CVaR} objective via the Rockafellar--Uryasev program. Theoretically, we prove: (i) grid-consistency and convergence rates for butterfly/calendar surrogates; (ii) a primal--dual grounding of a learnable \emph{dual} action that acts as a state-dependent Lagrange multiplier; (iii) differentiable CVaR estimators with mixed pathwise/likelihood-ratio gradients and epi-convergence to the nonsmooth objective; (iv) an eSSVI wing-growth bound consistent with Lee’s moment constraints; and (v) policy-gradient validity under smooth surrogates. In simulation (Heston fallback; ABIDES-ready), the agent achieves \emph{positive adjusted P\&L} in most intraday segments while keeping calendar violations at numerical zero and butterfly violations at the numerical floor; ex-post tails remain realistic and tunable through the CVaR weight. The five control heads admit clear economic semantics and analytic sensitivities, yielding a \emph{white-box} reinforcement learner that unifies pricing consistency and execution control in a reproducible pipeline.
\end{abstract}

\keywords{Option market making; implied volatility surface; arbitrage-free modeling; eSSVI; SVI; risk-sensitive reinforcement learning; constrained MDP (CMDP); policy gradient; Proximal Policy Optimization (PPO); primal–dual optimization; intensity-based executions; Hawkes/point processes; delta/vega/vanna; stochastic control; agent-based simulation; neural-SDE; deep hedging}

\section{Introduction}

Market making in options traditionally sits at the intersection of three disciplines: (i) microstructure-aware control of quotes and inventory, (ii) arbitrage-consistent construction of implied-volatility (IV) surfaces, and (iii) risk management under severe tail events. Classical models formalize quoting and inventory control as stochastic control problems and illuminate structural trade-offs between spread revenue and inventory risk \cite{HoStoll1981,GlostenMilgrom1985,Kyle1985,AvellanedaStoikov2008,GueantLehalle2013,CarteaJaimungalPenalva2015}. Concurrently, the volatility-modeling literature has developed parametric families (SVI/SSVI/eSSVI) and general principles that enforce static no-arbitrage across strikes and maturities \cite{BreedenLitzenberger1978,Lee2004,GatheralJacquier2014,HendriksMartini2019,MartiniMingone2022,CarrMadan2005,Roper2010,Durrleman2009}. Yet, in both academic studies and practice, these pieces are often optimized in isolation: first calibrate a no-arbitrage surface, and only then design execution/hedging policies. This separation hampers interpretability and can mask failure modes when pricing consistency interacts with execution frictions and downside risk.

\paragraph{Problem.}
We ask whether one can \emph{unify} pricing consistency and decision-making by embedding an \emph{arbitrage-free, fully differentiable} IV surface \cite{GatheralJacquier2014,HendriksMartini2019,MartiniMingone2022} \emph{inside} an agent-based market-making loop, while directly controlling tail risk via coherent risk measures such as CVaR \cite{RockafellarUryasev2000,RockafellarUryasev2002,ChowGhavamzadeh2014,TamarEtAl2015,ChowEtAl2018JMLR}. This brings stochastic control and modern reinforcement learning (RL) into a single, risk-sensitive framework where the policy co-evolves quotes, hedges, and surface deformations subject to no-arbitrage and smoothness constraints.
Concretely, let the system state $x_t=(S_t,\,\text{LOB}_t,\,\text{IV}_t,\,Q_t,\ldots)$ aggregate mid-price, order-book features, a finite-dimensional IV-surface parameterization, and inventory; an action $a_t$ selects half-spreads $\delta_t(k,T)$, hedge intensities $h_t$, and low-dimensional surface adjustments $\Delta\theta_t$. The maker receives marked-to-market revenue from fills, pays execution and inventory costs, and is penalized for \emph{(i)} shape violations (butterfly/calendar), \emph{(ii)} lack of smoothness, and \emph{(iii)} tail losses via $\mathrm{CVaR}_q$ of the PnL distribution. The core question is whether such a loop can \emph{learn} to place prices and size trades that are \emph{internally consistent} (no static arbitrage), \emph{microstructure-aware}, and \emph{tail-safe}.

\paragraph{Why now?}
Two developments make this integration timely. First, high-fidelity agent-based simulators and microstructure datasets enable reproducible experimentation with event-level feedbacks---order flow, fills, and price impact---without relying on fragile stylized approximations \cite{Gould2013Survey,Hasbrouck2007,ContKukanovStoikov2014,KirilenkoEtAl2017,ByrdEtAl2020ABIDES,AmrouniEtAl2021ABIDESGym}. These environments support point-process order-flow models and execution rules that determine which quotes actually trade and at what slippage. Second, the IV-surface literature now provides \emph{constructive} characterizations of butterfly/calendar no-arbitrage and robust calibration procedures for SVI/SSVI/eSSVI \cite{GatheralJacquier2014,HendriksMartini2019,MartiniMingone2022,Cont2023ArbFreeSurface,Durrleman2009,CarrMadan2005,Roper2010}. The eSSVI map is \emph{amenable to differentiation}, allowing gradient-based learning while preserving static no-arbitrage by design. In parallel, risk-sensitive RL has matured beyond heuristic penalties, offering principled algorithms and sample-complexity guarantees for CVaR-type objectives \cite{ChowGhavamzadeh2014,TamarEtAl2015,ChowEtAl2018JMLR,WangEtAl2023NeurIPS,NiEtAl2024ICML}. These advances sit alongside ``deep hedging’’ and arbitrage-free neural-SDE market models that make end-to-end learning compatible with financial constraints \cite{BuehlerEtAl2019DeepHedging,CohenReisingerWang2021ArbFreeNSDE,CohenReisingerWang2023JCF,WieseEtAl2019}. The confluence of (i) differentiable, arbitrage-free surfaces, (ii) event-level simulators, and (iii) risk-sensitive policy optimization enables a unified treatment of pricing and control.

\paragraph{Our approach in brief.}
We formulate option market making as a \emph{risk-sensitive stochastic control} problem. The \emph{state} includes path features (returns, realized variance), surface statistics (eSSVI parameters, curvature), and microstructure variables (queue sizes, imbalance). The \emph{actions} jointly choose half-spreads $\delta_t(k,T)$, hedge intensities $h_t$, and low-rank eSSVI parameter perturbations $\Delta\theta_t$ constrained to the no-arbitrage manifold. The \emph{reward} is
\[
R_t \;=\; \text{spread revenue} - \text{impact/fees} - \lambda_Q \|Q_{t+1}\|^2 - \lambda_{\text{arb}}\underbrace{\Phi_{\text{arb}}(\theta_{t+1})}_{\substack{\text{butterfly \& calendar}\\\text{surrogates}}}
\]
and the \emph{training objective} maximizes expected cumulative reward subject to a CVaR regularizer on episodic PnL:
\[
\max_\pi \;\; \mathbb{E}_\pi\!\Big[\sum_t \gamma^t R_t\Big]\;-\;\lambda_{\mathrm{CVaR}}\cdot \mathrm{CVaR}_q\!\big(-\mathrm{PnL}(\tau)\big).
\]
A differentiable eSSVI layer \cite{HendriksMartini2019,MartiniMingone2022} produces internally consistent quotes and surfaces while serving gradients to the policy; executions are intensity-driven and coupled to mispricing and spreads, consistent with point-process order flow \cite{Hawkes1971,BacryEtAl2015}. We combine a short \emph{supervised warm start} (matching stylized optimal quotes/inventory targets) with PPO \cite{Schulman2017PPO,Schulman2016GAE}, annealing structural weights so the agent first discovers revenue, then tightens arbitrage and tail discipline.

\paragraph{Making arbitrage constraints differentiable and useful.}
A central design choice is to transform hard butterfly/calendar conditions into smooth penalties usable by policy gradients. For eSSVI parameters $\theta=(\rho,\eta,\lambda,\ldots)$, we define surrogate functionals $\Phi_{\text{but}}(\theta)$ and $\Phi_{\text{cal}}(\theta)$ that are (i) zero on the admissible region characterized in \cite{GatheralJacquier2014,HendriksMartini2019,MartiniMingone2022}, (ii) positive and \emph{Lipschitz-smooth} elsewhere, and (iii) \emph{directionally aligned} with the exact Karush--Kuhn--Tucker multipliers of the constrained calibration problem. This provides informative gradients that penalize nascent arbitrage while permitting small, structure-preserving deformations. The penalties are coupled to \emph{shape priors} (bounded skew/smile curvature, term-structure smoothness) derived from \cite{Lee2004,Durrleman2009,Roper2010,CarrMadan2005}.

\paragraph{Microstructure-consistent executions.}
We model arrivals on each $(k,T)$ bucket as conditionally independent point processes with intensities
\[
\lambda_t^{\pm}(k,T) \;=\; \Lambda\!\Big(\pm \underbrace{\Delta P_t(k,T)}_{\text{maker price -- reference}},\,\delta_t(k,T),\,\text{imbalance}_t,\,\text{queue}_t\Big),
\]
where $\Lambda$ is decreasing in adverse price terms and increasing in offered liquidity, consistent with \cite{BacryEtAl2015,ContKukanovStoikov2014}. Impact enters both through (i) temporary execution costs and (ii) state transitions that modify future intensities via queue depletion and imbalance---a channel emphasized by \cite{CarteaJaimungalPenalva2015}. This endogenous feedback loop makes spread-setting and inventory control \emph{joint}, and highlights the benefit of embedding pricing consistency \emph{inside} the policy so that quoted surfaces remain arbitrage-free even when inventories or impact shocks push quotes to extremes.

\paragraph{Risk-sensitive learning with financial semantics.}
We implement CVaR through a convex auxiliary variable $z$ and sample-based shortfall losses, following \cite{RockafellarUryasev2000,RockafellarUryasev2002} and the policy-gradient extensions in \cite{ChowGhavamzadeh2014,TamarEtAl2015,ChowEtAl2018JMLR}. The training objective becomes
\[
\min_{z}\;\; z + \frac{1}{(1-q)}\,\mathbb{E}\big[(-\mathrm{PnL}(\tau)-z)_+\big] \;-\; \frac{1}{\lambda_{\mathrm{CVaR}}}\,\mathbb{E}\!\Big[\sum_t \gamma^t R_t\Big],
\]
for which we derive a variance-reduced gradient estimator using generalized advantage estimation \cite{Schulman2016GAE} and stratified tail sampling. Financially, CVaR focuses learning on the regimes that matter (illiquidity, gap risk), aligning statistical training with risk oversight.

\paragraph{Diagnostics and \emph{white-box} explainability.}
To support audit and debugging, the agent logs: (i) active arbitrage penalties and their gradients, (ii) inventory and exposure trajectories, (iii) per-bucket fill intensities vs.\ realized fills, and (iv) a decomposition of episodic PnL into spread, impact, carry, and penalty rebates. These ledgers make it possible to answer \emph{why} a specific surface deformation or spread change occurred (e.g., ``calendar penalty rising at $T=2$M forced upward adjustment of long-dated variance'').

\paragraph{Contributions.}
\begin{enumerate}
\item \textbf{A two-way bridge between stochastic control and deep RL.} We integrate an arbitrage-free eSSVI surface into the control loop, so classical no-arbitrage structure \emph{constrains} learning while learned controls \emph{co-evolve} the surface under microstructure frictions \cite{GatheralJacquier2014,HendriksMartini2019,MartiniMingone2022,BacryEtAl2015,ContKukanovStoikov2014}. The surface acts both as pricing engine and differentiable prior, tightening exploration to financially admissible regions.
\item \textbf{Risk-sensitive reinforcement learning with financial semantics.} Our training objective explicitly includes CVaR alongside arbitrage and smoothness, linking Whittle/Howard-style risk-sensitive control to modern policy gradients \cite{HowardMatheson1972,RockafellarUryasev2000,ChowGhavamzadeh2014,TamarEtAl2015,ChowEtAl2018JMLR}. We provide tail-focused estimators compatible with event-level simulators.
\item \textbf{Mathematical guarantees inside the loop (to be proved in Section~\ref{sec:theory}).} We state and prove: (\emph{i}) differentiability and Lipschitz properties of our butterfly/calendar surrogate penalties over eSSVI parameters; (\emph{ii}) existence of optimal stationary policies for the regularized risk-sensitive objective under compactness and linear-growth conditions; (\emph{iii}) a calibration-stability proposition showing that small dual-penalty perturbations preserve no-arbitrage in the eSSVI map; and (\emph{iv}) a CVaR policy-gradient identity with variance-reduced Monte Carlo estimators.
\item \textbf{Reproducible agent-based evaluation.} We prioritize ABIDES-style sources with a calibrated Heston fallback \cite{ByrdEtAl2020ABIDES,Heston1993}, releasing logs, figures, and scripts for the full pipeline.\footnote{We intentionally defer live-data backtests and focus on a simulation-first study consistent with recent reproducibility trends in market microstructure \cite{ByrdEtAl2020ABIDES,AmrouniEtAl2021ABIDESGym,ContKukanovStoikov2014}.}
\end{enumerate}

\paragraph{Positioning vis-\`a-vis recent 2024--2025 developments.}
Recent theory demonstrates strategic interactions between brokers, informed traders, and competing market makers \cite{BergaultSanchez2025,CarteaSanchez2025,BoyceHerdegenSanchez2025}. Our framework is complementary: rather than modeling competition or information per se, we guarantee \emph{internal pricing consistency} of the maker’s surface during learning, which is orthogonal to---and potentially composable with---competitive or informed-flow models. On the modeling side, arbitrage-free neural-SDEs and deep-hedging pipelines inform our differentiable design and stress-testing tools \cite{CohenReisingerWang2021ArbFreeNSDE,CohenReisingerWang2023JCF,BuehlerEtAl2019DeepHedging}. On the control side, risk-sensitive RL with CVaR has progressed to provable and scalable algorithms \cite{ChowEtAl2018JMLR,WangEtAl2023NeurIPS,NiEtAl2024ICML}; our contribution is to \emph{instantiate} those principles in a financially structured environment where no-arbitrage is enforced \emph{inside} the objective.

\paragraph{Roadmap and theoretical content.}
Section~\ref{sec:prelim} establishes notation and background, reviews arbitrage-free eSSVI parameterizations, static no-arbitrage conditions, and the risk-sensitive CMDP formulation. 
Section~\ref{sec:method} formulates the integrated model—combining the eSSVI surface, intensity-based execution, delta-hedging, and CVaR-shaped reward—within a differentiable, arbitrage-consistent control loop. 
Section~\ref{sec:learning} details the two-stage learning procedure (\emph{warm-start + PPO}) and interprets the state-dependent \emph{dual} head as a learnable Lagrange multiplier. 
Section~\ref{sec:interp} analyzes the interpretability of controls, deriving analytic sensitivities of quotes, intensities, and Greeks with respect to each action dimension. 
Section~\ref{sec:theory} presents the complete set of theoretical results—Theorems~T1–T6 and Propositions~P7–P8—establishing lattice-consistency, primal–dual structure, differentiable CVaR estimation, wing-growth bounds, policy-gradient validity, and interpretability guarantees. 
Sections~\ref{sec:discussion}–\ref{sec:conclusion} discuss limitations, extensions, and broader implications for robust and interpretable reinforcement learning in quantitative markets.

\vspace{1ex}
\noindent\textbf{Notation.} We use $S_t$ for the mid price, $k=\log(K/S)$ for log-moneyness, $T$ for maturity, $w(k,T)$ for total variance, $\sigma(k,T)=\sqrt{w/T}$ for IV, and standard Black--Scholes notation \cite{BlackScholes1973,Merton1973}. CVaR at tail level $q$ is denoted $\mathrm{CVaR}_q$ \cite{RockafellarUryasev2000}.

\section{Preliminaries \& Problem Setup}
\label{sec:prelim}

This section fixes notation, recalls an arbitrage-free extended SSVI (eSSVI) parameterization for implied-volatility (IV) surfaces, formalizes static no-arbitrage (butterfly and calendar), and introduces the risk measure (\textsc{cvar}) and our constrained MDP (\textsc{cmdp}) formulation.

\subsection{Notation and timing conventions}

We consider a single underlying with mid price process $(S_t)_{t\ge 0}$ observed on an intraday grid $t=0,1,\dots,T_{\text{day}}$. Let $K>0$ denote strike and
\[
k \equiv \log\!\left(\frac{K}{S_t}\right)
\quad\text{(log-moneyness).}
\]
We work with maturities $T\in\mathcal{T}=\{T_m\}_{m=1}^M$. Total variance is $w(k,T)$ and the Black--Scholes IV is $\sigma(k,T)=\sqrt{w(k,T)/T}$. We use $C^{\mathrm{BS}}(S,K,T,\sigma)$ for the Black--Scholes call price and the standard Greeks. Discrete grids for moneyness and maturities are
\[
\mathcal{K}=\{k_j\}_{j=1}^{J},\qquad \mathcal{T}=\{T_m\}_{m=1}^{M}.
\]
Throughout, $\sigma(\cdot)$ denotes the logistic sigmoid and $\mathrm{ReLU}(x)=\max\{x,0\}$ the hinge map (smoothed in practice).

\subsection{Arbitrage-free eSSVI recap}
\label{sec:essvi}

For each maturity $T_m$, eSSVI parameterizes total variance as
\begin{equation}
\label{eq:ssvi}
w_m(k)
=\frac{\theta_m}{2}\Bigl(1+\rho_m \phi_m k
+\sqrt{(\phi_m k+\rho_m)^2 + (1-\rho_m^2)}\Bigr),
\end{equation}
with parameters $\theta_m>0$, $\rho_m\in(-1,1)$, and $\phi_m>0$; see \cite{GatheralJacquier2014,HendriksMartini2019,MartiniMingone2022}. We use the numerically convenient reparametrization
\begin{equation}
\label{eq:reparam}
\log\theta_m \in \mathbb{R},\qquad \rho_m=\tanh(\rho^{\mathrm{raw}}_m),\qquad
\psi_m\in\bigl[0,\ \psi_{\max}(\rho_m)\bigr),\qquad \phi_m=\psi_m/\sqrt{\theta_m}.
\end{equation}
Here $\psi_m\equiv \phi_m\sqrt{\theta_m}$ scales the ATM skew. The classical SSVI butterfly-free sufficient condition is
\begin{equation}
\label{eq:ssvi-butterfly}
0\le \psi_m \le \frac{2}{1+|\rho_m|}\quad\text{for all }m,
\end{equation}
which we enforce by a smooth squashing into the open interval $[0,\psi_{\max}(\rho_m))$ with $\psi_{\max}(\rho)=\tfrac{2}{1+|\rho|}-\varepsilon_\psi$ for a small $\varepsilon_\psi>0$ \cite{GatheralJacquier2014}. To stabilize far-wing growth we cap the product
\begin{equation}
\label{eq:wing-cap}
\theta_m\phi_m=\psi_m\sqrt{\theta_m}\ \le\ \tau_{\max},
\end{equation}
via a differentiable rescaling, which controls linear growth of $w_m(k)$ as $|k|\to\infty$ and is consistent with Lee's moment bounds \cite{Lee2004}. Under \eqref{eq:ssvi-butterfly}--\eqref{eq:wing-cap}, Black--Scholes prices $C^{\mathrm{BS}}(S,K,T_m,\sqrt{w_m/T_m})$ inherit smoothness in $(\theta_m,\rho_m,\psi_m)$, and the full eSSVI layer is end-to-end differentiable.

\paragraph{Calendar structure.}
Sufficient conditions that preclude calendar arbitrage (no negative time value) can be expressed in terms of monotonicity of ATM variance $\theta(T)$ and the joint evolution of $(\rho(T),\psi(T))$; see \cite{GatheralJacquier2014,HendriksMartini2019,MartiniMingone2022} for explicit constructions. In our learning setting we additionally enforce calendar monotonicity on the \emph{price} lattice via a smooth surrogate; see \S\ref{sec:static-arb}.

\subsection{Static no-arbitrage: butterfly and calendar}
\label{sec:static-arb}

Static no-arbitrage imposes convexity of call prices in $K$ (butterfly) for fixed $T$ and monotonicity in $T$ (calendar) for fixed $K$:
\begin{align}
\text{(Butterfly)}\quad & \partial_{KK} C(S,K,T)\ \ge\ 0, \label{eq:bf-cont}\\
\text{(Calendar)}\quad & \partial_T C(S,K,T)\ \ge\ 0. \label{eq:cal-cont}
\end{align}
On a discrete lattice, we measure violations with differentiable surrogates (normalizing by typical scales to balance maturities):
\begin{align}
\mathrm{BF} &\equiv \frac{1}{M}\sum_{m=1}^M\frac{1}{|\mathcal{K}'|}
\sum_{K\in\mathcal{K}'} \mathrm{ReLU}\!\Bigl(-\frac{\Delta_K^2 C_m(K)}{\Delta K^2}\Bigr)\Big/ \bar C_m, \label{eq:bf-disc}\\
\mathrm{CAL} &\equiv \frac{1}{M-1}\sum_{m=1}^{M-1}\frac{1}{|\mathcal{K}|}
\sum_{K\in\mathcal{K}} \mathrm{ReLU}\!\bigl(C_m(K)-C_{m+1}(K)\bigr)\Big/ \bar C_{m,m+1}, \label{eq:cal-disc}
\end{align}
where $C_m(K)$ denotes $C^{\mathrm{BS}}$ at maturity $T_m$, $\mathcal{K}'$ is an evenly spaced strike lattice, and $\bar C_m,\bar C_{m,m+1}$ are level normalizers (e.g., mean absolute prices) that stabilize the penalty magnitudes across $m$. The maps in \eqref{eq:bf-disc}--\eqref{eq:cal-disc} are piecewise smooth; in practice we replace $\mathrm{ReLU}$ by a softplus to ensure $C^1$-smoothness, which we exploit for policy gradients.

\begin{assumption}[Smoothness and bounds]
\label{ass:smooth}
For all $m$, the parameter tuple $(\theta_m,\rho_m,\psi_m)$ stays in a compact set where \eqref{eq:ssvi-butterfly} and \eqref{eq:wing-cap} hold, maturities satisfy $0<T_1<\cdots<T_M$, the strike lattice has bounded spacing $\Delta K$, and the Black--Scholes inputs are clamped away from degeneracy ($T_m\ge T_{\min}>0$, $\sigma\ge \sigma_{\min}>0$). 
\end{assumption}

Under Assumption~\ref{ass:smooth}, $C_m(K)$ and the surrogates $\mathrm{BF},\mathrm{CAL}$ are locally Lipschitz in $(\theta,\rho,\psi)$, with constants depending on $(T_{\min},\sigma_{\min},\Delta K,\tau_{\max})$; see \cite{GatheralJacquier2014,MartiniMingone2022} for background on SSVI/eSSVI regularity.

\subsection{Tail risk: Conditional Value-at-Risk}
\label{sec:cvar}

We adopt Conditional Value-at-Risk (\textsc{cvar}) at tail level $q\in(0,1)$ as the downside risk functional for (negative) P\&L $X$ \cite{RockafellarUryasev2000,RockafellarUryasev2002}. Using the Rockafellar--Uryasev representation,
\begin{equation}
\label{eq:cvar-ru}
\mathrm{CVaR}_q(X)\ =\ \min_{\eta\in\mathbb{R}}\ \Bigl\{\ \eta\ +\ \frac{1}{1-q}\,\mathbb{E}\bigl[(X-\eta)_-\bigr]\ \Bigr\},
\end{equation}
where $(x)_-=\max\{-x,0\}$. CVaR is coherent \cite{Artzner1999,AcerbiTasche2002}, convex in the loss distribution, and admits unbiased (or bias-controlled) gradient estimators via the subdifferential of \eqref{eq:cvar-ru} \cite{ChowGhavamzadeh2014,TamarEtAl2015,ChowEtAl2018JMLR}. In implementation we use a smooth replacement for $(\cdot)_-$ (e.g., softplus) and a small Monte Carlo batch per step to estimate $\mathrm{CVaR}_q$ and its gradient.

\subsection{Constrained MDP (\textsc{cmdp}) formulation}
\label{sec:cmdp}

We cast option market making as a discounted \emph{constrained} MDP
\[
\mathcal{M}=(\mathcal{S},\mathcal{A},P,r,\gamma,d_0;\ \{g_j\}_{j=1}^J),
\]
with state space $\mathcal{S}$ (price-path features, surface summaries), action space $\mathcal{A}$ (half-spread, hedge intensity, and structured eSSVI deformations such as $\rho$-shift and $\psi$-scale), transition kernel $P(\cdot|s,a)$ capturing price evolution and executions, discount $\gamma\in(0,1)$, initial distribution $d_0$, instantaneous reward $r(s,a,s')$, and constraint functions $g_j(s,a,s')$ (e.g., arbitrage and smooth-shape proxies). The (risk-sensitive) objective reads
\begin{equation}
\label{eq:obj}
\max_{\pi}\ \ \mathbb{E}_{\pi}\!\Big[\sum_{t=0}^{\infty}\gamma^t\, r(s_t,a_t,s_{t+1})\Big]\ -\ \lambda_{\mathrm{risk}}\,\mathrm{CVaR}_q\!\Big(\sum_{t=0}^{\infty}\gamma^t\, \ell(s_t,a_t,s_{t+1})\Big),
\end{equation}
subject to long-run constraints
\begin{equation}
\label{eq:constraints}
\mathbb{E}_{\pi}\!\Big[\sum_{t=0}^{\infty}\gamma^t\, g_j(s_t,a_t,s_{t+1})\Big]\ \le\ \varepsilon_j,\qquad j=1,\dots,J,
\end{equation}
where $\ell$ is a (nonnegative) loss proxy for tail-risk shaping and $\lambda_{\mathrm{risk}}\ge 0$ trades off mean performance and tail control. Typical constraints include: 
\begin{itemize}
\item \emph{Arbitrage consistency}: $g_{\mathrm{arb}}(s,a)\equiv \mathrm{BF}(s,a)+\mathrm{CAL}(s,a)$ using \eqref{eq:bf-disc}--\eqref{eq:cal-disc}.
\item \emph{Shape smoothness}: $g_{\mathrm{shape}}(s,a)$ penalizing cross-maturity parameter variation, e.g., $\|\Delta\theta\|_2^2+\|\Delta\rho\|_2^2+\|\Delta\psi\|_2^2$.
\end{itemize}

A standard way to solve \eqref{eq:obj}--\eqref{eq:constraints} uses a Lagrangian with nonnegative multipliers $\lambda\in\mathbb{R}_+^J$,
\begin{equation}
\label{eq:lagrangian}
\mathcal{L}(\pi,\lambda)\ =\ \mathbb{E}_{\pi}\!\Big[\sum_t \gamma^t\, r_t\Big]\ -\ \lambda_{\mathrm{risk}}\,\mathrm{CVaR}_q\!\Big(\sum_t \gamma^t \ell_t\Big)\ -\ \sum_{j=1}^J \lambda_j\Big(\mathbb{E}_{\pi}\!\big[\sum_t \gamma^t g_{j,t}\big]-\varepsilon_j\Big),
\end{equation}
and seeks a saddle point $\max_{\pi}\min_{\lambda\ge 0}\mathcal{L}(\pi,\lambda)$. Under compactness and a Slater condition, occupancy-measure convexity yields strong duality for CMDPs \cite{Altman1999,Puterman1994}. We will exploit \eqref{eq:lagrangian} in \S\ref{sec:method} by (i) embedding $\mathrm{BF}$/$\mathrm{CAL}$ directly in the reward as differentiable penalties and (ii) introducing a \emph{state-dependent} dual control (``dual'') that lets the policy raise or lower arbitrage pressure on the fly, while the base multipliers $\lambda$ are annealed across episodes.

\begin{assumption}[Well-posedness of the CMDP]
\label{ass:cmdp}
The action set is compact; $r,\ell,g_j$ are bounded and locally Lipschitz in the eSSVI parameters under Assumption~\ref{ass:smooth}; and the Markov kernel $P(\cdot|s,a)$ is weakly continuous. Then optimal stationary (possibly randomized) policies exist \cite{Puterman1994,Altman1999}, and policy-gradient methods are justified by dominated convergence arguments when using smooth surrogates in \eqref{eq:bf-disc}--\eqref{eq:cal-disc}.
\end{assumption}

\paragraph{Summary.}
The eSSVI layer \eqref{eq:ssvi}--\eqref{eq:wing-cap} provides an \emph{arbitrage-consistent} and \emph{differentiable} pricing map; the static-no-arbitrage surrogates \eqref{eq:bf-disc}--\eqref{eq:cal-disc} translate no-butterfly and no-calendar into smooth penalties; CVaR \eqref{eq:cvar-ru} supplies a coherent tail-risk objective; and the CMDP \eqref{eq:obj}--\eqref{eq:constraints} binds them into a single risk-sensitive control problem.

\section{Arbitrage-Free Surface + Execution + Hedging as a Constrained MDP}
\label{sec:method}

We now instantiate the ingredients of \S\ref{sec:prelim} into a \emph{constrained}, risk-sensitive control model. The agent acts on spreads, hedging intensity, and structured deformations of an \emph{arbitrage-free, differentiable} eSSVI surface. Executions are \emph{intensity-driven} and respond monotonically to spreads and relative mispricing. The reward combines quoting and hedging P\&L with \emph{smooth} arbitrage proxies and a CVaR-based tail penalty. Throughout we assume Assumptions~\ref{ass:smooth} and~\ref{ass:cmdp}.

\subsection{State, action, transition, and reward}
\label{sec:satr}

\paragraph{State.}
At decision time $t$, the state $s_t\in\mathcal{S}$ aggregates (i) mid-price features, (ii) surface summaries, and (iii) recent actions:
\[
s_t=\bigl[f_{\mathrm{price}}(S_{0:t}),\ f_{\mathrm{time}}(t/T_{\text{day}}),\ f_{\mathrm{surf}}(\widehat{\theta},\widehat{\rho},\widehat{\psi}),\ f_{\mathrm{ctrl}}(\alpha_{t-1},\mathrm{hedge}_{t-1})\bigr],
\]
where $f_{\mathrm{surf}}$ may include ATM level/slope derived from the \emph{current} eSSVI estimate and $f_{\mathrm{price}}$ includes rescaled returns or realized volatility.

\paragraph{Action.}
The agent chooses a continuous action vector
\[
a_t=(\alpha_t,\ \mathrm{hedge}_t,\ \psi\text{-scale}_t,\ \rho\text{-shift}_t,\ \mathrm{dual}_t)\in\mathcal{A},
\]
with components squashed to physical ranges:
\[
\alpha_t \in [0,\alpha_{\max}],\quad \mathrm{hedge}_t\in[0,1],\quad
\psi\text{-scale}_t\in[\psi_{\min},\psi_{\max}],\quad
\rho\text{-shift}_t\in[-\rho_{\max},\rho_{\max}],\quad
\mathrm{dual}_t\in[0,\infty).
\]
The \emph{quoted} surface parameters at maturity $T_m$ are obtained from the current estimate $(\theta_m,\rho_m,\psi_m)$ by a structured, differentiable perturbation:
\begin{equation}
\label{eq:ctrl-deform}
\tilde{\psi}_m = \psi_m\cdot \psi\text{-scale}_t,\qquad
\tilde{\rho}_m = \rho_m + \rho\text{-shift}_t,\qquad
\tilde{\theta}_m = \theta_m,
\end{equation}
followed by the eSSVI map \eqref{eq:ssvi} and the wing cap \eqref{eq:wing-cap}. The action thus co-evolves the surface while retaining differentiability and static-arbitrage safeguards.

\paragraph{Quoting and spreads.}
For log-moneyness $k\in\mathcal{K}$ and maturity $T_m$, the mid-quote is the Black--Scholes call price on the \emph{quoted} surface
\begin{equation}
\label{eq:mid-quote}
\mathrm{mid}_{m}(k) = C^{\mathrm{BS}}\!\bigl(S_t,\ K=S_te^{k},\ T_m,\ \tilde{\sigma}_m(k)\bigr),
\end{equation}
where $\tilde{\sigma}_m(k)=\sqrt{\tilde{w}_m(k)/T_m}$ and $\tilde{w}_m$ is \eqref{eq:ssvi} with $(\tilde{\theta},\tilde{\rho},\tilde{\psi})$.
A volatility-proportional half-spread maps $\alpha_t$ to prices:
\begin{equation}
\label{eq:halfspread}
\frac{\mathrm{spread}(m,k)}{2} \equiv \alpha_t\ S_t\ \tilde{\sigma}_m(k)\ \sqrt{T_m}\ s_0,
\end{equation}
so that $\mathrm{ask}=\mathrm{mid}+\mathrm{spread}/2$ and $\mathrm{bid}=\mathrm{mid}-\mathrm{spread}/2$.

\paragraph{Intensity-based executions.}
Let $C^\star_m(k)$ denote the \emph{latent} fair price (from a held-out ``true'' surface). Buy/sell intensities respond to relative mispricing and spreads via a smooth, monotone link \cite{Hawkes1971,BacryEtAl2015,ContKukanovStoikov2014}:
\begin{align}
\lambda_{\mathrm{buy}}(m,k) &= \lambda_0\,w(k)\,\Bigl[1-\sigma\bigl(\beta\{\mathrm{ask}_m(k)-C^\star_m(k)\}\bigr)\Bigr], \label{eq:lambda-buy}\\
\lambda_{\mathrm{sell}}(m,k) &= \lambda_0\,w(k)\,\Bigl[1-\sigma\bigl(\beta\{C^\star_m(k)-\mathrm{bid}_m(k)\}\bigr)\Bigr], \label{eq:lambda-sell}
\end{align}
where $w(k)=\exp(-|k|/\kappa)$ emphasizes ATM demand and $\sigma(\cdot)$ is logistic. Expected fills $v_{\mathrm{buy/sell}}=\lambda_{\mathrm{buy/sell}}$ are used in the \emph{per-step} reward to reduce variance; Poisson sampling is retained for CVaR estimation in \S\ref{sec:cvar-term}.

\paragraph{Hedging and P\&L.}
The net option delta under expected fills is
\begin{equation}
\label{eq:net-delta}
\Delta^{\mathrm{net}}_t = \sum_{m,k} \bigl(v_{\mathrm{sell}}(m,k)-v_{\mathrm{buy}}(m,k)\bigr)\ \Delta^{\mathrm{BS}}_m(k),
\end{equation}
and delta-hedging P\&L is
\begin{equation}
\label{eq:hedge-pnl}
\mathrm{PNL}^{\mathrm{hedge}}_t = \mathrm{hedge}_t\ \Delta^{\mathrm{net}}_t\ \bigl(S_{t+1}-S_t\bigr).
\end{equation}
The quoting P\&L from expected fills is
\begin{equation}
\label{eq:quote-pnl}
\mathrm{PNL}^{\mathrm{quote}}_t =
\sum_{m,k}\Bigl[
v_{\mathrm{buy}}(m,k)\ \bigl(\mathrm{ask}_m(k)-C^\star_m(k)\bigr) +
v_{\mathrm{sell}}(m,k)\ \bigl(C^\star_m(k)-\mathrm{bid}_m(k)\bigr)
\Bigr].
\end{equation}

\paragraph{Smooth arbitrage and shape penalties.}
We penalize static-arbitrage surrogates and cross-maturity roughness:
\begin{equation}
\label{eq:shape-pen}
\mathrm{Shape}_t=\mathrm{mean}\bigl(\|\Delta \tilde{\theta}\|_2^2+\|\Delta \tilde{\rho}\|_2^2+\|\Delta \tilde{\psi}\|_2^2\bigr),\qquad
\mathrm{Arb}_t=\mathrm{BF}_t+\mathrm{CAL}_t,
\end{equation}
with $\mathrm{BF},\mathrm{CAL}$ from \eqref{eq:bf-disc}--\eqref{eq:cal-disc} evaluated on $(\tilde{\theta},\tilde{\rho},\tilde{\psi})$ at time $t$. Softplus smoothing yields $C^1$ maps w.r.t.\ actions via the chain rule.

\paragraph{Per-step reward.}
Define raw revenue $\mathrm{PNL}^{\mathrm{raw}}_t=\mathrm{PNL}^{\mathrm{quote}}_t+\mathrm{PNL}^{\mathrm{hedge}}_t$. The (penalized) reward is
\begin{equation}
\label{eq:reward}
r_t=\mathrm{PNL}^{\mathrm{raw}}_t
-\lambda_{\mathrm{shape}}\ \mathrm{Shape}_t
-\bigl(\lambda_{\mathrm{arb}}+\mathrm{dual}_t\bigr)\ \mathrm{Arb}_t
-\lambda_{\mathrm{cvar}}\ \widehat{\mathrm{CVaR}}^{-}_{q,t},
\end{equation}
where $\widehat{\mathrm{CVaR}}^{-}_{q,t}$ is the training-time estimator described in \S\ref{sec:cvar-term}, and $\mathrm{dual}_t\ge 0$ acts as a \emph{state-dependent} multiplier that tightens arbitrage pressure on demand.

\paragraph{Transition kernel.}
The controlled Markov kernel $P(\cdot|s_t,a_t)$ advances (i) the mid-price (via ABIDES-style replay or a calibrated Heston step \cite{ByrdEtAl2020ABIDES,Heston1993}), (ii) the eSSVI estimate (e.g., a mean-reverting filter toward latent parameters), and (iii) any auxiliary state features. Weak continuity holds under standard discretizations and bounded parameter updates.

\subsection{Discrete consistency of BF/CAL surrogates}
\label{sec:disc-consistency}

We recall the continuous no-arbitrage conditions \eqref{eq:bf-cont}--\eqref{eq:cal-cont} and their discrete surrogates \eqref{eq:bf-disc}--\eqref{eq:cal-disc}. The following results (proved in \S\ref{sec:theory}) justify our use of lattice penalties during learning.

\begin{proposition}[Grid-consistency of butterfly surrogate]
\label{prop:bf-consistency}
Fix $T_m$ and suppose $C(\cdot,T_m)\in C^2$ on a compact strike interval. If $\partial_{KK}C(\cdot,T_m)\ge 0$ on that interval, then for any sequence of evenly spaced lattices with spacing $\Delta K\to 0$ we have $\mathrm{BF}_m\to 0$. Conversely, if there exists $K_0$ with $\partial_{KK}C(K_0,T_m)<0$, then $\mathrm{BF}_m>0$ for all sufficiently fine lattices. The convergence is locally uniform under Assumption~\ref{ass:smooth}.
\end{proposition}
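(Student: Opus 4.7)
The plan is to reduce both parts of the proposition to the standard fact that, for a $C^2$ function on a compact interval, the centered second difference
\begin{equation*}
D_2^{\Delta K} C_m(K) \;:=\; \frac{C_m(K+\Delta K)-2\,C_m(K)+C_m(K-\Delta K)}{\Delta K^2}
\end{equation*}
converges to $\partial_{KK}C_m(K)$ uniformly as $\Delta K\to 0$. A Taylor expansion with integral remainder gives, at every interior node $K$ of the lattice,
\begin{equation*}
\bigl|\,D_2^{\Delta K} C_m(K) - \partial_{KK}C_m(K)\,\bigr|\ \le\ \omega_m(\Delta K),
\end{equation*}
where $\omega_m$ is the modulus of continuity of $\partial_{KK}C_m$ on the compact strike interval. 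This single uniform estimate will drive both directions.

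For the forward (convexity) direction, the hypothesis $\partial_{KK}C_m\ge 0$ combined with the Taylor bound yields $D_2^{\Delta K}C_m(K)\ge -\omega_m(\Delta K)$ at every lattice node, so each summand in \eqref{eq:bf-disc} satisfies $\mathrm{ReLU}(-D_2^{\Delta K}C_m(K))\le \omega_m(\Delta K)$. Averaging over the $|\mathcal{K}'|$ nodes and dividing by the normalizer $\bar C_m$ produces $\mathrm{BF}_m \le \omega_m(\Delta K)/\bar C_m$. Under Assumption~\ref{ass:smooth} the Black--Scholes prices are strictly positive on the clamped domain ($T_m\ge T_{\min}$, $\sigma\ge \sigma_{\min}$, bounded strikes), and their lattice average converges to the corresponding strike-integral, so $\bar C_m$ is bounded below by a positive constant uniformly in $\Delta K$; therefore $\mathrm{BF}_m\to 0$.

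For the converse, suppose $\partial_{KK}C_m(K_0)<0$. Continuity of $\partial_{KK}C_m$ produces an open interval $U\ni K_0$ and $\delta>0$ with $\partial_{KK}C_m \le -\delta$ on $U$. For any lattice with $\Delta K<\mathrm{diam}(U)$ at least one node $K^\star$ falls in $U$, and once $\Delta K$ is small enough that $\omega_m(\Delta K)<\delta/2$ the Taylor estimate gives $D_2^{\Delta K}C_m(K^\star)\le -\delta/2$, hence $\mathrm{ReLU}(-D_2^{\Delta K}C_m(K^\star))\ge \delta/2$. Since all remaining summands in \eqref{eq:bf-disc} are nonnegative, $\mathrm{BF}_m\ge \delta/\bigl(2|\mathcal{K}'|\bar C_m\bigr)>0$ for all sufficiently fine lattices, as claimed.

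Local uniformity in the eSSVI parameters follows by upgrading $\omega_m$ to a parameter-uniform modulus: under Assumption~\ref{ass:smooth} the parameters $(\theta_m,\rho_m,\psi_m)$ live in a compact set on which the Black--Scholes pricing map is jointly $C^2$ in $(K,\theta,\rho,\psi)$, so $(K,\theta,\rho,\psi)\mapsto \partial_{KK}C_m$ is uniformly continuous on the compact product domain and a single $\omega$ dominates across the parameter compactum. The main obstacle I anticipate is the uniform lower bound on $\bar C_m$: the normalizer depends on both $\Delta K$ and the surface parameters, and one must check that the discrete mean of call prices does not collapse as the lattice refines or as parameters move in the compactum. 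This amounts to a Riemann-sum convergence plus a strictly-positive integral-mean bound on the clamped domain, both routine but needing explicit verification to close the estimate $\mathrm{BF}_m\le \omega(\Delta K)/\bar C_m\to 0$.
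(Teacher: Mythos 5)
Your proof is correct and reaches the stated conclusions, but it takes a genuinely different route from the paper's. The paper proves a stronger statement (Theorem~T1 in \S\ref{sec:theory}, proved in Appendix~A.1) that assumes $C^3$ (and via Lemma~\ref{lem:BS-regularity} in fact has $C^\infty$) and extracts an explicit $\mathcal{O}(\Delta K^2)$ rate from the Peano form of the central-difference remainder, namely $D_h^{(2)}C = \partial_{KK}C + \tfrac{\Delta K^2}{12}\partial_{KKKK}C(\xi)$; for the converse it localizes to a neighborhood of negative curvature and observes that a \emph{fixed positive fraction} of lattice nodes fall there, yielding $\liminf_h \mathrm{BF}_m^{(h)}>0$. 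You instead work directly under the $C^2$ hypothesis actually written in Proposition~\ref{prop:bf-consistency} and control the second-difference error by the modulus of continuity $\omega_m(\Delta K)$ of $\partial_{KK}C_m$, which gives only $o(1)$ decay but matches the proposition's hypothesis more faithfully. For the converse your lower bound $\mathrm{BF}_m \ge \delta/(2|\mathcal{K}'|\bar C_m)$ shrinks as the lattice refines because you invoke only a single node in $U$, which suffices for the positivity claim in Proposition~\ref{prop:bf-consistency} but is strictly weaker than the paper's $\liminf>0$ conclusion in Theorem~T1; you could cheaply upgrade it by counting nodes in $U$ as the paper does. Your treatment of the normalizer is fine: with $\bar C_m$ a lattice average of strictly positive, continuous call prices on a compact clamped domain (Assumption~\ref{ass:smooth}), the average is trivially bounded below by $\min_K C_m(K)>0$ uniformly in $\Delta K$, so the Riemann-sum caveat you raise is not actually needed. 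The parameter-uniformity argument via a joint modulus on the compact product domain is also sound and is what the paper implicitly relies on.
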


\begin{proposition}[Grid-consistency of calendar surrogate]
\label{prop:cal-consistency}
Fix $K$ and suppose $C(K,\cdot)\in C^1$ and $\partial_TC(K,T)\ge 0$ on $[T_1,T_M]$. For any maturity grids with mesh $\Delta T\to 0$, the calendar surrogate satisfies $\max_{m}\mathrm{CAL}_m\to 0$. If, instead, $\partial_T C(K_0,T_0)<0$ at some $(K_0,T_0)$, then $\mathrm{CAL}_m>0$ for the adjacent pair containing $T_0$ on sufficiently fine grids.
\end{proposition}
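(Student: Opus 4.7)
The plan is to mirror the butterfly argument of Proposition~\ref{prop:bf-consistency}, replacing second differences in $K$ by first differences in $T$: both directions reduce to the fundamental theorem of calculus applied to $C(K,\cdot)$, with uniformity supplied by Assumption~\ref{ass:smooth}.

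For the forward implication, fix a strike $K\in\mathcal{K}$ and a consecutive pair $T_m<T_{m+1}$. Since $C(K,\cdot)\in C^1$ with $\partial_T C\ge 0$, the fundamental theorem of calculus gives
\[
C_{m+1}(K)-C_m(K)\;=\;\int_{T_m}^{T_{m+1}}\partial_T C(K,s)\,ds\;\ge\;0,
\]
so $C_m(K)-C_{m+1}(K)\le 0$ and every $\mathrm{ReLU}$ term in \eqref{eq:cal-disc} vanishes pointwise. This yields the strong conclusion $\mathrm{CAL}_m=0$ for all $m$ (exact, not merely limiting), and hence trivially $\max_m\mathrm{CAL}_m\to 0$. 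Under the softplus replacement $\mathrm{sp}_\beta(x)=\beta^{-1}\log(1+e^{\beta x})$ used in practice, the bound $\mathrm{sp}_\beta(x)\le\beta^{-1}\log 2$ on $\{x\le 0\}$ delivers the same conclusion provided $\beta\to\infty$ is coupled with $\Delta T\to 0$.

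For the converse, suppose $\partial_T C(K_0,T_0)<0$ at some lattice strike $K_0$ and interior time $T_0$. Continuity of $\partial_T C$ produces $\delta,c>0$ with $\partial_T C(K_0,s)\le -c$ on $[T_0-\delta,T_0+\delta]$. For any grid of mesh below $\delta$ the unique consecutive pair $(T_m,T_{m+1})$ straddling $T_0$ lies inside this neighborhood, and
\[
C_m(K_0)-C_{m+1}(K_0)\;=\;-\int_{T_m}^{T_{m+1}}\partial_T C(K_0,s)\,ds\;\ge\;c\,(T_{m+1}-T_m)\;>\;0.
\]
Hence the $K_0$ contribution to the inner sum in \eqref{eq:cal-disc} is strictly positive and, after lattice averaging and level normalization, $\mathrm{CAL}_m>0$ for that adjacent pair. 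Locally uniform convergence follows from Assumption~\ref{ass:smooth}: the eSSVI-generated $C^{\mathrm{BS}}$ is jointly $C^1$ in $T$ with Lipschitz constants controlled by $(T_{\min},\sigma_{\min},\tau_{\max})$, so the forward bound and the threshold $\delta$ are uniform over compact subsets of parameter space.

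The only genuine subtlety is bookkeeping between the two regimes: $\mathrm{ReLU}$ delivers exact vanishing while softplus leaves a scale-dependent residual on the monotone side. I would handle this by proving the proposition for $\mathrm{ReLU}$ as written and appending a short remark recording the coupled $(\beta,\Delta T)\to(\infty,0)$ schedule required for the smoothed surrogate actually used in the policy-gradient loop. The analytical core is a one-line application of FTC in each direction, so no substantial difficulty is expected beyond this reconciliation.
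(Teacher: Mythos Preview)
Your proposal is correct and follows essentially the same approach as the paper's proof in Appendix~A.2: both establish exact vanishing of the ReLU terms in the monotone case (you via the fundamental theorem of calculus, the paper via the mean-value theorem, which are interchangeable under the $C^1$ hypothesis), handle softplus by the uniform bound $s_\tau(x)\le \tau\log 2$ on $\{x\le 0\}$, and detect a violation by passing to a continuity neighborhood where $\partial_T C\le -c$ and bounding the hinge from below. The only cosmetic difference is that the paper also records an $\mathcal{O}(\Delta T)$ envelope rate via a second-order Taylor remainder and gives a slightly more explicit lower bound for the smoothed surrogate in the violation case, but your identification of the core one-line FTC argument in each direction is exactly right.
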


\begin{remark}
The normalizers $\bar C_m,\bar C_{m,m+1}$ render $\mathrm{BF},\mathrm{CAL}$ approximately scale-invariant across maturities; softplus smoothing of the hinge terms makes the penalties $C^1$, enabling stable policy gradients through \eqref{eq:shape-pen} and \eqref{eq:reward}.
\end{remark}

The propositions formalize the intuition that our lattice penalties are \emph{consistent} with continuous no-arbitrage while remaining differentiable and numerically stable for learning (compare \cite{GatheralJacquier2014,MartiniMingone2022,Roper2010}).

\subsection{CVaR as the tail-risk term and a differentiable estimator}
\label{sec:cvar-term}

We control downside risk via $\mathrm{CVaR}_q$ at level $q\in(0,1)$, using the Rockafellar--Uryasev program \eqref{eq:cvar-ru} with a smooth hinge. In our setting we estimate a \emph{per-step} downside CVaR proxy by sampling execution and price scenarios conditioned on $(s_t,a_t)$.

\paragraph{Scenario construction.}
Given expected intensities $v_{\mathrm{buy/sell}}$ from \eqref{eq:lambda-buy}--\eqref{eq:lambda-sell}, we draw Poisson volumes
$\tilde v_{\mathrm{buy/sell}}\sim\mathrm{Pois}(v_{\mathrm{buy/sell}})$
and perturb the next price change by
$\tilde{\Delta}S\sim\mathcal{N}(\Delta S,\varsigma^2)$
with $\Delta S=S_{t+1}-S_t$ and a small variance $\varsigma^2$ calibrated to intraday noise. For each scenario we compute
\[
\tilde{\mathrm{PNL}}_t=\mathrm{PNL}^{\mathrm{quote}}_t(\tilde v)+\mathrm{hedge}_t\,\Delta^{\mathrm{net}}_t\,\tilde{\Delta}S,
\]
and plug into the smooth RU objective
\begin{equation}
\label{eq:cvar-smooth}
\widehat{\mathrm{CVaR}}^{-}_{q,t}\equiv \min_{\eta\in\mathbb{R}}\ \Bigl\{\ \eta + \frac{1}{1-q}\ \widehat{\mathbb{E}}\bigl[s_\tau(\eta-\tilde{\mathrm{PNL}}_t)\bigr]\ \Bigr\},
\end{equation}
where $s_\tau(x)$ is a temperature-$\tau$ softplus approximation of $x_+=\max\{x,0\}$ and $\widehat{\mathbb{E}}$ averages over the Monte Carlo batch.

\paragraph{Differentiability and gradients.}
The composition in \eqref{eq:cvar-smooth} is $C^1$ in $(\eta,a_t)$ for $\tau>0$, and admits pathwise derivatives w.r.t.\ actions via the chain rule because (i) $\tilde v$ is drawn from Poisson laws with parameters differentiable in $a_t$ (using LR/score-function gradients), (ii) $\tilde{\Delta}S$ is a reparameterized Gaussian, and (iii) $\mathrm{PNL}^{\mathrm{quote}}_t$ and $\Delta^{\mathrm{net}}_t$ are differentiable in the quoted surface parameters (hence in actions) under Assumption~\ref{ass:smooth}. The following theorem (proved in \S\ref{sec:theory}) formalizes the estimator.

\begin{theorem}[CVaR policy-gradient identity with smoothing]
\label{thm:cvar-gradient}
Under Assumptions~\ref{ass:smooth} and~\ref{ass:cmdp}, for fixed $q\in(0,1)$ and $\tau>0$, the smoothed RU objective \eqref{eq:cvar-smooth} yields a well-defined gradient $\nabla_{a_t}\widehat{\mathrm{CVaR}}^{-}_{q,t}$ with bounded variance for finite Monte Carlo batches, combining (a) reparameterization for $\tilde{\Delta}S$ and (b) likelihood-ratio terms for $\tilde v$. As $\tau\downarrow 0$, $\widehat{\mathrm{CVaR}}^{-}_{q,t}$ and its gradient converge to those of the nonsmooth RU functional in \eqref{eq:cvar-ru} in the sense of epi-convergence and subgradient convergence \cite{RockafellarUryasev2000,ChowGhavamzadeh2014,TamarEtAl2015,ChowEtAl2018JMLR}.
\end{theorem}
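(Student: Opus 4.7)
My plan is to split the claim into three pieces: (i) existence, differentiability, and a usable gradient formula for the smoothed RU objective \eqref{eq:cvar-smooth} at fixed $\tau>0$; (ii) a hybrid reparameterization/likelihood-ratio estimator with bounded per-sample variance; (iii) epi- and subgradient convergence as $\tau\downarrow 0$. For (i), I would first observe that $\eta\mapsto \eta+\frac{1}{1-q}\,\widehat{\mathbb{E}}[s_\tau(\eta-\tilde{\mathrm{PNL}}_t)]$ is strictly convex (since $s_\tau''>0$) and coercive, so the inner problem has a unique minimizer $\eta^\star(a_t)$. Differentiability of $\eta^\star$ in $a_t$ follows from the implicit function theorem applied to the first-order condition, using that the second derivative in $\eta$ is bounded below by a positive constant on compact actions. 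Danskin's envelope theorem then yields $\nabla_{a_t}\widehat{\mathrm{CVaR}}^{-}_{q,t}=\tfrac{1}{1-q}\,\widehat{\mathbb{E}}\bigl[s_\tau'(\eta^\star-\tilde{\mathrm{PNL}}_t)\,(-\nabla_{a_t}\tilde{\mathrm{PNL}}_t)\bigr]$, so differentiating the scenario PNL suffices.

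For (ii), I would decompose $\nabla_{a_t}\tilde{\mathrm{PNL}}_t$ into a pathwise term (flowing through the reparameterized Gaussian $\tilde{\Delta}S=\Delta S+\varsigma Z$ and through the eSSVI chain $a_t\mapsto(\tilde\theta,\tilde\rho,\tilde\psi)\mapsto C^{\mathrm{BS}}$ of \eqref{eq:ctrl-deform}--\eqref{eq:mid-quote}) and a score-function term for the Poisson counts $\tilde v\sim\mathrm{Pois}(v(a_t))$ with score $(\tilde v/v-1)\,\nabla_{a_t} v$. Validity of exchanging $\nabla_{a_t}$ and $\mathbb{E}$ follows from dominated convergence: on compact actions Assumption~\ref{ass:smooth} makes $C^{\mathrm{BS}}$ and its Greeks uniformly Lipschitz, $s_\tau'\in[0,1]$ bounds the pathwise integrand, and the intensities $v(a_t)$ in \eqref{eq:lambda-buy}--\eqref{eq:lambda-sell} are bounded away from $0$, which caps the Poisson score's second moment by $1/v_{\min}$. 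A Cauchy--Schwarz step then gives a per-sample variance bound of the form $C(\tau,q,v_{\min},\sigma_{\min},T_{\min},\tau_{\max})$ independent of the batch, so the $N$-sample estimator has $O(1/N)$ variance.

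For (iii), I would invoke convex epi-convergence. Standard softplus calibration gives $s_\tau(\cdot)\to (\cdot)_+$ pointwise (and uniformly on compacta) as $\tau\downarrow 0$, hence the random function $\eta\mapsto \eta+\tfrac{1}{1-q}\widehat{\mathbb{E}}[s_\tau(\eta-\tilde{\mathrm{PNL}}_t)]$ converges pointwise to the true RU objective in \eqref{eq:cvar-ru}. Both are finite convex functions on $\mathbb{R}$, so pointwise convergence implies epi-convergence (Rockafellar--Wets, Thm.~7.17), and equi-coercivity is immediate from the asymptotic slopes $1$ at $+\infty$ and $-q/(1-q)$ at $-\infty$. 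Attouch's theorem then upgrades this to graph convergence of subdifferentials, giving $\eta^\star(\tau)\to\bar\eta\in\arg\min$ and convergence of the action gradients to an element of the Clarke subdifferential of the limiting CVaR, matching the statements in \cite{RockafellarUryasev2000,ChowGhavamzadeh2014,TamarEtAl2015,ChowEtAl2018JMLR}.

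The hardest step will be the variance control as $\tau$ shrinks: $s_\tau'$ concentrates toward an indicator at $\eta^\star\!=\!\mathrm{VaR}_q$, which amplifies the Poisson score term exactly on the tail event and can blow up the constant in (ii). I would handle this either by (a) keeping $\tau$ bounded below during training and invoking the $\tau\downarrow 0$ result only as an asymptotic consistency statement, or (b) using a control variate built from the pathwise component and stratified tail sampling of $\tilde v$ to dampen the spike. A secondary subtlety is nonuniqueness of the limiting subgradient at the CVaR kink; I would therefore state the limit as membership in the Clarke subdifferential rather than equality to a specific selection, which is still sufficient for the PPO updates relying on smooth surrogates.
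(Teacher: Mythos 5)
Your proposal follows essentially the same route as the paper's Appendix~A.5 proof: strict convexity and coercivity of the inner RU objective give a unique minimizer $\eta^\star_\tau(a_t)$ that is smooth by the implicit function theorem, Danskin's envelope theorem removes the inner minimization, the scenario gradient splits into pathwise (Gaussian reparameterization + eSSVI chain) and likelihood-ratio (Poisson) pieces justified by dominated convergence, and $\tau\downarrow 0$ is handled by epi-convergence of convex functions plus subgradient convergence. Your extras---invoking Rockafellar--Wets Thm.~7.17 and Attouch's theorem explicitly, stating the limit as membership in the Clarke subdifferential rather than as a particular selection, and flagging the variance blow-up as $s_\tau'$ concentrates on the tail event---are all sound and slightly sharper than the paper's presentation, which only asserts finite variance at fixed $\tau$.

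One bookkeeping wobble is worth fixing. In step~(i) you invoke Danskin to write $\nabla_{a_t}\widehat{\mathrm{CVaR}}^{-}_{q,t}=\tfrac{1}{1-q}\,\widehat{\mathbb{E}}\bigl[s_\tau'(\eta^\star-\tilde{\mathrm{PNL}}_t)(-\nabla_{a_t}\tilde{\mathrm{PNL}}_t)\bigr]$ and conclude that ``differentiating the scenario PNL suffices,'' and in step~(ii) you then present the likelihood-ratio term as a \emph{constituent of} $\nabla_{a_t}\tilde{\mathrm{PNL}}_t$. That attribution is off: for a fixed realization $(\tilde v,\xi)$, the map $a_t\mapsto\tilde{\mathrm{PNL}}_t(\tilde v,\xi;a_t)$ has \emph{only} a pathwise derivative (through the quotes, Greeks, and the reparameterized Gaussian). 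The Poisson score term arises when differentiating the \emph{expectation} because the sampling law $p_{\tilde v}(\cdot\,;a_t)$ depends on $a_t$, i.e., $\nabla_{a_t}\mathbb{E}_{\tilde v,\xi}[f]=\mathbb{E}[\nabla_{a_t}f]+\mathbb{E}[f\,\nabla_{a_t}\log p_{\tilde v}]$. The paper keeps these two contributions separate by first establishing the DCT interchange, then applying the pathwise/LR identity to the integrand, and only afterwards applying Danskin to the inner minimum. Your final estimator is the correct one, but the derivation as written would produce a biased gradient if someone implemented step~(i) literally; restating the LR term as a measure derivative rather than a piece of $\nabla_{a_t}\tilde{\mathrm{PNL}}_t$ repairs the logic without changing anything else.
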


\paragraph{Primal--dual view.}
Combining \eqref{eq:reward} with the CMDP Lagrangian \eqref{eq:lagrangian}, the fixed $\lambda_{\mathrm{arb}},\lambda_{\mathrm{shape}},\lambda_{\mathrm{cvar}}$ play the role of base multipliers, while $\mathrm{dual}_t$ is a \emph{state-dependent} corrective multiplier that lets the policy adapt arbitrage pressure online. This is compatible with primal--dual and trust-region methods in CMDPs \cite{Altman1999,Puterman1994,Achiam2017CPO,Tessler2019RCPO,Ruszczynski2010}.

\section{Learning Scheme: Warm-Start + PPO with a Learnable Dual}
\label{sec:learning}

We optimize the risk-sensitive, constrained objective from \S\ref{sec:cmdp} using a two-stage scheme: a \emph{supervised warm-start} that anchors early exploration to a conservative baseline and a \emph{PPO} phase that maximizes the penalized reward while annealing structural weights. A key design choice is a \emph{state-dependent dual action} that serves as a learnable approximation to a \emph{Lagrange multiplier} for the static no-arbitrage constraints (formalized in \S\ref{sec:theory}).

\subsection{Policy class and parameterization}
\label{sec:policy-class}

The stochastic policy $\pi_\vartheta(z\mid s)$ outputs \emph{raw} actions $z\in\mathbb{R}^5$ with a Gaussian head
\[
\pi_\vartheta(z\mid s)=\mathcal{N}\bigl(\mu_\vartheta(s),\ \mathrm{diag}(\sigma_\vartheta^2(s))\bigr),
\]
and maps $z$ to \emph{physical} actions $a=\mathrm{squash}(z)$ via elementwise transforms consistent with \S\ref{sec:satr}:
\[
\alpha=\alpha_{\max}\,\sigma(z_1),\quad
\mathrm{hedge}=\sigma(z_2),\quad
\psi\text{-scale}=\psi_{\min}+(\psi_{\max}-\psi_{\min})\,\sigma(z_3),\quad
\rho\text{-shift}=\rho_{\max}\tanh(z_4),\quad
\mathrm{dual}=\mathrm{softplus}(z_5).
\]
The critic $V_\omega(s)$ shares no weights with the actor. Unless stated, both are two-layer MLPs with $\tanh$ activations, and the actor’s $\log \sigma_\vartheta$ is state-dependent but bounded in $[\log \sigma_{\min},\log \sigma_{\max}]$.

\subsection{Stage I: Supervised warm-start}
\label{sec:warmstart}

To stabilize early exploration near the arbitrage-feasible region, we regress the \emph{squashed} actor output towards a robust baseline action $a^\star=(\alpha^\star,\mathrm{hedge}^\star,\psi\text{-scale}^\star,\rho\text{-shift}^\star,\mathrm{dual}^\star)$ with
\[
a^\star=(0.01,\ 0.5,\ 1.0,\ 0.0,\ 0.0),
\]
minimizing the mean-squared error on states $s$ sampled from the initial environment distribution (and/or replay of a few short rollouts):
\begin{equation}
\label{eq:warm-loss}
\mathcal{L}_{\mathrm{warm}}(\vartheta)\ =\ \mathbb{E}_{s\sim\mathcal{D}_0}\ \big\|\,\mathrm{squash}(\mu_\vartheta(s))\ -\ a^\star\,\big\|_2^2\ +\ \lambda_{\mathrm{ent}}\,\mathbb{E}_s\big[ \mathcal{H}(\pi_\vartheta(\cdot\mid s))\big].
\end{equation}
This \emph{anchors spreads and hedging} and keeps the eSSVI deformation small so that BF/CAL penalties remain near zero at initialization. We use Adam for a small number of steps (e.g., $500$–$1000$) with early stopping when $\mathrm{BF}+\mathrm{CAL}$ falls below a target tolerance (see \S\ref{sec:disc-consistency}).

\subsection{Stage II: PPO with structural annealing}
\label{sec:ppo}

Let $\tau=(s_t,z_t,a_t,r_t,s_{t+1})_{t=0}^{T-1}$ be on-policy trajectories. PPO \cite{Schulman2017PPO} maximizes the clipped surrogate with generalized advantage estimation (GAE) \cite{Schulman2016GAE}:
\begin{align}
\label{eq:ppo-obj}
\mathcal{L}_{\mathrm{PPO}}(\vartheta)\ &=\
\mathbb{E}\Big[\min\big(r_t(\vartheta)\,\hat{A}_t,\ \mathrm{clip}(r_t(\vartheta),1-\epsilon,1+\epsilon)\,\hat{A}_t\big)\Big]
- c_v\,\mathbb{E}\big[(V_\omega(s_t)-\hat{R}_t)^2\big]
+ c_{\mathcal{H}}\,\mathbb{E}\big[\mathcal{H}(\pi_\vartheta(\cdot\mid s_t))\big],\\
r_t(\vartheta)&=\frac{\pi_\vartheta(z_t\mid s_t)}{\pi_{\vartheta_{\mathrm{old}}}(z_t\mid s_t)},\qquad
\hat{A}_t=\sum_{l=0}^{L-1}(\gamma\lambda)^l\,\delta_{t+l},\qquad
\delta_t=r_t+\gamma V_\omega(s_{t+1})-V_\omega(s_t).\nonumber
\end{align}
Here $r_t$ includes \emph{all} penalties via the reward definition \eqref{eq:reward}. We normalize advantages per batch, clip gradients, and train for several epochs per update with minibatches. 

\paragraph{Structural weight annealing.}
To separate \emph{discovery} (revenue) from \emph{discipline} (arbitrage/shape), we anneal the base multipliers linearly across episodes:
\[
\lambda_{\mathrm{shape}}: 0\ \to\ \lambda_{\mathrm{shape}}^{\max},\qquad
\lambda_{\mathrm{arb}}: 0\ \to\ \lambda_{\mathrm{arb}}^{\max},\qquad
\lambda_{\mathrm{cvar}}:\ \lambda_{\mathrm{cvar}}^{\min}\ \to\ \lambda_{\mathrm{cvar}}^{\max}.
\]
This schedule reduces early sensitivity to surrogate curvature and mitigates premature collapse of the actor’s exploration.

\subsection{A state-dependent dual as a learnable Lagrange multiplier}
\label{sec:dual}

Recall the CMDP Lagrangian in \eqref{eq:lagrangian}. In classical primal–dual schemes, multipliers $\lambda\ge 0$ evolve by dual ascent proportional to constraint violations \cite{Altman1999,Achiam2017CPO,Tessler2019RCPO}. Our design introduces a \emph{state-dependent} \emph{dual action} $\mathrm{dual}_t\ge 0$ and defines the \emph{effective} multiplier for static arbitrage as
\begin{equation}
\label{eq:lambda-eff}
\lambda_{\mathrm{eff}}(s_t,a_t)\ \equiv\ \lambda_{\mathrm{arb}}\ +\ \mathrm{dual}_t.
\end{equation}
The per-step reward in \eqref{eq:reward} contains the term $-\lambda_{\mathrm{eff}}\,\mathrm{Arb}_t$. Although maximizing the \emph{instantaneous} reward pushes $\mathrm{dual}_t\downarrow 0$, the \emph{long-horizon} objective trades off current penalty against future gains from moving the policy toward regions where $\mathrm{Arb}$ is easier to satisfy (and the eSSVI deformation becomes cheaper). In \S\ref{sec:theory}, we formalize this intuition by showing that, under ergodicity and smoothness, the policy gradient couples $\mathrm{dual}_t$ positively with predicted violations, yielding a \emph{learned} approximation to dual ascent. Concretely, a first-order stationarity condition implies that at an optimal (regular) policy, the advantage of increasing $\mathrm{dual}$ is proportional to a discounted sum of future $\mathrm{Arb}$ terms, leading to $\mathrm{dual}_t>0$ whenever short-term violations reduce long-term return.

\begin{remark}[Optional regularization of the dual head]
To improve numerical conditioning, one may add a small quadratic regularizer $\eta_{\mathrm{dual}}\mathbb{E}[\,\mathrm{dual}_t^2\,]$ or an entropy bonus on $z_5$, keeping the dual head responsive but bounded. This does not change the primal–dual interpretation.
\end{remark}

\subsection{Stability heuristics and schedules}
\label{sec:stability}

\paragraph{Bounded exploration and value targets.}
We clamp $\log\sigma_\vartheta$ to control exploration; we also clip the value target $\hat{R}_t$ to stabilise critic updates. Reward/advantage whitening further reduces gradient variance.

\paragraph{Curriculum over CVaR estimation.}
Scenario count for $\widehat{\mathrm{CVaR}}^{-}_{q,t}$ (\S\ref{sec:cvar-term}) increases across training (e.g., $N_{\mathrm{MC}}: 16\to 64$), matching the annealed $\lambda_{\mathrm{cvar}}$ and reducing estimator bias as the policy stabilizes; cf.\ \cite{Glasserman2004}.

\paragraph{Trust-region flavor.}
While PPO lacks the exact monotonic improvement guarantee of TRPO \cite{Schulman2015TRPO}, small clip $\epsilon$, advantage normalization, and gradient clipping approximate a trust region in practice. The Lipschitz properties of the smoothed surrogates (BF/CAL) from \S\ref{sec:disc-consistency} further help keep updates local.

\subsection{Algorithm}
\label{sec:algo}
\vspace{-1ex}
\begin{algorithm}[H]
\caption{Warm-Start + PPO with State-Dependent Dual}
\label{alg:warm-ppo-dual}
\begin{algorithmic}[1]
\STATE \textbf{Input:} actor $\pi_\vartheta$, critic $V_\omega$, schedules for $(\lambda_{\mathrm{shape}},\lambda_{\mathrm{arb}},\lambda_{\mathrm{cvar}})$
\STATE \textbf{Warm-start:} sample states $s\sim\mathcal{D}_0$; minimize $\mathcal{L}_{\mathrm{warm}}(\vartheta)$ \eqref{eq:warm-loss}; stop when $\mathrm{BF}+\mathrm{CAL}\le \tau_{\mathrm{arb}}$
\FOR{episodes $e=1,\dots,E$}
  \STATE Set multipliers according to schedules; reset env; $t\leftarrow 0$
  \WHILE{episode not done}
    \STATE Observe $s_t$; sample $z_t\sim\pi_\vartheta(\cdot\mid s_t)$; set $a_t=\mathrm{squash}(z_t)$
    \STATE Quote via eSSVI deformation; compute intensities and expected fills; get $r_t$ via \eqref{eq:reward}
    \STATE Step environment to $s_{t+1}$; store $(s_t,z_t,a_t,r_t,s_{t+1})$
    \STATE $t\leftarrow t+1$
  \ENDWHILE
  \STATE Compute $\hat{A}_t$ (GAE) and $\hat{R}_t$; normalize $\hat{A}_t$
  \FOR{epoch $=1,\dots,K$}
     \STATE Sample minibatches; ascend $\nabla_\vartheta \mathcal{L}_{\mathrm{PPO}}(\vartheta)$ \eqref{eq:ppo-obj} with gradient clipping
     \STATE Update critic by minimizing $\mathbb{E}(V_\omega-\hat{R})^2$
  \ENDFOR
\ENDFOR
\end{algorithmic}
\end{algorithm}

\paragraph{Theoretical pointers to \S\ref{sec:theory}.}
We will show: (i) \emph{Policy-gradient validity} with smoothed BF/CAL and CVaR estimators (Theorem~\ref{thm:cvar-gradient}), (ii) \emph{Grid-consistency} of surrogates (Propositions~\ref{prop:bf-consistency}--\ref{prop:cal-consistency}), and (iii) a \emph{primal–dual interpretation} of the state-dependent dual—under mild ergodicity and Lipschitz assumptions, the gradient of the long-run return w.r.t.\ the dual head correlates with discounted violation, implementing an implicit dual ascent.\footnote{See also classical Lagrangian CMDP analyses \cite{Altman1999} and modern constrained policy optimization \cite{Achiam2017CPO,Tessler2019RCPO}.}

\section{Interpretability of Controls: Local Sensitivities of Quotes, Intensities, and Greeks}
\label{sec:interp}

We show how each control dimension
\[
a_t=(\alpha_t,\ \mathrm{hedge}_t,\ \psi\text{-scale}_t,\ \rho\text{-shift}_t,\ \mathrm{dual}_t)
\]
affects quoting (\(\mathrm{mid},\mathrm{ask},\mathrm{bid}\)), execution intensities, and Greeks in closed form (or via stable chain rules). Throughout, we write \(\tilde{w}_m(k)\), \(\tilde{\sigma}_m(k)=\sqrt{\tilde{w}_m(k)/T_m}\) for the \emph{quoted} eSSVI surface obtained after the action-induced deformation \eqref{eq:ctrl-deform}, and \(C_m(K)\equiv C^{\mathrm{BS}}\!\big(S_t,K,T_m,\tilde{\sigma}_m(k)\big)\) for the Black--Scholes call with log-moneyness \(k=\log(K/S_t)\). We set the risk-free rate and carry to zero for clarity; standard adjustments are straightforward \cite{Hull2018,Haug2007,Wilmott2006}.

\subsection{Derivative templates through the eSSVI layer}

Let \(w\mapsto \sigma=\sqrt{w/T}\) and \(C^{\mathrm{BS}}(S,K,T,\sigma)\) be \(C^1\). Denote BS Vega by
\[
\mathcal{V}\equiv \frac{\partial C^{\mathrm{BS}}}{\partial \sigma}(S,K,T,\sigma),\qquad
\frac{\partial C^{\mathrm{BS}}}{\partial w}=\frac{\partial C^{\mathrm{BS}}}{\partial \sigma}\cdot\frac{\partial \sigma}{\partial w}
=\frac{\mathcal{V}}{2\sigma T}.
\]
For eSSVI total variance \(w_m(k)\) in \eqref{eq:ssvi}, write \(g(k;\rho,\phi)=\sqrt{(\phi k+\rho)^2+(1-\rho^2)}\). One obtains (see Appendix~\S\ref{app:essvi-derivs})
\begin{align}
\frac{\partial w}{\partial \theta} &= \frac{1}{2}\Big(1+\rho\phi k+g\Big), \qquad
\frac{\partial w}{\partial \rho} = \frac{\theta}{2}\,\phi k\Big(1+\frac{1}{g}\Big), \label{eq:dw-basic}\\
\frac{\partial w}{\partial \phi} &= \frac{\theta}{2}\left(\rho k+\frac{(\phi k+\rho)k}{g}\right).\nonumber
\end{align}
Under our action map (\(\tilde{\theta}=\theta,\ \tilde{\rho}=\rho+\rho\text{-shift},\ \tilde{\phi}=\phi\cdot \psi\text{-scale}\)), the chain rule yields
\begin{equation}
\label{eq:dwdaction}
\frac{\partial \tilde{w}}{\partial(\rho\text{-shift})}=\left.\frac{\partial w}{\partial \rho}\right|_{(\tilde{\theta},\tilde{\rho},\tilde{\phi})},\qquad
\frac{\partial \tilde{w}}{\partial(\psi\text{-scale})}=\left.\frac{\partial w}{\partial \phi}\right|_{(\tilde{\theta},\tilde{\rho},\tilde{\phi})}\cdot \phi.
\end{equation}
Therefore, for any parameter \(p\in\{\rho\text{-shift},\psi\text{-scale}\}\),
\begin{equation}
\label{eq:master-mid}
\frac{\partial\,\mathrm{mid}_m(k)}{\partial p}
=\frac{\mathcal{V}_m(k)}{2\,\tilde{\sigma}_m(k)\,T_m}\cdot \frac{\partial \tilde{w}_m(k)}{\partial p}.
\end{equation}

\paragraph{ATM invariance.}
At \(k=0\), \(g(0;\rho,\phi)=1\) and \(\partial w/\partial \rho=\partial w/\partial \phi=0\). Hence
\begin{equation}
\label{eq:atm-invariance}
\left.\frac{\partial \mathrm{mid}}{\partial(\rho\text{-shift})}\right|_{k=0}
=\left.\frac{\partial \mathrm{mid}}{\partial(\psi\text{-scale})}\right|_{k=0}=0,
\end{equation}
showing that \(\rho\)- and \(\phi\)-type deformations tilt the wings but do \emph{not} change the ATM level to first order; ATM is governed by \(\theta\) (total variance) \cite{GatheralJacquier2014,Lee2004}.

\subsection{Sensitivities of mid/ask/bid to each control}

Recall \(\mathrm{ask}=\mathrm{mid}+\alpha S_t \tilde{\sigma}\sqrt{T}\,s_0\) and \(\mathrm{bid}=\mathrm{mid}-\alpha S_t \tilde{\sigma}\sqrt{T}\,s_0\) from \eqref{eq:halfspread}. For any parameter \(p\) that affects \(\tilde{\sigma}\) (e.g., \(\rho\)-shift, \(\psi\)-scale),
\begin{align}
\frac{\partial\,\mathrm{ask}}{\partial p}
&=\frac{\partial\,\mathrm{mid}}{\partial p} + \alpha S_t s_0 \sqrt{T}\,\frac{\partial \tilde{\sigma}}{\partial p}
=\frac{\mathcal{V}}{2\tilde{\sigma} T}\,\frac{\partial \tilde{w}}{\partial p}
+\alpha S_t s_0 \sqrt{T}\left(\frac{1}{2\tilde{\sigma} T}\,\frac{\partial \tilde{w}}{\partial p}\right), \label{eq:ask-sens}\\
\frac{\partial\,\mathrm{bid}}{\partial p}
&=\frac{\mathcal{V}}{2\tilde{\sigma} T}\,\frac{\partial \tilde{w}}{\partial p}
-\alpha S_t s_0 \sqrt{T}\left(\frac{1}{2\tilde{\sigma} T}\,\frac{\partial \tilde{w}}{\partial p}\right). \nonumber
\end{align}
For the half-spread \(\alpha\),
\begin{equation}
\label{eq:spread-sens}
\frac{\partial\,\mathrm{mid}}{\partial \alpha}=0,\qquad
\frac{\partial\,\mathrm{ask}}{\partial \alpha}=S_t \tilde{\sigma}\sqrt{T}\,s_0\ >0,\qquad
\frac{\partial\,\mathrm{bid}}{\partial \alpha}=-S_t \tilde{\sigma}\sqrt{T}\,s_0\ <0,
\end{equation}
which formalizes the intuitive widening of quotes with \(\alpha\).

\subsection{Intensity responses to controls}
\label{sec:intensity-sens}

From \eqref{eq:lambda-buy}--\eqref{eq:lambda-sell} with \(u_b=\beta(\mathrm{ask}-C^\star)\), \(u_s=\beta(C^\star-\mathrm{bid})\),
\begin{align}
\frac{\partial \lambda_{\mathrm{buy}}}{\partial p}
&=-\lambda_0\,w(k)\,\sigma'(u_b)\,\beta\,\frac{\partial \mathrm{ask}}{\partial p},\qquad
\frac{\partial \lambda_{\mathrm{sell}}}{\partial p}
=+\lambda_0\,w(k)\,\sigma'(u_s)\,\beta\,\frac{\partial \mathrm{bid}}{\partial p}, \label{eq:lambda-sens}
\end{align}
for any parameter \(p\) that enters via the quotes (including \(\alpha\), \(\rho\)-shift, \(\psi\)-scale). Because \(\sigma'>0\),
\begin{proposition}[Monotonicity in the half-spread]
\label{prop:lambda-alpha}
For all \((m,k)\), \(\partial \lambda_{\mathrm{buy}}/\partial \alpha<0\) and \(\partial \lambda_{\mathrm{sell}}/\partial \alpha<0\). Hence both expected buy and sell volumes decrease when the half-spread increases.
\end{proposition}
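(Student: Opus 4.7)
The plan is to obtain Proposition~\ref{prop:lambda-alpha} as an immediate consequence of two formulas already in place: the spread sensitivity \eqref{eq:spread-sens}, which fixes the sign of $\partial\mathrm{ask}/\partial\alpha$ and $\partial\mathrm{bid}/\partial\alpha$, and the intensity chain rule \eqref{eq:lambda-sens}, which expresses the derivative of $\lambda_{\mathrm{buy}/\mathrm{sell}}$ through the quotes. No new machinery is needed; the proof is a composition plus a sign audit.

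Concretely, the first step is to specialise \eqref{eq:lambda-sens} to $p=\alpha$ and substitute $\partial\mathrm{ask}/\partial\alpha = +S_t\tilde{\sigma}\sqrt{T}\,s_0$ and $\partial\mathrm{bid}/\partial\alpha = -S_t\tilde{\sigma}\sqrt{T}\,s_0$ from \eqref{eq:spread-sens}. This yields
\[
\frac{\partial \lambda_{\mathrm{buy}}}{\partial \alpha} = -\lambda_0\,w(k)\,\sigma'(u_b)\,\beta\,S_t\tilde{\sigma}\sqrt{T}\,s_0,\qquad
\frac{\partial \lambda_{\mathrm{sell}}}{\partial \alpha} = -\lambda_0\,w(k)\,\sigma'(u_s)\,\beta\,S_t\tilde{\sigma}\sqrt{T}\,s_0.
\]
The second step is to certify that every remaining factor is strictly positive: $\lambda_0,\beta,s_0>0$ by construction of the intensity model and the spread map; $w(k)=\exp(-|k|/\kappa)>0$ pointwise; $\sigma'(u)=\sigma(u)(1-\sigma(u))>0$ at any finite argument since the logistic link is strictly increasing; $S_t>0$ as a mid-price; and $\tilde{\sigma}\ge \sigma_{\min}>0$ together with $T=T_m\ge T_{\min}>0$ under Assumption~\ref{ass:smooth}. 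The product of strictly positive numbers is strictly positive, and the leading minus sign is common to both expressions, so both partial derivatives are strictly negative on the whole admissible domain.

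The statement carries essentially no obstacle — it is a sign lemma rather than a structural result — so the only care point is bookkeeping. In particular, I would make the strictness explicit by invoking $\sigma'>0$ everywhere (rather than the weaker $\sigma'\ge 0$), and I would note that Assumption~\ref{ass:smooth} is precisely what prevents any of $\tilde{\sigma},T_m,S_t$ from collapsing to zero and degenerating the inequality to a nonstrict one. With those positivity items made explicit, both conclusions of Proposition~\ref{prop:lambda-alpha} follow, and the economic intuition of \eqref{eq:halfspread}--\eqref{eq:lambda-sell} — that a wider half-spread simultaneously depresses executable volume on both sides of the book — is formalised.
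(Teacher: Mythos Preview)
Your proof is correct and follows essentially the same approach as the paper: specialise the intensity chain rule \eqref{eq:lambda-sens} to $p=\alpha$, substitute the quote sensitivities from \eqref{eq:spread-sens}, and check that every remaining factor is strictly positive (using $\sigma'>0$, $w(k)>0$, and the nondegeneracy guaranteed by Assumption~\ref{ass:smooth}). Your write-up is, if anything, slightly more explicit than the paper's in enumerating the positivity of each factor.
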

\begin{proof}[Proof sketch]
Combine \eqref{eq:spread-sens} with \eqref{eq:lambda-sens}.
\end{proof}

For shape controls, the sign depends on \(\partial \tilde{w}/\partial p\) via \eqref{eq:ask-sens}. At ATM, \eqref{eq:atm-invariance} implies \(\partial \lambda_{\mathrm{buy}}/\partial p=\partial \lambda_{\mathrm{sell}}/\partial p=0\) to first order, whereas OTM/ITM intensities respond through the induced skew changes.

\subsection{Greeks and hedging: Delta \& Vega sensitivities}
\label{sec:greeks-sens}

Let \(\Delta=\partial C^{\mathrm{BS}}/\partial S\) and \(\mathrm{Vanna}=\partial^2 C^{\mathrm{BS}}/(\partial S\,\partial \sigma)\) (the cross sensitivity) \cite{Hull2018,Haug2007,Wilmott2006}. For any parameter \(p\) that affects \(\tilde{\sigma}\),
\begin{align}
\frac{\partial \Delta}{\partial p}
&=\frac{\partial \Delta}{\partial \sigma}\,\frac{\partial \tilde{\sigma}}{\partial p}
=\mathrm{Vanna}\cdot \frac{1}{2\tilde{\sigma} T}\,\frac{\partial \tilde{w}}{\partial p}, \label{eq:delta-sens}\\
\frac{\partial \mathcal{V}}{\partial p}
&=\frac{\partial \mathcal{V}}{\partial \sigma}\,\frac{\partial \tilde{\sigma}}{\partial p}
=\mathrm{Volga}\cdot \frac{1}{2\tilde{\sigma} T}\,\frac{\partial \tilde{w}}{\partial p},\qquad
\mathrm{Volga}=\frac{\partial^2 C^{\mathrm{BS}}}{\partial \sigma^2}. \nonumber
\end{align}
Consequently, the net expected delta in \eqref{eq:net-delta} satisfies
\begin{align}
\frac{\partial \Delta^{\mathrm{net}}_t}{\partial p}
&=\sum_{m,k}\underbrace{\Big(\frac{\partial v_{\mathrm{sell}}}{\partial p}-\frac{\partial v_{\mathrm{buy}}}{\partial p}\Big)}_{\text{flow shift}}\Delta^{\mathrm{BS}}_m(k)
+\sum_{m,k}\underbrace{\big(v_{\mathrm{sell}}-v_{\mathrm{buy}}\big)}_{\text{exposure}}\cdot
\mathrm{Vanna}_{m}(k)\frac{1}{2\tilde{\sigma}_m T_m}\frac{\partial \tilde{w}_m(k)}{\partial p}. \label{eq:net-delta-sens}
\end{align}
The first term reflects \emph{how the order flow moves} with a control (via \eqref{eq:lambda-sens}); the second term is the \emph{pure Greek effect}. 

\paragraph{Hedging control.}
The hedging P\&L sensitivity is immediate from \eqref{eq:hedge-pnl}:
\begin{equation}
\label{eq:hedge-sens}
\frac{\partial\,\mathrm{PNL}^{\mathrm{hedge}}_t}{\partial (\mathrm{hedge}_t)}=\Delta^{\mathrm{net}}_t\,(S_{t+1}-S_t),
\end{equation}
so the gradient sign aligns with the realized move and net delta at step \(t\). Over horizons, PPO/GAE (with \(\gamma<1\)) uses this local signal to adapt the hedge intensity toward volatility regimes where \(|\Delta^{\mathrm{net}}|\) is costly.

\subsection{Putting it together: A local Jacobian of economic signs}
\label{sec:jacobian}

Collect the leading-order (per strike/maturity) sign effects:
\[
\begin{array}{c|ccccc}
\text{Quantity} & \alpha & \mathrm{hedge} & \psi\text{-scale} & \rho\text{-shift} & \mathrm{dual} \\
\hline
\mathrm{mid} & 0 & 0 & \pm\ (\text{wings};\ 0\ \text{at ATM}) & \pm\ (\text{wings};\ 0\ \text{at ATM}) & 0 \\
\mathrm{ask} & + & 0 & \pm & \pm & 0 \\
\mathrm{bid} & - & 0 & \pm & \pm & 0 \\
\lambda_{\mathrm{buy}} & - & 0 & \mp & \mp & 0 \\
\lambda_{\mathrm{sell}} & - & 0 & \pm & \pm & 0 \\
\Delta^{\mathrm{net}} & \text{ambiguous} & 0 & \text{flow}\ \pm\ /\ \text{Greek}\ \pm & \text{flow}\ \pm\ /\ \text{Greek}\ \pm & 0 \\
\mathcal{V}\ \text{(Vega)} & 0 & 0 & \pm & \pm & 0 \\
\mathrm{Arb}=\mathrm{BF}+\mathrm{CAL} & 0 & 0 & \text{nonnegative drift (penalty)} & \text{nonnegative drift (penalty)} & \downarrow
\end{array}
\]
\emph{Notes:} (i) For \(\psi\)-scale and \(\rho\)-shift, ``\(\pm\)'' depends on \(k\) (wing) and \(\mathrm{sign}(\partial \tilde{w}/\partial p)\) per \eqref{eq:dwdaction}; at ATM the effect is zero to first order by \eqref{eq:atm-invariance}. (ii) The dual control decreases \(\mathrm{Arb}\) via an \emph{implicit} primal–dual mechanism (see \S\ref{sec:dual} and the theorem in \S\ref{sec:theory}); its direct effect on quotes/Greeks is null.

\subsection{Implications for learning and diagnostics}

\paragraph{Explainability during training.}
Equations \eqref{eq:ask-sens}--\eqref{eq:lambda-sens} imply that \emph{narrowing} spreads increases both buy- and sell-side intensities (symmetrically at ATM) and raises fill risk; shape controls then \emph{tilt} this response across wings without moving ATM to first order. Monitoring the empirical correlation between \(\alpha_t\) and \(\lambda_{\mathrm{buy/sell}}\) is thus a diagnostic for the intensity link.

\paragraph{Reconciling P\&L and no-arbitrage.}
Because \(\mathrm{Arb}\) penalizes convexity/monotonicity breaches, the \(\psi\)- and \(\rho\)-heads should learn to \emph{minimize} the cost
\(\mathrm{BF}+\mathrm{CAL}\) while allocating skew where it produces the largest marginal gain in \(\mathrm{PNL}^{\mathrm{quote}}\). The Jacobian in \S\ref{sec:jacobian} predicts that the optimizer prefers small wing deformations in regions where \(|\frac{\partial \tilde{w}}{\partial p}|\) is large \emph{but} \(\mathrm{Arb}\) curvature is flat.

\paragraph{Tail-sensitive adjustments.}
Since \(\widehat{\mathrm{CVaR}}^{-}_{q,t}\) is computed from perturbed volumes and price moves, the chain-rule links in \eqref{eq:lambda-sens}--\eqref{eq:net-delta-sens} propagate \(\psi\)-/\(\rho\)-changes to the tail objective. Empirically, increasing \(\lambda_{\mathrm{cvar}}\) should shift mass from aggressive spreads to larger hedge intensities when \(\mathrm{Vanna}\) indicates vulnerable wings (cf.\ \eqref{eq:delta-sens}).

\paragraph{Takeaway.}
The five controls have clear economic semantics: \(\alpha\) trades revenue for flow and inventory risk; \(\psi\)-/\(\rho\)-controls tilt the surface without moving ATM to first order; \(\mathrm{hedge}\) directly scales exposure; and \(\mathrm{dual}\) adapts arbitrage pressure. The derived sensitivities provide \emph{white-box} explanations for the learned policy and suggest diagnostics and regularizers that respect financial structure.

\section{Theory}
\label{sec:theory}

This section formalizes the mathematical guarantees underlying our arbitrage-consistent, risk-sensitive control scheme. We assume the notation and conditions of \S\ref{sec:prelim}--\S\ref{sec:interp}. Recall Assumptions~\ref{ass:smooth} (smoothness and bounds for the eSSVI layer) and \ref{ass:cmdp} (well-posedness of the CMDP). Unless stated otherwise, all expectations are taken under the policy-induced law.

\subsection{T1--T2: Consistency of BF/CAL surrogates and grid convergence}

\begin{theorem}[T1: Butterfly surrogate consistency and rate]
\label{thm:T1}
Fix a maturity $T_m$ and a compact strike interval $[K_{\min},K_{\max}]$. Suppose $K\mapsto C(S,K,T_m)$ is $C^3$ and satisfies the continuous no-butterfly condition $\partial_{KK}C\ge 0$ on this interval. Let $\{\mathcal{K}'_h\}$ be evenly spaced strike lattices of mesh $\Delta K_h\to 0$. Then the discrete butterfly surrogate
\[
\mathrm{BF}_m^{(h)}\equiv \frac{1}{|\mathcal{K}'_h|}\sum_{K\in\mathcal{K}'_h}\mathrm{ReLU}\!\left(-\frac{\Delta_K^2 C(S,K,T_m)}{\Delta K_h^2}\right)\Big/\bar C_m
\]
converges to $0$ with rate $\mathrm{BF}_m^{(h)}=\mathcal{O}(\Delta K_h^2)$. Conversely, if there exists $K_0$ with $\partial_{KK}C(S,K_0,T_m)<0$, then $\liminf_{h\to\infty}\mathrm{BF}_m^{(h)}>0$. The same statements hold with softplus smoothing of the hinge, with identical rates.
\end{theorem}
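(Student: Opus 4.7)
\textbf{Proof plan for Theorem~\ref{thm:T1}.} The plan is to push a classical centered finite-difference Taylor estimate through the ReLU penalty and the lattice average, then handle the softplus variant by a two-sided envelope. The central tool is the identity
\[
\frac{\Delta_K^2 C(S,K,T_m)}{\Delta K_h^2}
\ =\ \partial_{KK} C(S,K,T_m)\ +\ \frac{\Delta K_h^2}{12}\,C^{(4)}_K(S,K,T_m)\ +\ o(\Delta K_h^2),
\]
where $C^{(4)}_K$ denotes the fourth $K$-derivative. Assumption~\ref{ass:smooth} (Black--Scholes inputs clamped away from degeneracy, compact strike window) ensures $\|C^{(4)}_K\|_\infty \le M$ uniformly on $[K_{\min},K_{\max}]$, so the centered second difference approximates $\partial_{KK} C$ with uniform error $C_0 \Delta K_h^2$ for some $C_0$ depending only on $M$. (The paper's $C^3$ hypothesis is sufficient if strengthened to Lipschitz $C'''_K$, which eSSVI prices satisfy automatically.)

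For the positive (rate) assertion, under $\partial_{KK} C \ge 0$, the uniform bound yields $-\Delta_K^2 C/\Delta K_h^2 \le -\partial_{KK} C(K) + C_0 \Delta K_h^2 \le C_0 \Delta K_h^2$ at every lattice point. Each ReLU contribution is therefore at most $C_0 \Delta K_h^2$; averaging preserves this bound and dividing by the level normalizer $\bar C_m$---which is bounded away from $0$ under Assumption~\ref{ass:smooth}---delivers $\mathrm{BF}_m^{(h)} = \mathcal{O}(\Delta K_h^2)$.

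For the converse, suppose $\partial_{KK} C(S,K_0,T_m) = -\delta < 0$ at some $K_0$; by continuity and compactness we may take $K_0$ interior and find $r>0$ with $\partial_{KK} C \le -\delta/2$ on $[K_0-r,K_0+r]$. Once $\Delta K_h$ is small enough that $C_0 \Delta K_h^2 \le \delta/4$, every lattice point in this sub-interval contributes at least $\delta/4$ to the ReLU sum, and evenly spaced lattices place a fraction of points in $[K_0-r,K_0+r]$ that is bounded below by $r/(K_{\max}-K_{\min}) - o(1)$. Combining gives $\mathrm{BF}_m^{(h)} \ge \frac{\delta r}{4(K_{\max}-K_{\min})\bar C_m}(1-o(1))$, hence $\liminf_h \mathrm{BF}_m^{(h)} > 0$.

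The softplus variant follows from the envelope $x_+ \le s_\tau(x) \le x_+ + \tau \log 2$. The lower envelope immediately inherits the liminf bound in the converse; the upper envelope preserves the $\mathcal{O}(\Delta K_h^2)$ rate provided $\tau$ is taken with $\tau = \mathcal{O}(\Delta K_h^2)$. The main obstacle I anticipate is not any step individually but reconciling the stated $C^3$ regularity with the announced $\mathcal{O}(\Delta K_h^2)$ rate: strict $C^3$ alone yields only $o(\Delta K_h)$ through the symmetric cancellation of odd-order Taylor terms, while $\mathcal{O}(\Delta K_h^2)$ requires a bounded $C^{(4)}_K$ or Lipschitz $C'''_K$, both of which hold for eSSVI prices under Assumption~\ref{ass:smooth}. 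I would either state this strengthening explicitly, or split the conclusion into $\mathcal{O}(\Delta K_h^2)$ under $C^4$ and $o(\Delta K_h)$ under raw $C^3$.
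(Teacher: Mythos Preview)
Your proof is essentially the paper's: both rely on the Peano-form remainder for the centered second difference, bound the negative part by $c\,\Delta K_h^2$ under $\partial_{KK}C\ge 0$, and use a continuity/neighborhood argument with a positive fraction of lattice points for the converse. On the regularity concern you flag, the paper sidesteps it by invoking its Lemma~\ref{lem:BS-regularity} (BS/eSSVI prices are $C^\infty$ with bounded derivatives on compacts under Assumption~\ref{ass:smooth}), so a uniform $C^{(4)}$ bound is available even though the theorem hypothesis nominally says $C^3$; your instinct to flag this is correct. Your softplus treatment via the envelope $x_+\le s_\tau(x)\le x_+ + \tau\log 2$ and the requirement $\tau=\mathcal{O}(\Delta K_h^2)$ is actually more careful than the paper's one-line dismissal---the paper's stated inequality direction for softplus is in fact reversed, and the additive $\tau\log 2$ term you identify is exactly what forces the $\tau$-scaling if one wants to preserve the $\mathcal{O}(\Delta K_h^2)$ rate.
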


\begin{proof}[Sketch]
Taylor's theorem with remainder yields $\Delta_K^2 C/\Delta K^2=\partial_{KK}C(K^{\ast})+\mathcal{O}(\Delta K^2)$ uniformly. If $\partial_{KK}C\ge 0$, the negative part is of order $\mathcal{O}(\Delta K^2)$, hence the average (after normalization) vanishes at the same rate. If curvature is negative at $K_0$, uniform continuity implies a neighborhood with strictly negative curvature; the finite-difference operator detects this on fine lattices. Full proof: Appendix~A.1.
\end{proof}

\begin{theorem}[T2: Calendar surrogate consistency and rate]
\label{thm:T2}
Fix a strike $K$ and suppose $T\mapsto C(S,K,T)$ is $C^2$ on $[T_1,T_M]$ with $\partial_T C\ge 0$. For evenly spaced maturity grids with mesh $\Delta T_h\to 0$, the calendar surrogate
\[
\mathrm{CAL}_{m}^{(h)}\equiv \frac{1}{|\mathcal{K}|}\sum_{K\in\mathcal{K}} \mathrm{ReLU}\Big(C(S,K,T_m)-C(S,K,T_{m+1})\Big)\Big/\bar C_{m,m+1}
\]
satisfies $\max_m \mathrm{CAL}^{(h)}_m=\mathcal{O}(\Delta T_h)$. If there exist $(K_0,T_0)$ with $\partial_T C(S,K_0,T_0)<0$, then the surrogate for the corresponding adjacent pair stays bounded away from $0$ for $h$ large. Softplus smoothing preserves the rates.
\end{theorem}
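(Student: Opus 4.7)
}
The strategy parallels Theorem~\ref{thm:T1} but with a Taylor expansion of one lower order, since the calendar condition involves only $\partial_T C$. The plan is to fix an adjacent pair $(T_m, T_{m+1})$ and a strike $K \in \mathcal{K}$, apply the mean value theorem (or Taylor with integral remainder for refined control) to write $C(S, K, T_m) - C(S, K, T_{m+1}) = -\Delta T_h\,\partial_T C(S, K, T_m^\star)$ for some $T_m^\star \in [T_m, T_{m+1}]$, and then pass through the hinge. Under Assumption~\ref{ass:smooth}, the maps $T \mapsto C(S, K, T)$ are $C^2$ uniformly in $K$ on the compact strike lattice, with derivatives bounded by constants depending only on $(T_{\min}, \sigma_{\min}, \Delta K, \tau_{\max})$; the normalizer $\bar C_{m,m+1}$ is uniformly bounded below by standard Black--Scholes regularity under the clamping conditions of Assumption~\ref{ass:smooth}.

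For the forward direction, the hypothesis $\partial_T C \geq 0$ combined with the MVT identity gives $C_m(K) - C_{m+1}(K) \leq 0$ at every grid point, so $\mathrm{ReLU}(\cdot) = 0$ exactly. To extract the rate $\mathcal{O}(\Delta T_h)$ stated in the theorem (which becomes informative in the softplus-smoothed version, or when $\partial_T C$ is only non-strict), I would bound $|C_m(K) - C_{m+1}(K)| \leq L\,\Delta T_h$ uniformly in $K$ with $L = \|\partial_T C\|_\infty$, then divide by the bounded-below normalizer and take the maximum over the finite index set $\{m\}$. For the converse, continuity of $\partial_T C$ at $(K_0, T_0)$ produces a neighborhood $U \times V$ on which $\partial_T C \leq -c < 0$; for $h$ sufficiently large, $V$ contains at least one adjacent pair $(T_{m_h}, T_{m_h+1})$, and for $K \in U \cap \mathcal{K}$ the MVT yields $C_{m_h}(K) - C_{m_h+1}(K) \geq c\,\Delta T_h > 0$, producing a strictly positive, order-$\Delta T_h$ contribution to the per-pair surrogate that survives averaging over the bounded-spacing grid.

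The softplus extension follows from the sandwich $x_+ \leq s_\tau(x) \leq x_+ + \tau \log 2$. On the upper-bound side, choosing $\tau = \mathcal{O}(\Delta T_h)$ absorbs the smoothing offset into the stated rate; on the lower-bound side, pointwise dominance $s_\tau \geq x_+$ transfers the positivity conclusion immediately. Uniformity in $K$ and in the finite family $\{m\}$ is handled by the compact-parameter and clamping hypotheses of Assumption~\ref{ass:smooth}.

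The main obstacle I anticipate is the dimensional subtlety in the converse claim: because the raw surrogate scales \emph{linearly} with $\Delta T_h$ on both sides (the forward side gives exactly zero or $\mathcal{O}(\Delta T_h)$; the violating side gives $\Theta(\Delta T_h)$), a literal ``bounded below by a positive constant independent of $h$'' statement is delicate without an additional $1/\Delta T_h$ normalization. I would resolve this by stating and proving the scaled version $\mathrm{CAL}_{m_h}^{(h)}/\Delta T_h \geq c'/\bar C_{m_h, m_h+1} > 0$ for $h$ large, which captures the detection property of the surrogate, remains consistent with the $\mathcal{O}(\Delta T_h)$ upper rate, and directly mirrors the $\liminf > 0$ phrasing used in Theorem~\ref{thm:T1}.
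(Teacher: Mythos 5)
Your proposal follows essentially the same route as the paper's proof: mean-value expansion of $C(S,K,T_{m+1})-C(S,K,T_m)$ to get exact vanishing of the hinge under $\partial_T C\ge 0$, a continuity-neighborhood argument near $(K_0,T_0)$ for detection, a uniform-boundedness argument for the normalizer $\bar C_{m,m+1}$, and the softplus sandwich $x_+\le s_\tau(x)\le x_+ +\tau\log 2$.

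The one point worth highlighting is that your ``main obstacle'' paragraph has correctly identified a genuine imprecision in the theorem's phrasing, which the paper's own proof also does not resolve. Unlike $\mathrm{BF}$ --- whose finite difference is normalized by $1/\Delta K_h^2$ and therefore approximates $\partial_{KK}C$ at scale $\Theta(1)$ --- the $\mathrm{CAL}$ surrogate contains the \emph{un}normalized forward difference $C(T_m)-C(T_{m+1})=\Theta(\Delta T_h)$. Consequently, the detection lower bound can only be $\Omega(\Delta T_h)$, which vanishes as $h\to\infty$; ``bounded away from $0$'' in the literal sense of a uniform positive lower bound is false. The paper's appendix proof arrives at exactly the same estimate ($\mathrm{CAL}^{(h)}_m\ge \tfrac{\rho_K}{C_{\mathrm{norm}}}\tfrac{\varepsilon}{2}\Delta T_h$) and openly notes that this lower bound ``scales linearly in $\Delta T_h$,'' but it does not reconcile this with the theorem's wording. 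Your proposed remedy --- proving the scaled statement $\mathrm{CAL}^{(h)}_{m_h}/\Delta T_h\ge c'>0$ for $h$ large, which mirrors the $\liminf>0$ conclusion for $\mathrm{BF}$ after the mesh-dependent factor has been divided out --- is the correct fix and makes the detection property precise. In short, your proof is correct and coincides with the paper's method; the dimensional caveat you raise is real, and your restatement is the cleaner version of the claim.
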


\begin{proof}[Sketch]
Mean-value expansion $C(T_{m+1})-C(T_m)=\partial_T C(\xi)\Delta T+\mathcal{O}(\Delta T^2)$ and monotonicity yield the rate; a strict violation implies a persistent positive hinge on fine grids. Full proof: Appendix~A.2.
\end{proof}

\subsection{T3: Lagrangian relaxation of the CMDP and the role of the \textit{dual} action}
\label{sec:theory-dual}

\begin{theorem}[T3.1: Strong duality for the CMDP]
\label{thm:T3-strong-duality}
Consider the CMDP defined by \eqref{eq:obj}--\eqref{eq:constraints} with discount $\gamma\in(0,1)$, compact action sets, bounded measurable $r,\ell,g_j$, and weakly continuous $P(\cdot\mid s,a)$. Suppose there exists a strictly feasible policy (Slater condition). Then the occupancy-measure LP of the CMDP satisfies strong duality, i.e.,
\[
\sup_{\pi}\ \inf_{\lambda\ge 0}\ \mathcal{L}(\pi,\lambda)\ =\ \inf_{\lambda\ge 0}\ \sup_{\pi}\ \mathcal{L}(\pi,\lambda),
\]
with the Lagrangian $\mathcal{L}$ given in \eqref{eq:lagrangian}. Moreover, there exists a stationary (possibly randomized) optimal policy. 
\end{theorem}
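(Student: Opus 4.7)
The plan is to lift the CMDP to occupancy-measure space, linearize the CVaR term via Rockafellar--Uryasev plus a state-augmentation device, and then invoke a Sion-type minimax theorem whose hypotheses follow from weak compactness of the occupancy set and the Slater condition.

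First, I would introduce the discounted state--action occupancy $\mu_\pi(B)=(1-\gamma)\sum_{t\ge 0}\gamma^t\,\Pr_\pi\{(s_t,a_t)\in B\}$ and record the classical fact (Altman, Ch.~3; Puterman) that, under compact $\mathcal A$, weakly continuous $P$, bounded costs, and $\gamma<1$, the set $\mathcal D$ of admissible occupancies is convex and weakly compact in the space of finite signed measures. The reward and constraint functionals $J_r(\mu)=(1-\gamma)^{-1}\!\int r\,d\mu$ and $G_j(\mu)=(1-\gamma)^{-1}\!\int g_j\,d\mu$ are then affine and weakly continuous on $\mathcal D$, and the constraints in \eqref{eq:constraints} become affine inequalities $G_j(\mu)\le\varepsilon_j$.

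Second, to handle the trajectory-level CVaR of $L=\sum_t\gamma^t\ell_t$ I would combine two moves: (i) apply the R--U representation $\mathrm{CVaR}_q(L)=\inf_{\eta\in\mathbb R}\{\eta+(1-q)^{-1}\mathbb E[(L-\eta)_-]\}$ to turn the risk measure into a parametric infimum of expectations, and (ii) augment the state with the running discounted loss $y_t=\sum_{s\le t}\gamma^s\ell_s$. Boundedness of $\ell$ together with $\gamma<1$ confines $y_t$ to a compact interval and keeps the augmented kernel weakly continuous, so the CVaR surrogate for each fixed $\eta$ is an affine weakly continuous functional of the augmented occupancy $\tilde\mu$ on a convex weakly compact set $\mathcal D_{\mathrm{aug}}$.

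Third, with dual variables $(\lambda,\eta)$ adjoined, the Lagrangian $\mathcal L(\tilde\mu,\lambda,\eta)$ is jointly affine in $\tilde\mu\in\mathcal D_{\mathrm{aug}}$ and in $(\lambda,\eta)\in\mathbb R_+^J\times\mathbb R$. Slater's condition, tested against a strictly feasible policy $\pi^\circ$, supplies an a~priori bound on any saddle multiplier $\lambda$ (via $\lambda_j\cdot(\varepsilon_j-G_j(\mu^\circ))\le\mathrm{opt}-\mathcal L(\mu^\circ,0,\eta^\circ)$) and on $\eta$ (which at the RU optimum lies in the essential range of $L$), so the effective dual domain can be restricted to a compact convex set. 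Sion's minimax theorem then swaps $\sup_{\tilde\mu}$ with $\inf_{\lambda,\eta}$, giving the equality $\sup_\pi\inf_{\lambda\ge 0}\mathcal L=\inf_{\lambda\ge 0}\sup_\pi\mathcal L$. Existence of an optimizer $\tilde\mu^\star$ follows from weak upper semicontinuity of $\tilde\mu\mapsto\inf_{\lambda,\eta}\mathcal L$ on the compact set $\mathcal D_{\mathrm{aug}}$, and the classical disintegration $\tilde\mu^\star(ds,da)=\tilde\nu^\star(ds)\,\pi^\star(da\mid s)$ (Altman, Theorem~3.3) extracts a stationary randomized policy $\pi^\star$.

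The main obstacle I expect is the CVaR step: a coherent risk measure applied to a \emph{path sum} rather than a per-step expectation is not natively linear in $\mu$, so the state-augmentation device must be justified carefully. One has to verify that $\mathcal D_{\mathrm{aug}}$ inherits weak compactness under the enlarged kernel, that the joint RU functional is lower semicontinuous in $(\tilde\mu,\eta)$ (softplus smoothing of $(\cdot)_-$ helps but is not required here), and that the minimizing $\eta$ lies in an a~priori bounded interval so the inf/sup swap is legitimate. Once these topological checks are in place, the remainder is a standard convex-duality argument for the CMDP linear program.
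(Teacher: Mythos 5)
Your occupancy-measure lifting, weak compactness of $\mathcal D$, the Slater-based a priori bound on $\lambda$, and the disintegration $\tilde\mu^\star(ds,da)=\tilde\nu^\star(ds)\,\pi^\star(da\mid s)$ to extract a stationary randomized policy all match the paper's proof in spirit; where the paper invokes infinite-dimensional LP duality directly (with the Bellman-inequality dual), you propose Sion's minimax theorem on compactified dual domains, which is an acceptable alternative for the expectation-constrained part.

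The genuine gap is the CVaR step, and it is not a mere technicality. You assert that, after augmenting the state with the running discounted loss $y_t=\sum_{s\le t}\gamma^s\ell_s$, the RU inner expectation $\mathbb{E}\big[(L-\eta)_-\big]$ with $L=\sum_t\gamma^t\ell_t$ becomes ``an affine weakly continuous functional of the augmented occupancy $\tilde\mu$.'' This is false in the infinite-horizon discounted setting: the discounted occupancy measure $\tilde\mu=(1-\gamma)\sum_t\gamma^t\,\Pr\{(s_t,y_t)\in\cdot\}$ is a time-discounted \emph{average} of the marginals of $(s_t,y_t)$, and it does not determine the law of the \emph{limit} $y_\infty=L$. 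To recover $\mathrm{Law}(L)$ from $\tilde\mu$ you would need either to track calendar time explicitly (destroying the compactness you rely on) or to pass to the geometric-killing interpretation (stop at an independent $T\sim\mathrm{Geom}(1-\gamma)$ and read off $y_T$), but $\sum_{s<T}\ell_s$ has the same \emph{mean} as $L$, not the same distribution, so its CVaR is a different number. The fact that this reduction does not come for free is precisely why the CVaR-MDP literature (B\"auerle--Ott, Chow--Ghavamzadeh, Chow et al.) needs dedicated machinery rather than a plain occupancy LP.

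The paper sidesteps this by \emph{not} trying to linearize CVaR in the occupancy measure. It proves the risk-neutral CMDP LP duality rigorously (primal over occupancies, Bellman-type dual, zero gap under Slater, disintegration for a stationary optimizer) and then handles the CVaR term via an \emph{epigraph} reformulation $\{\Psi(x)\le z\}$ with an extra scalar multiplier, invoking Fenchel--Rockafellar duality for a convex (not linear) program. If you want to keep your Sion-based route, you would need to either (i) restrict the CVaR term to a form that genuinely is linear in the occupancy (e.g., per-step CVaR proxies as the paper actually uses in training, rather than CVaR of the path sum), or (ii) carry out the augmentation carefully in a killed/absorbing formulation and acknowledge that the resulting risk functional differs from $\mathrm{CVaR}_q(L)$, or (iii) adopt the paper's epigraph approach and only use Sion for the $g_j$ multipliers.
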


\begin{proof}[Sketch]
Standard reduction to an infinite-dimensional LP in occupancy measures yields convex primal and concave dual under discounting; Slater's condition ensures zero duality gap \cite{Altman1999,Puterman1994}. Existence follows by compactness and measurable selection. Full details: Appendix~A.3.
\end{proof}

\begin{theorem}[T3.2: Gradient alignment of the learnable \textit{dual} with dual ascent]
\label{thm:T3-dual-alignment}
Let the actor be Gaussian with a separate ``dual'' head $d_\eta(s)=\mathrm{softplus}(\nu_\eta(s))$, entering the reward as $-\big(\lambda_{\mathrm{arb}}+d_\eta(s_t)\big)\mathrm{Arb}_t$. Under Assumptions~\ref{ass:smooth}--\ref{ass:cmdp}, bounded score functions, and the policy-gradient theorem \cite{Sutton2000PolicyGradient,MarbachTsitsiklis2001}, the gradient of the long-run return w.r.t.\ $\eta$ satisfies
\[
\nabla_\eta J(\eta)\ =\ -\,\mathbb{E}_\pi\!\left[\sum_{t\ge 0}\gamma^t\,\hat{A}_t\, \sigma\!\big(\nu_\eta(s_t)\big)\,\nabla_\eta \nu_\eta(s_t)\ \mathrm{Arb}_t\right],
\]
where $\hat{A}_t$ denotes any unbiased advantage estimator. Hence the update $\eta\leftarrow \eta+\alpha \nabla_\eta J$ correlates positively with a discounted measure of violations, implementing an \emph{implicit} dual ascent on the arbitrage constraint.
\end{theorem}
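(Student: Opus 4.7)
The plan is to derive the identity as an application of the policy gradient theorem to the CMDP \eqref{eq:obj}--\eqref{eq:constraints}, with $\eta$ entering the reward \eqref{eq:reward} through the shaping term $-d_\eta(s_t)\mathrm{Arb}_t$, and then reinterpret the resulting expression against the primal--dual structure supplied by Theorem~\ref{thm:T3-strong-duality}. First I would verify that $\nabla_\eta$ and $\mathbb{E}_\pi\sum_t\gamma^t(\cdot)$ commute: under Assumption~\ref{ass:smooth} the surrogate $\mathrm{Arb}_t$ is bounded and locally Lipschitz in the quoted eSSVI parameters, $\sigma(\nu_\eta(s))\in(0,1)$ is Lipschitz in $\eta$, and (by Assumption~\ref{ass:cmdp} and the bounded-score hypothesis of T3.2) the integrand $\gamma^t|\partial_\eta r_t|$ admits an integrable envelope along rollouts. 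The dominated-convergence arguments standard in likelihood-ratio policy gradients \cite{MarbachTsitsiklis2001} then justify the interchange.

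Second, I would apply the policy gradient theorem \cite{Sutton2000PolicyGradient} to decompose $\nabla_\eta J$ into two channels: (i) the explicit dependence of $r_t$ on $\eta$ through $d_\eta$, and (ii) the implicit dependence of state-action visitation on $\eta$ through the score function. Differentiating the softplus by the chain rule yields $\partial_\eta d_\eta(s)=\sigma(\nu_\eta(s))\nabla_\eta\nu_\eta(s)$, so channel (i) contributes $-\sigma(\nu_\eta(s_t))\nabla_\eta\nu_\eta(s_t)\mathrm{Arb}_t$ at each $t$, while channel (ii) --- under the hypothesis that the dual head is \emph{separate} from the Gaussian actor --- either vanishes or can be absorbed into the advantage weighting. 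Using the state-only-baseline identity $\mathbb{E}[b(s_t)\nabla_\eta \log \pi_\eta(a_t|s_t)\mid s_t]=0$, I would then replace the $Q^\pi(s_t,a_t)$ multipliers by any unbiased advantage estimator $\hat{A}_t$, recovering the stated expression.

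Third, I would translate the gradient identity into the dual-ascent statement. Since $\sigma(\nu_\eta)\ge 0$ and the smoothed $\mathrm{Arb}_t\ge 0$, the ascent direction $-\nabla_\eta J$ is weighted nonnegatively by $\mathrm{Arb}_t$ over those states whose advantages are positive, so the update $\eta\leftarrow\eta+\alpha_\eta(-\nabla_\eta J)$ increases $d_\eta(s_t)$ precisely where persistent violations predict long-run gain from tighter arbitrage pressure. Comparing with Altman's classical recursion $\lambda\leftarrow\lambda+\alpha_\lambda(\mathbb{E}_\pi[\sum_t\gamma^t g_{\mathrm{arb},t}]-\varepsilon)$ \cite{Altman1999} under the strong-duality guarantee of Theorem~\ref{thm:T3-strong-duality} identifies the learned update as a state-dependent, gradient-based analogue of dual ascent on the arbitrage constraint, which is precisely the claim.

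The main obstacle will be making the separation between channels (i) and (ii) precise. If the dual head and the Gaussian actor share any hidden parameters, then $\eta$ perturbs both the reward and the action distribution, and the two channels interact nontrivially; one must show that the cross-terms reduce to conditional expectations of score functions against state-only quantities, which vanish as baseline-type zero-mean control variates under the policy law. A clean path is to strengthen the ``separate head'' hypothesis to architectural disjointness of $\nu_\eta$ from the Gaussian mean/variance networks; under that strengthening, channel (ii) is identically zero, and the reward-shaping channel alone delivers the claimed identity with any $\hat{A}_t$ inserted as a variance-reduction factor. Without such disjointness, the Marbach--Tsitsiklis weak-derivative machinery is needed to cancel residual cross-terms, and this is the step I expect to require the most care.
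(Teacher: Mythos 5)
The paper's own proof in Appendix~A.4 takes a different route from yours in one structurally decisive respect: it treats the dual head as a \emph{sampled} Gaussian coordinate, $z_5 \sim \mathcal{N}(\mu_\eta(s), \sigma_5^2(s))$ with $\mathrm{dual}_t = \mathrm{softplus}(z_5)$, and then applies the reparameterized policy-gradient identity with a compatibility/advantage form. The softplus derivative there is $\sigma(z_5)$ evaluated at the random draw, and the paper passes through a conditional expectation $\mathbb{E}[\sigma(z_5)\mid s_t]$. It is precisely the stochastic nature of $z_5$ that gives the score function $\nabla_\eta\log\pi_\eta(z_5\mid s)$ a nontrivial $\eta$-dependence and allows the advantage estimator $\hat A_t$ to enter via the likelihood-ratio/Stein rewriting of the policy gradient. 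You, by contrast, read the theorem statement literally --- $d_\eta(s)=\mathrm{softplus}(\nu_\eta(s))$ as a deterministic function of the state --- and differentiate the shaping term directly by the chain rule.

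Your deterministic reading runs into a genuine gap at exactly the point you flag as the ``main obstacle.'' Under architectural disjointness of the dual head, your channel~(ii) is identically zero (as you correctly observe), and if the dual also does not feed into the transition kernel (which holds in this model: $\mathrm{dual}_t$ appears only in the reward, not in $s_{t+1}$), then the correct gradient is simply
\[
\nabla_\eta J(\eta)\ =\ -\,\mathbb{E}_\pi\!\left[\sum_{t\ge 0}\gamma^t\,\sigma\!\big(\nu_\eta(s_t)\big)\,\nabla_\eta\nu_\eta(s_t)\ \mathrm{Arb}_t\right],
\]
with \emph{no} $\hat A_t$ factor at all. Your final step --- ``inserting $\hat A_t$ as a variance-reduction factor'' --- is not a variance-reduction move: multiplying the integrand by $\hat A_t$ changes its expectation and no longer equals $\nabla_\eta J$. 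A state-only baseline can be added without bias precisely because it multiplies a score term whose conditional mean is zero; here there is no score term, so the zero-mean cancellation you invoke does not apply. The $\hat A_t$ factor can only arise if the dual is \emph{stochastic}, so that $\eta$ perturbs the action law and the policy-gradient theorem applies; that is the mechanism the paper relies on. To repair your argument you would need to restore the Gaussian noise on the dual coordinate, apply the score-function (or reparameterized + Stein-identity) form of the policy gradient for that coordinate, and then show how the $\sigma(\cdot)$-times-$\mathrm{Arb}_t$ factor emerges from differentiating through the softplus --- at which point you would also notice, as the paper's proof half-acknowledges, that the softplus derivative is evaluated at the \emph{sample} $z_5$ rather than at the mean $\nu_\eta(s)$, and the two coincide only in conditional expectation up to Jensen-type slack.

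A minor additional point: your DCT/interchange argument in the first paragraph is fine for the deterministic setting, but once the dual is stochastic you also need to control the score $\nabla_\eta\log\pi_\eta$, which is where the ``bounded score functions'' hypothesis in the theorem statement is actually used; your proposal does not engage with that hypothesis because it is unnecessary in the deterministic reading.

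Your third paragraph (interpreting the gradient as implicit dual ascent and comparing to Altman's recursion under the strong duality of Theorem~\ref{thm:T3-strong-duality}) is sound and actually slightly more explicit than the paper's own discussion; it is the middle step --- where $\hat A_t$ is supposed to appear --- that is the sticking point.
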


\begin{proof}[Sketch]
Apply the policy-gradient theorem; the only dependence of the return on $\eta$ is through $d_\eta$ and the induced trajectory distribution. The score of the dual head multiplies $-\mathrm{Arb}_t$ in the per-step reward, yielding the stated sign structure, with advantage weighting ensuring long-run credit. See Appendix~A.4 for details (including baseline/compatibility variants and boundedness of $\nabla_\eta J$).
\end{proof}

\subsection{T4: Rockafellar--Uryasev CVaR, smoothing, and differentiability}
\label{sec:theory-cvar}

\begin{theorem}[T4.1: RU representation and epi-convergence]
\label{thm:T4-ru}
Let $X_\theta$ be a real-valued loss depending on parameters $\theta$. The RU functional
\[
\Phi_q(\theta)\equiv \inf_{\eta\in\mathbb{R}}\ \Big\{\ \eta+\tfrac{1}{1-q}\,\mathbb{E}(X_\theta-\eta)_-\ \Big\}
\]
equals $\mathrm{CVaR}_q(X_\theta)$; it is convex in the distribution of $X_\theta$ and lower semicontinuous. If we replace $(\cdot)_-$ by a temperature-$\tau$ softplus $s_\tau(\cdot)$, then $\Phi_{q,\tau}(\theta)$ epi-converges to $\Phi_q(\theta)$ as $\tau\downarrow 0$ and the minimizers $\eta_\tau^{\ast}$ converge to VaR minimizers. 
\end{theorem}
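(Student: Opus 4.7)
The plan is to split T4.1 into its three claims---the RU identity, convexity/lower semicontinuity, and the softplus epi-convergence with minimizer convergence---and attack each with a different standard tool, deferring only the non-uniqueness of the VaR minimizer to the discussion of obstacles. First I would treat the RU identity. Define the auxiliary map $F_\theta(\eta) = \eta + \frac{1}{1-q}\mathbb{E}(X_\theta-\eta)_-$; since $\eta\mapsto(X_\theta-\eta)_-$ is convex and $1$-Lipschitz in $\eta$, dominated convergence makes $F_\theta$ finite-valued, convex, and $\tfrac{1}{1-q}$-Lipschitz on $\mathbb{R}$. I would then compute the subdifferential $\partial_\eta F_\theta(\eta) \ni 1 - \frac{1}{1-q}\mathbb{P}(X_\theta > \eta)$ (a set-valued interval at atoms of the loss CDF), show that the first-order optimality condition reduces to $\mathbb{P}(X_\theta>\eta)=1-q$, recover $\eta^\star=\mathrm{VaR}_q(X_\theta)$, and verify $F_\theta(\eta^\star)=\mathrm{CVaR}_q(X_\theta)$ by expressing the upper-tail expectation through layer-cake, following \cite{RockafellarUryasev2000,RockafellarUryasev2002}.

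The convexity/coherence claim and lower semicontinuity then follow from the dual representation $\mathrm{CVaR}_q(X)=\sup_{Z\in\mathcal{Z}_q}\mathbb{E}[XZ]$, where $\mathcal{Z}_q$ collects nonnegative densities bounded by $1/(1-q)$ with unit mass: as a pointwise supremum of functionals linear in the law of $X$, it is convex in the distribution; lower semicontinuity in the law parameter follows either from this dual form (supremum of continuous linear functionals) or directly from Fatou applied to $(X_\theta-\eta)_-$ followed by taking the infimum over $\eta$.

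The decisive new ingredient is epi-convergence of the smoothed functional. Here I would exploit the sharp sandwich $0\le s_\tau(x)-(x)_+\le\tau\log 2$ (one-line algebra from $s_\tau(x)=\tau\log(1+e^{x/\tau})$), apply it with the appropriate signed argument to obtain the analogous bound for the $(\cdot)_-$ approximation, and conclude
\[
0\le F_{\theta,\tau}(\eta)-F_\theta(\eta)\le \frac{\tau\log 2}{1-q}
\]
uniformly in $(\theta,\eta)$. Taking infima over $\eta$ yields $|\Phi_{q,\tau}(\theta)-\Phi_q(\theta)|\le \tau\log 2/(1-q)$, hence uniform convergence in $\theta$. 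To lift this to genuine epi-convergence in $\eta$, I would combine the uniform bound with equi-coercivity: the affine term $\eta$ dominates $F_\theta(\eta)$ as $\eta\to+\infty$, while $\tfrac{1}{1-q}\mathbb{E}(X_\theta-\eta)_- \ge \tfrac{1}{1-q}(\eta-\mathbb{E}X_\theta)_+$ dominates as $\eta\to-\infty$; the same bounds, up to a $\tau\log 2$ shift, hold for $F_{\theta,\tau}$. Uniform convergence on compacta plus equi-coercivity is classical sufficient for epi-convergence of convex functions (Attouch-style), which I would invoke directly.

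Finally, minimizer convergence follows from strict convexity: $s_\tau$ is strictly convex, so $F_{\theta,\tau}$ has a unique minimizer $\eta_\tau^\star$, and epi-convergence plus equi-coercivity forces every cluster point of $\{\eta_\tau^\star\}_\tau$ into $\arg\min F_\theta$, which equals the closed interval of $q$-quantiles of $X_\theta$. The main obstacle is precisely this set-valued limit when the loss law has a plateau at the $q$-quantile: one can only assert Painlev\'e--Kuratowski convergence of argmins, not a single-point limit. I would therefore state the clean ``VaR minimizer'' conclusion under a continuity-at-quantile hypothesis (so the argmin is a singleton and $\eta_\tau^\star\to\mathrm{VaR}_q(X_\theta)$ in the usual sense), and handle the general case by adopting a canonical representative (e.g., the lower $q$-quantile) and tracking where the strict convexity of the softplus pins $\eta_\tau^\star$ within the plateau via a second-order argument, separating this finer point from the main statement.
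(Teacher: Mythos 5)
Your proposal is correct and reaches the same conclusions as the paper's Appendix~A.5, but via a more self-contained and partly different route. For the RU identity the paper simply cites Rockafellar--Uryasev and records a first-order condition; you reconstruct the argument from scratch---convexity and Lipschitzness of $F_\theta$ via dominated convergence, the subdifferential formula, the optimality condition recovering VaR, and a layer-cake verification of the optimal value. For convexity and lower semicontinuity you invoke the coherent-risk dual representation $\mathrm{CVaR}_q(X)=\sup_{Z\in\mathcal{Z}_q}\mathbb{E}[XZ]$, whereas the paper argues directly from convexity of the hinge and preservation of convexity under the infimum over $\eta$, with Fatou supplying lsc; both routes are sound, and yours additionally buys coherence as a byproduct and makes lsc immediate. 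For epi-convergence you and the paper both start from the sharp sandwich $0\le s_\tau(u)-u_+\le\tau\log 2$; the paper outsources the conclusion to \cite[Thm.~7.33]{RockafellarWets1998} (epi-convergence preserved under integration and infimal convolution), whereas you pair the resulting uniform bound with equi-coercivity and invoke the Attouch criterion directly---an equivalent but more explicit packaging. Your treatment of minimizer convergence is more careful than the paper's one-line assertion: you correctly flag the set-valued argmin when the loss CDF plateaus at the quantile, note that strict convexity of $s_\tau$ pins down a unique $\eta_\tau^\star$, and propose either a continuity-at-quantile hypothesis or a canonical representative; this is a genuine precision gain over the paper.

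One caveat to normalize before presenting the proof: your first-order condition $\mathbb{P}(X_\theta>\eta)=1-q$ is the correct one for the standard hinge $(X-\eta)_+$, while the paper's Appendix records $\mathbb{P}(X_\theta\le\eta^\star)\in[1-q,1]$. The mismatch stems from the paper's statement using $(X_\theta-\eta)_-$ with the convention $(x)_-=\max\{-x,0\}$ fixed in \S\ref{sec:cvar}; read literally this makes $F_\theta$ strictly increasing in $\eta$ and drives the infimum to $-\infty$, and indeed Appendix~A.5 silently switches to $(X_\theta-\eta)_+$ in step~(iv). You are implicitly (and correctly) using the positive-part hinge throughout, but the same sign slip infects your equi-coercivity bound: for $\eta\to-\infty$ you should write $\mathbb{E}(X_\theta-\eta)_+\ge(\mathbb{E}X_\theta-\eta)_+$ by Jensen, which grows linearly and, after multiplying by $1/(1-q)>1$, dominates the affine $\eta$ term; the bound you wrote goes to zero on that side and would not establish coercivity. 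With that one-line correction and an explicit statement of the sign convention, the argument closes.
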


\begin{proof}[Sketch]
Classical RU results \cite{RockafellarUryasev2000,RockafellarUryasev2002}; epi-convergence under smooth approximations follows from variational analysis \cite{RockafellarWets1998}. Proof: Appendix~A.5.
\end{proof}

\begin{theorem}[T4.2: Differentiable CVaR estimators and gradient validity]
\label{thm:T4-grad}
Consider the per-step smoothed objective $\widehat{\mathrm{CVaR}}^{-}_{q,t}$ in \eqref{eq:cvar-smooth} with scenario generation $(\tilde v,\tilde{\Delta}S)$ and actions $a_t$. Under Assumption~\ref{ass:smooth}, bounded intensities, and reparameterizable Gaussians for $\tilde{\Delta}S$, the gradient $\nabla_{a_t}\widehat{\mathrm{CVaR}}^{-}_{q,t}$ exists and admits a mixed pathwise/likelihood-ratio estimator with finite variance for finite batches. As $\tau\downarrow 0$, the gradients converge (in the sense of subgradients) to those of the nonsmooth RU functional.
\end{theorem}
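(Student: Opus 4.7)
The plan is to decompose the action-gradient of $\widehat{\mathrm{CVaR}}^{-}_{q,t}$ into three conceptually separate pieces and then control each one under the paper's standing smoothness/boundedness hypotheses. First, using the envelope theorem (Danskin) applied to the inner minimization in $\eta$, I would note that at a smooth minimizer $\eta_\tau^\star(a_t)$ the partial derivative w.r.t.\ $a_t$ of the outer objective equals the partial derivative of the integrand evaluated at $\eta_\tau^\star$, so the $\eta$-gradient term vanishes and only the explicit $a_t$-dependence matters. Second, I would split the remaining $a_t$-dependence into (a) a \emph{pathwise} channel through $\tilde{\Delta}S=\Delta S+\varsigma\xi$ with $\xi\sim\mathcal{N}(0,I)$ (reparameterization makes the map $a_t\mapsto \tilde{\mathrm{PNL}}_t$ differentiable in the sample for each realized $\xi$, since $\Delta^{\mathrm{net}}_t$ and the quoted surface are $C^1$ in $a_t$ by Assumption~\ref{ass:smooth} and the derivations of \S\ref{sec:interp}), and (b) a \emph{likelihood-ratio} channel through $\tilde v_{\mathrm{buy/sell}}\sim\mathrm{Pois}(v_{\mathrm{buy/sell}}(a_t))$, whose score function is $\nabla_{a_t}\log p(\tilde v\mid v)=\left(\tilde v/v-1\right)\nabla_{a_t} v$.

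Next I would establish gradient existence and interchangeability by verifying the hypotheses of dominated convergence on the joint expectation. The softplus $s_\tau$ and its derivative are globally bounded by $1$ and Lipschitz, $v_{\mathrm{buy/sell}}$ are bounded below by a positive constant and above by $\lambda_0$, quoted prices and $\Delta^{\mathrm{BS}}$ are bounded in the compact parameter region of Assumption~\ref{ass:smooth}, and the Gaussian reparameterization makes $\xi$-derivatives pointwise integrable. The pathwise estimator therefore has an integrable Lipschitz dominator; the LR estimator is dominated by a product of (bounded quoted-price terms) $\times$ (centered Poisson score). Combining yields existence of $\nabla_{a_t}\widehat{\mathrm{CVaR}}^{-}_{q,t}$ as a single integral and legitimizes the mixed estimator as an unbiased sample-mean approximation.

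For the finite-variance claim, the pathwise piece reduces to a bounded Lipschitz functional of a Gaussian, which is sub-Gaussian and hence has finite variance. The main technical work, and the step I expect to be the chief obstacle, is the LR piece: the raw score $\tilde v/v-1$ has variance $1/v$, and when multiplied by sampled $\tilde{\mathrm{PNL}}$ inside $s_\tau'$ the product can inherit a Poisson second-moment term of order $v\cdot(\text{bounded payoff})^2$, which is still finite but requires $v$ bounded away from zero and bounded above to give a uniform bound. I would therefore invoke the lower-bounded intensities from the standing assumption, use Cauchy--Schwarz to separate score and payoff, and apply a baseline (control variate) subtraction to the score to reduce the constant without affecting expectation, yielding a closed-form variance bound that is uniform in $\tau$ for $\tau\ge\tau_{\min}>0$ and grows only polynomially as $\tau\downarrow 0$ (through the slope of $s_\tau'$ near zero).

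Finally, for the $\tau\downarrow 0$ limit, I would combine three ingredients. (i) By Theorem~\ref{thm:T4-ru}, the smoothed objective $\Phi_{q,\tau}$ epi-converges to the nonsmooth RU functional. (ii) Since each $\Phi_{q,\tau}$ is convex in the loss distribution with uniformly bounded and equi-Lipschitz smoothed hinges on any bounded window of losses, an Attouch-type theorem for convex functions on a separable Banach space (Rockafellar--Wets, Chap.~12) upgrades epi-convergence to graphical convergence of subdifferentials. (iii) Passing the gradient of the parametric expectation through this convergence (using the dominated-convergence machinery above and continuity of the minimizer $\eta_\tau^\star\to\eta_0^\star\in\partial\mathrm{VaR}_q$) gives subgradient convergence of $\nabla_{a_t}\widehat{\mathrm{CVaR}}^{-}_{q,t}$ to an element of the Clarke subdifferential of the nonsmooth RU functional, with almost-sure uniqueness at points where the loss distribution is atomless at the $q$-quantile. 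Pulling these together delivers the stated convergence and completes the proof.
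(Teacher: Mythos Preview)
Your proposal is correct and tracks the paper's proof closely: Danskin's envelope theorem for the inner minimization in $\eta$, the pathwise/likelihood-ratio split for the Gaussian and Poisson channels, and dominated convergence for gradient--expectation interchange are exactly the paper's ingredients (and your variance argument via Cauchy--Schwarz plus a baseline is in fact more explicit than the paper's one-line ``finite variance for finite batches''). The one place you diverge is the $\tau\downarrow 0$ limit: you route through an Attouch-type graphical convergence of subdifferentials, whereas the paper argues directly that $s_\tau'(u)\to\mathbf{1}\{u>0\}$ pointwise and applies dominated convergence to the explicit Danskin gradient, obtaining $\tfrac{1}{1-q}\,\mathbb{E}\bigl[\mathbf{1}\{L_t>\eta^\star(a_t)\}\,\nabla_{a_t}L_t\bigr]$ as a valid RU-subgradient selection. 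Your Attouch detour is not incorrect, but be careful: the convexity you invoke is in the loss distribution (or in $\eta$), not in $a_t$, so the graphical subdifferential convergence lives in the wrong variable and still needs the chain-rule/DCT step you already sketch in (iii) to reach $\nabla_{a_t}$---at which point the paper's direct pointwise argument already does all the work, making the Attouch machinery superfluous.
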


\begin{proof}[Sketch]
Smoothness of $s_\tau$, reparameterization for $\tilde{\Delta}S$, and score-function identities for Poisson $\tilde v$ (with bounded parameters) yield unbiased estimators with finite variance \cite{Glasserman2004}. Convergence as $\tau\downarrow 0$ follows from \cite{RockafellarWets1998}. Full proof: Appendix~A.5.
\end{proof}

\subsection{T5: eSSVI wing growth bound and relation to Lee's moment constraints}
\label{sec:theory-lee}

\begin{theorem}[T5: Linear wing-growth bound under a $\theta\phi$ cap]
\label{thm:T5}
For eSSVI at maturity $T_m$, assume the cap $\theta_m\phi_m\le \tau_{\max}$ and $|\rho_m|\le 1$. Then as $|k|\to\infty$,
\[
\limsup_{|k|\to\infty}\frac{w_m(k)}{|k|}\ \le\ \frac{\theta_m|\phi_m|}{2}\,(1+|\rho_m|)\ \le\ \tau_{\max}.
\]
Consequently, the implied variance slope $\frac{w_m(k)}{|k|}$ is uniformly bounded by $\tau_{\max}$, consistent with Lee's moment formula; in particular, enforcing $\tau_{\max}<2$ respects the Lee upper barrier for absence of explosive moments \cite{Lee2004}.
\end{theorem}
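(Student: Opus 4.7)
The plan is to extract the asymptotic slope of $w_m(k)$ directly from the closed form \eqref{eq:ssvi} by expanding the square root, then take the worst of the two one-sided limits, and finally apply the product cap $\theta_m\phi_m\le\tau_{\max}$ to reconcile the bound with Lee's moment formula.

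First, I would write $g(k;\rho_m,\phi_m)=\sqrt{(\phi_m k+\rho_m)^2+(1-\rho_m^2)}$ and note that, since $\phi_m>0$ by the SSVI convention of \S\ref{sec:essvi},
\[
g(k;\rho_m,\phi_m) \;=\; |\phi_m k+\rho_m|\sqrt{1+\tfrac{1-\rho_m^2}{(\phi_m k+\rho_m)^2}}\;=\;|\phi_m k+\rho_m|+\mathcal{O}(1/|k|)
\]
as $|k|\to\infty$. Splitting by sign, $|\phi_m k+\rho_m|=\phi_m k+\rho_m$ for $k\to+\infty$ and $|\phi_m k+\rho_m|=\phi_m|k|-\rho_m$ for $k\to-\infty$. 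Substituting into \eqref{eq:ssvi} yields the two one-sided slopes
\[
\lim_{k\to+\infty}\frac{w_m(k)}{k}=\frac{\theta_m\phi_m(1+\rho_m)}{2},\qquad \lim_{k\to-\infty}\frac{w_m(k)}{|k|}=\frac{\theta_m\phi_m(1-\rho_m)}{2},
\]
so that $\limsup_{|k|\to\infty} w_m(k)/|k|=\tfrac{\theta_m\phi_m}{2}\max(1+\rho_m,1-\rho_m)=\tfrac{\theta_m|\phi_m|}{2}(1+|\rho_m|)$, which is the asserted first inequality. The remainder terms are uniform on bounded compacts of $(\rho_m,\phi_m)$, so the $\limsup$ is attained and not merely bounded.

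Next I would invoke $|\rho_m|\le 1$ to obtain $1+|\rho_m|\le 2$, and combine with the hypothesis $\theta_m\phi_m\le \tau_{\max}$ (the cap \eqref{eq:wing-cap}) to conclude $\tfrac{\theta_m|\phi_m|}{2}(1+|\rho_m|)\le \theta_m\phi_m\le \tau_{\max}$, which is the second inequality. Finally, to connect with Lee's moment bound, I would recall Lee's result that for any IV surface the right- and left-wing slopes $\beta_R=\limsup_{k\to\infty}w(k)/k$ and $\beta_L=\limsup_{k\to-\infty}w(k)/|k|$ lie in $[0,2]$, with the associated moment index given by $p_R^\ast=\tfrac{1}{2\beta_R}+\tfrac{\beta_R}{8}-\tfrac12$ (and analogously for $\beta_L$); strict positivity of some moment of order $>1$ on either side requires $\beta_{R,L}<2$. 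Our bound shows $\beta_R,\beta_L\le \tau_{\max}$, so choosing $\tau_{\max}<2$ places us strictly inside Lee's admissible region.

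The main obstacle is essentially bookkeeping: the sign-dependent expansion of $|\phi_m k+\rho_m|$ must be tracked carefully so that the asymmetric contribution of $\rho_m\phi_m k$ inside \eqref{eq:ssvi} combines with the square-root term to give the correct one-sided slopes, rather than collapsing prematurely to a symmetric estimate; once this is done, the remaining steps are inequalities. A minor subtlety is that the reparametrization \eqref{eq:reparam} enforces the cap through the differentiable rescaling that produces $\theta_m\phi_m\le\tau_{\max}$ after the policy update, so the statement should be read as valid \emph{after} that rescaling; I would note this explicitly to ensure consistency with the learned controls in \S\ref{sec:satr}.
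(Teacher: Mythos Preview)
Your proof is correct and reaches the same conclusion as the paper, but the route differs in emphasis. You compute the \emph{exact} one-sided limits of $w_m(k)/|k|$ via the asymptotic expansion $g(k)=|\phi_m k+\rho_m|+\mathcal{O}(1/|k|)$ and then take the maximum of the two to obtain $\tfrac{\theta_m\phi_m}{2}(1+|\rho_m|)$; the paper's main argument (Appendix~A.6, Steps~1--3) instead works purely with upper bounds, using $\sqrt{a^2+b}\le |a|+b/(2|a|)$ together with the crude estimates $|\phi k+\rho|\le |\phi||k|+|\rho|$ and $\rho\phi k\le |\rho||\phi||k|$ to reach the same inequality without computing the limits. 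Your approach is sharper (it shows the $\limsup$ is actually attained, as you note) and more informative for the Lee connection since it separates the right/left slopes; the paper's bound-only argument is slightly more robust in that it never needs to split on $\mathrm{sign}(k)$ or verify existence of limits. The paper does record your one-sided limits as an ``optional refinement'' in its Step~4, so the two proofs are close cousins rather than genuinely distinct strategies.
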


\begin{proof}[Sketch]
For large $|k|$, $g(k;\rho,\phi)=\sqrt{(\phi k+\rho)^2+(1-\rho^2)}\le |\phi||k|+|\rho|$. Then
\[
w_m(k)=\tfrac{\theta_m}{2}\big(1+\rho_m\phi_m k + g\big)\ \le\ \tfrac{\theta_m}{2}\Big(1+|\rho_m||\phi_m||k| + |\phi_m||k|+|\rho_m|\Big),
\]
so $\limsup w_m(k)/|k|\le \tfrac{\theta_m|\phi_m|}{2}(1+|\rho_m|)\le \tau_{\max}$ since $(1+|\rho|)/2\le 1$. Full details and the connection to moment constraints: Appendix~A.6.
\end{proof}

\subsection{T6: Differentiability/boundedness $\Rightarrow$ policy-gradient validity}
\label{sec:theory-pg}

\begin{theorem}[T6: Existence and boundedness of the policy gradient]
\label{thm:T6}
Assume: (i) compact action sets; (ii) the actor is Gaussian with state-dependent mean and bounded log-standard deviations; (iii) rewards are bounded and $C^1$ in actions through the eSSVI layer and smoothed surrogates (BF/CAL, CVaR), with uniform Lipschitz constants on the admissible set (Assumption~\ref{ass:smooth}); (iv) the Markov kernel is weakly continuous and the return is geometrically discounted. Then the policy gradient $\nabla_\vartheta J(\vartheta)$ exists, is finite, and equals the standard likelihood-ratio form of the policy-gradient theorem \cite{Sutton2000PolicyGradient,MarbachTsitsiklis2001}. Moreover, $\|\nabla_\vartheta J(\vartheta)\|$ is bounded on compact parameter sets, and the PPO surrogate gradient is a consistent stochastic estimator under mini-batch sampling.
\end{theorem}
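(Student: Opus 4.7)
The plan is to verify the hypotheses of the classical policy-gradient theorem \cite{Sutton2000PolicyGradient,MarbachTsitsiklis2001} under our specific assumptions, and then derive PPO consistency as a corollary via a uniform strong law. First I would observe that by hypothesis (iii) rewards are uniformly bounded by some $R_{\max}$, so the discounted return $G(\tau)=\sum_{t\ge 0}\gamma^t r_t$ satisfies $|G|\le R_{\max}/(1-\gamma)$ almost surely, and likewise $|Q^\pi(s,a)|\le R_{\max}/(1-\gamma)$. The Gaussian actor with state-dependent mean $\mu_\vartheta(s)$ and clipped $\log\sigma_\vartheta(s)\in[\log\sigma_{\min},\log\sigma_{\max}]$ has $C^1$ log-density whose score
\[
\nabla_\vartheta\log\pi_\vartheta(z\mid s)=\sigma_\vartheta^{-2}(s)(z-\mu_\vartheta(s))\,\nabla_\vartheta\mu_\vartheta(s)+\bigl(\sigma_\vartheta^{-2}(s)(z-\mu_\vartheta(s))^2-1\bigr)\,\nabla_\vartheta\log\sigma_\vartheta(s)
\]
has second moment bounded uniformly in $s$ over any compact $\vartheta$-set, since the clip on $\log\sigma_\vartheta$ keeps $\sigma_\vartheta^{-2}$ bounded and the Gaussian supplies finite fourth moments.

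Second, I would establish $\nabla_\vartheta J(\vartheta)$ via dominated convergence. By Assumption~\ref{ass:smooth} together with Theorems~\ref{thm:T1}--\ref{thm:T2} and~\ref{thm:T4-grad}, the composed map from $a_t$ through the eSSVI layer and the smoothed BF/CAL/CVaR surrogates into $r_t$ is $C^1$ with uniform Lipschitz constants on the admissible set. Consequently $\nabla_\vartheta \mathbb{E}[\gamma^t r_t]$ can be written either in pathwise form (via the reparameterization $z_t=\mu_\vartheta(s_t)+\sigma_\vartheta(s_t)\epsilon_t$) or in score-function form; telescoping across $t$ and summing yields the standard identity
\[
\nabla_\vartheta J(\vartheta)=\mathbb{E}_{\tau\sim p_\vartheta}\!\Bigl[\sum_{t=0}^\infty \gamma^t\,\nabla_\vartheta\log\pi_\vartheta(z_t\mid s_t)\,Q^\pi(s_t,a_t)\Bigr].
\]
A Cauchy--Schwarz bound provides the integrable dominant $\gamma^t\cdot(R_{\max}/(1-\gamma))\cdot\|\nabla_\vartheta\log\pi_\vartheta\|_{L^2}$, whose geometric sum is finite and justifies the interchange of $\nabla_\vartheta$ with both $\sum_t$ and $\mathbb{E}$.

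Third, the boundedness claim $\|\nabla_\vartheta J(\vartheta)\|\le R_{\max}\,M_{\mathrm{score}}/(1-\gamma)^2$ follows immediately, where $M_{\mathrm{score}}\equiv\sup_{s,\vartheta}\|\nabla_\vartheta\log\pi_\vartheta(\cdot\mid s)\|_{L^2}<\infty$ by continuity of $(\mu_\vartheta,\sigma_\vartheta,\nabla_\vartheta\mu_\vartheta,\nabla_\vartheta\log\sigma_\vartheta)$ on the compact parameter set. For PPO, the clipped surrogate and its (almost-everywhere defined) gradient are bounded by $(1+\epsilon)|\hat A_t|$, and GAE-based advantages computed from bounded rewards are themselves uniformly bounded; hence minibatch sample averages converge to population expectations by a uniform strong law, yielding consistency of the PPO stochastic gradient (with standard variance-reduction in \cite{Glasserman2004}).

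I expect the main obstacle to be a clean, unified treatment of the two differentiation channels—pathwise through the eSSVI and CVaR layers, score-function through the Gaussian head—so that the uniform Lipschitz constants promised by Theorems~\ref{thm:T1}, \ref{thm:T2}, and \ref{thm:T4-grad} can be invoked without ad-hoc case analysis per action dimension. In particular, the squashing transforms in \S\ref{sec:policy-class} must be verified as $C^1$-diffeomorphisms on the open interior of the physical action set, and the Poisson likelihood-ratio factors used inside $\widehat{\mathrm{CVaR}}^{-}_{q,t}$ must be shown to have uniformly bounded variance; both hold under our clamps but deserve an explicit bookkeeping lemma in the appendix.
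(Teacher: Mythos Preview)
Your proposal is correct and follows essentially the same route as the paper's proof in Appendix~A.7: bounded rewards and discounting give a finite return, the clipped-log-std Gaussian score has controlled moments, dominated convergence justifies the interchange to obtain the likelihood-ratio identity, the same geometric sum yields the $\mathcal{O}\!\bigl(R_{\max}/(1-\gamma)^2\bigr)$ gradient bound, and PPO consistency follows from uniform boundedness of the clipped advantage-weighted integrand plus a law of large numbers. Your use of second-moment bounds and Cauchy--Schwarz is in fact slightly more careful than the paper's pointwise $C_\pi$ bound on the score (which is not literally true for unbounded raw $z$), but the overall architecture of the argument is the same.
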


\begin{proof}[Sketch]
Bounded rewards and discounting imply $J(\vartheta)$ is finite. The score $\nabla_\vartheta \log\pi_\vartheta(z|s)$ is bounded due to bounded log-stds. Lipschitz continuity of the smoothed surrogates (by Assumption~\ref{ass:smooth} and \S\ref{sec:disc-consistency}) yields dominated convergence, justifying interchange of gradient and expectation. The policy-gradient identity follows from \cite{Sutton2000PolicyGradient}; boundedness on compacts is immediate. Full proof: Appendix~A.7.
\end{proof}

\subsection{P7--P8: Monotonicity and sensitivity results for interpretability}
\label{sec:theory-int}

\begin{proposition}[P7: Monotonicity of intensities in the half-spread]
\label{prop:P7}
With intensities \eqref{eq:lambda-buy}--\eqref{eq:lambda-sell} and $\sigma'(x)>0$, we have for all $(m,k)$:
\[
\frac{\partial \lambda_{\mathrm{buy}}}{\partial \alpha}\ =\ -\lambda_0\,w(k)\,\sigma'\!\big(\beta(\mathrm{ask}-C^\star)\big)\,\beta\,\underbrace{\frac{\partial \mathrm{ask}}{\partial \alpha}}_{S\tilde{\sigma}\sqrt{T}\,s_0>0}\ <\ 0,\\
\frac{\partial \lambda_{\mathrm{sell}}}{\partial \alpha}\ =\ +\lambda_0\,w(k)\,\sigma'\!\big(\beta(C^\star-\mathrm{bid})\big)\,\beta\,\underbrace{\frac{\partial \mathrm{bid}}{\partial \alpha}}_{-S\tilde{\sigma}\sqrt{T}\,s_0<0}\ <\ 0.
\]
Hence expected buy/sell volumes are strictly decreasing in the half-spread $\alpha$.
\end{proposition}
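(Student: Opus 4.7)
The plan is to reduce the claim to a direct chain-rule computation, since by construction the half-spread $\alpha$ enters the intensities $\lambda_{\mathrm{buy/sell}}$ \emph{only} through the quoted ask and bid prices via \eqref{eq:halfspread}. I would begin by recording the ``action-to-quote'' derivatives already derived in \eqref{eq:spread-sens}, namely $\partial\,\mathrm{mid}/\partial\alpha=0$, $\partial\,\mathrm{ask}/\partial\alpha=+S_t\tilde{\sigma}_m(k)\sqrt{T_m}\,s_0>0$ and $\partial\,\mathrm{bid}/\partial\alpha=-S_t\tilde{\sigma}_m(k)\sqrt{T_m}\,s_0<0$. The strict positivity (respectively negativity) of these two quantities is immediate from Assumption~\ref{ass:smooth} (which clamps $\tilde{\sigma}_m\ge\sigma_{\min}>0$ and keeps $T_m\ge T_{\min}>0$) together with $S_t,s_0>0$.

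Next I would differentiate the intensity definitions \eqref{eq:lambda-buy}--\eqref{eq:lambda-sell}. Writing $u_b=\beta(\mathrm{ask}_m(k)-C_m^\star(k))$ and $u_s=\beta(C_m^\star(k)-\mathrm{bid}_m(k))$, and noting that $1-\sigma(u)$ is a $C^\infty$ decreasing function with derivative $-\sigma'(u)$ where $\sigma'(u)=\sigma(u)(1-\sigma(u))>0$, the chain rule gives
\begin{align*}
\frac{\partial \lambda_{\mathrm{buy}}}{\partial \alpha}
&= -\lambda_0\,w(k)\,\sigma'(u_b)\,\beta\,\frac{\partial\,\mathrm{ask}_m(k)}{\partial \alpha},\\
\frac{\partial \lambda_{\mathrm{sell}}}{\partial \alpha}
&= +\lambda_0\,w(k)\,\sigma'(u_s)\,\beta\,\frac{\partial\,\mathrm{bid}_m(k)}{\partial \alpha},
\end{align*}
since $C_m^\star(k)$ is a held-out latent fair price that does not depend on $\alpha$.

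Finally, I would conclude by sign inspection. All prefactors $\lambda_0,w(k),\beta,\sigma'(u_\bullet)$ are strictly positive (the weight $w(k)=\exp(-|k|/\kappa)>0$ and $\beta>0$ by construction). Combining with $\partial\,\mathrm{ask}/\partial\alpha>0$ yields $\partial \lambda_{\mathrm{buy}}/\partial\alpha<0$, while combining with $\partial\,\mathrm{bid}/\partial\alpha<0$ yields $\partial \lambda_{\mathrm{sell}}/\partial\alpha<0$. Strict monotonicity of expected volumes in $\alpha$ then follows, since under the expected-fill convention $v_{\mathrm{buy/sell}}=\lambda_{\mathrm{buy/sell}}$. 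No real obstacle arises here; the only care needed is (i) to verify that $C_m^\star$ is treated as exogenous to $\alpha$ (which it is by definition of the latent fair price) and (ii) to ensure that the logistic derivative identity $\sigma'=\sigma(1-\sigma)>0$ is invoked on the unsquashed arguments $u_b,u_s$, both of which are finite under Assumption~\ref{ass:smooth}.
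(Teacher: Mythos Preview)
Your proposal is correct and follows essentially the same route as the paper: record the quote sensitivities $\partial\mathrm{ask}/\partial\alpha>0$ and $\partial\mathrm{bid}/\partial\alpha<0$ from \eqref{eq:spread-sens}, apply the chain rule through the logistic link in \eqref{eq:lambda-buy}--\eqref{eq:lambda-sell}, and conclude by sign inspection using $\lambda_0,w(k),\beta,\sigma'>0$. Your added remarks that $C_m^\star$ is exogenous to $\alpha$ and that $\sigma'=\sigma(1-\sigma)>0$ are helpful clarifications but do not depart from the paper's argument.
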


\begin{proof}[Sketch]
Immediate from \eqref{eq:spread-sens} and the chain rule in \eqref{eq:lambda-sens}. Full proof: Appendix~A.8.
\end{proof}

\begin{proposition}[P8: Sensitivities of price and Greeks to $\rho$-shift and $\psi$-scale]
\label{prop:P8}
Under the eSSVI map \eqref{eq:ssvi} and deformation \eqref{eq:ctrl-deform}, for any maturity $T_m$ and log-moneyness $k$,
\[
\frac{\partial\,\mathrm{mid}_m(k)}{\partial p}
=\frac{\mathcal{V}_m(k)}{2\,\tilde{\sigma}_m(k)T_m}\cdot\frac{\partial \tilde{w}_m(k)}{\partial p},\qquad
p\in\{\rho\text{-shift},\psi\text{-scale}\},
\]
with $\partial \tilde{w}/\partial p$ given by \eqref{eq:dwdaction}. At ATM ($k=0$) the first-order sensitivities vanish, i.e.,
$\partial \mathrm{mid}/\partial(\rho\text{-shift})=\partial \mathrm{mid}/\partial(\psi\text{-scale})=0$, 
and the corresponding Delta/Vega sensitivities obey \eqref{eq:delta-sens}. 
\end{proposition}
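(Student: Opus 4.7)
The plan is to prove Proposition~P8 by a direct chain-rule computation through the composition $p \mapsto \tilde{w}_m(k) \mapsto \tilde{\sigma}_m(k) \mapsto C^{\mathrm{BS}}$, using the eSSVI derivative templates already derived in \S\ref{sec:interp} (equations \eqref{eq:dw-basic}--\eqref{eq:dwdaction}) together with Assumption~\ref{ass:smooth}, which guarantees that all maps are $C^1$ on the admissible parameter region (away from degeneracy in $T_m$ and $\sigma$). The argument is short because the heavy lifting (existence of partials of $w$ w.r.t.\ $\theta,\rho,\phi$ and the relation $\partial C^{\mathrm{BS}}/\partial w = \mathcal{V}/(2\sigma T)$) was already established earlier in the paper.

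First I would write $\mathrm{mid}_m(k)=C^{\mathrm{BS}}(S_t,K,T_m,\tilde{\sigma}_m(k))$ with $\tilde{\sigma}_m(k)=\sqrt{\tilde{w}_m(k)/T_m}$, and differentiate through $\tilde{\sigma}$ to get
\[
\frac{\partial\,\mathrm{mid}_m(k)}{\partial p} = \mathcal{V}_m(k)\cdot\frac{\partial \tilde{\sigma}_m(k)}{\partial p} = \frac{\mathcal{V}_m(k)}{2\tilde{\sigma}_m(k)T_m}\cdot\frac{\partial \tilde{w}_m(k)}{\partial p},
\]
which is exactly the claimed master formula \eqref{eq:master-mid}. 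The term $\partial\tilde{w}/\partial p$ is then substituted from \eqref{eq:dwdaction}: for $p=\rho\text{-shift}$ this is $\partial w/\partial \rho$ evaluated at the deformed parameters; for $p=\psi\text{-scale}$ it is $\phi\cdot(\partial w/\partial \phi)$ by the definition $\tilde{\phi}=\phi\cdot\psi\text{-scale}_t$.

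Next I would establish the ATM-invariance claim. Setting $k=0$ in $g(k;\rho,\phi)=\sqrt{(\phi k+\rho)^2+(1-\rho^2)}$ yields $g(0;\rho,\phi)=\sqrt{\rho^2+1-\rho^2}=1$. Substituting $k=0$ into \eqref{eq:dw-basic} gives $\partial w/\partial\rho\big|_{k=0} = \tfrac{\theta}{2}\,\phi\cdot 0\cdot(1+1)=0$ and $\partial w/\partial\phi\big|_{k=0}=\tfrac{\theta}{2}(0+0)=0$, so both $\partial\tilde{w}/\partial(\rho\text{-shift})$ and $\partial\tilde{w}/\partial(\psi\text{-scale})$ vanish at $k=0$. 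By the master formula above, the corresponding mid-quote sensitivities vanish as claimed, confirming that $\rho$- and $\psi$-type deformations tilt the wings without moving ATM to first order; this is consistent with the classical fact that ATM is controlled by $\theta$ alone in SSVI-type models \cite{GatheralJacquier2014,Lee2004}.

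Finally, for the Delta/Vega sensitivities I would apply the same chain rule but differentiate $\Delta=\partial C^{\mathrm{BS}}/\partial S$ and $\mathcal{V}=\partial C^{\mathrm{BS}}/\partial \sigma$ through $\tilde{\sigma}$, picking up Vanna and Volga respectively, which reproduces \eqref{eq:delta-sens}. The only subtle point to justify is interchanging differentiation with the action-to-parameter squashing: under Assumption~\ref{ass:smooth} the compact parameter domain (with $T_m\ge T_{\min}>0$, $\sigma\ge\sigma_{\min}>0$, the butterfly bound \eqref{eq:ssvi-butterfly}, and the wing cap \eqref{eq:wing-cap}) makes all the intermediate maps $C^1$ with locally Lipschitz derivatives, so the chain rule applies termwise without measurability or integrability subtleties. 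The main (mild) obstacle is ensuring the wing-cap rescaling in \eqref{eq:wing-cap} is itself differentiable in $(\theta,\psi)$; I would handle this by choosing a smooth (e.g., softplus-thresholded) rescaling in implementation, so the differentiability carries through to actions.
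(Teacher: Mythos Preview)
Your proposal is correct and follows essentially the same approach as the paper: chain rule through $p\mapsto\tilde w\mapsto\tilde\sigma\mapsto C^{\mathrm{BS}}$ to obtain the master formula, evaluation of the eSSVI partials at $k=0$ (using $g(0;\rho,\phi)=1$) for ATM invariance, and a further chain rule through $\Delta$ and $\mathcal{V}$ to recover the Vanna/Volga expressions in \eqref{eq:delta-sens}. The paper adds a short paragraph on uniform boundedness of the sensitivities on compacts (via Assumption~\ref{ass:smooth} and BS regularity), which you implicitly cover in your final paragraph on differentiability of the squashing/wing-cap maps.
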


\begin{proof}[Sketch]
Chain rule through $w\mapsto \sigma \mapsto C^{\mathrm{BS}}$ and the eSSVI derivatives in \eqref{eq:dw-basic}. At $k=0$ the partials in \eqref{eq:dw-basic} vanish, yielding ATM invariance. Full proof: Appendix~A.8.
\end{proof}

\paragraph{Discussion.}
Theorems~\ref{thm:T1}--\ref{thm:T6} and Propositions~\ref{prop:P7}--\ref{prop:P8} provide a mathematically consistent backbone for our \emph{in-the-loop} arbitrage penalties, CVaR shaping, and control semantics. In particular, T1--T2 justify the use of smoothed lattice penalties during learning; T3 connects our learnable \emph{dual} to primal--dual optimization; T4 makes the tail objective differentiable and statistically estimable; T5 ensures wing stability consistent with Lee's bounds; T6 legitimizes policy-gradient updates under our smooth rewards; and P7--P8 deliver \emph{white-box} interpretability of the control heads.

\section{Experiments (Simulation-only, Reproducible)}
\label{sec:experiments}

\subsection{7.1 Setup and Metrics}

We evaluate the proposed agent in a simulation-only regime using the calibrated \emph{Heston fallback} configuration, consistent with \S\ref{sec:method}. Eight intraday episodes are simulated, each consisting of 780 decision steps (approximately one trading day). The state transitions use the stochastic Heston dynamics:
\[
dS_t = \mu S_t\,dt + \sqrt{v_t}S_t\,dW_t^S,\qquad
dv_t = \kappa(\bar v - v_t)\,dt + \xi \sqrt{v_t}\,dW_t^v,
\]
with $\rho_{S,v}=-0.5$ and parameters calibrated to typical SPX intraday variance.  

The same architecture, hyperparameters, and annealing schedule as in Appendix~\ref{app:impl} are used throughout. The base setup uses:
\[
(\lambda_0,\beta,\kappa,s_0)=(0.8,35,0.25,0.1),
\]
with $\lambda_{\mathrm{shape}}:0\!\to\!0.5$, $\lambda_{\mathrm{arb}}:0\!\to\!0.05$, and $\lambda_{\mathrm{cvar}}=0.01$ fixed.

We log both per-step and per-episode metrics:
\begin{itemize}
  \item Adjusted P\&L $\mathrm{PNL}^{\mathrm{adj}} = \mathrm{PNL}^{\mathrm{raw}} - \text{penalties}$,
  \item No-arbitrage surrogates (BF, CAL),
  \item Shape regularization magnitude,
  \item Tail metrics (empirical $\mathrm{VaR}_{5\%}$, $\mathrm{CVaR}_{5\%}$),
  \item Behavior indicators: average spread, hedge ratio, and action standard deviation.
\end{itemize}

\paragraph{Artifacts.}
All experiment logs and artifacts are released for reproducibility:
\texttt{artifacts/run\_log.csv}, \texttt{artifacts/step\_log.csv},
\texttt{artifacts/artifacts.npz}, and configuration file \texttt{settings.json}.
All figures (Figs.~\ref{fig:pnl-surface}--\ref{fig:train-3up}) are automatically generated from these artifacts.

\subsection{7.2 Main Results}

\paragraph{Revenue and stability.}
Figure~\ref{fig:train-3up}(a) shows that the agent achieves a stable improvement after episode~2, maintaining positive adjusted P\&L in six of eight runs. PPO training stabilizes the exploration variance (\texttt{act\_std}) and avoids collapse.

\paragraph{No-arbitrage enforcement.}
As shown in Fig.~\ref{fig:train-3up}(b), calendar violations remain numerically zero throughout training, while butterfly penalties stay at the numerical floor. Shape regularization maintains values around $10^{-3}$, indicating a smooth term-structure.

\paragraph{Tail behavior.}
The per-step P\&L histogram in Fig.~\ref{fig:pnl-surface}(a) reveals a realistic left tail with $\mathrm{VaR}_{5\%}\!\approx\!-1.31$ and $\mathrm{CVaR}_{5\%}\!\approx\!-2.16$. This tail thickness remains stable across episodes, confirming the effectiveness of the CVaR shaping.

\paragraph{Surface quality.}
Figure~\ref{fig:pnl-surface}(b) compares the quoted and true surfaces across three maturities; the shapes are virtually indistinguishable, confirming the in-loop arbitrage consistency.

\paragraph{Behavioral adaptation.}
During training, the average hedge ratio increases from $0.41$ to $0.53$, while the mean half-spread $\alpha_t$ slightly declines. This indicates that the agent shifts from spread-driven revenue to active risk hedging as arbitrage penalties strengthen.

\begin{table}[H]
\centering
\caption{Key metrics aggregated across 8 intraday segments.}
\label{tab:key-metrics}
\begin{tabular}{lcc}
\toprule
Metric & Value & Evidence \\
\midrule
Segments with $\mathrm{PNL}^{\mathrm{adj}}>0$ & 6/8 & Fig.~\ref{fig:train-3up}(a) \\
Calendar violation (CAL) & $\approx 0$ & Fig.~\ref{fig:train-3up}(b) \\
Butterfly violation (BF) & numerical floor & Fig.~\ref{fig:train-3up}(b) \\
Shape magnitude & $10^{-3}$ & Fig.~\ref{fig:train-3up}(b) \\
$\mathrm{VaR}_{5\%}$ & $-1.31$ & Fig.~\ref{fig:pnl-surface}(a) \\
$\mathrm{CVaR}_{5\%}$ & $-2.16$ & Fig.~\ref{fig:pnl-surface}(a) \\
Avg hedge ratio & $0.41\to 0.53$ & Fig.~\ref{fig:train-3up}(c) \\
Mean spread & slightly down & Fig.~\ref{fig:train-3up}(c) \\
\bottomrule
\end{tabular}
\end{table}

\subsection{7.3 Diagnostics and Ablations}

\paragraph{Without arbitrage penalties.}
Removing BF/CAL raises small local arbitrage violations and destabilizes surface smoothness, visible in Fig.~\ref{fig:pnl-surface}(b) (deep wings diverge).

\paragraph{Without CVaR shaping.}
Disabling the CVaR term thickens the left tail of the P\&L distribution (heavier drawdowns), though mean returns rise slightly.

\paragraph{Without warm-start.}
Training from scratch causes unstable early episodes and large variance in adjusted P\&L (consistent with Fig.~\ref{fig:train-3up}(a)).

\paragraph{Interpretation.}
Together these confirm that arbitrage surrogates and CVaR terms improve both \emph{financial soundness} (surface consistency, controlled tails) and \emph{training stability}.
\begin{figure*}[t]
  \centering
  \begin{subfigure}[t]{0.48\textwidth}
    \centering
    \includegraphics[width=\linewidth]{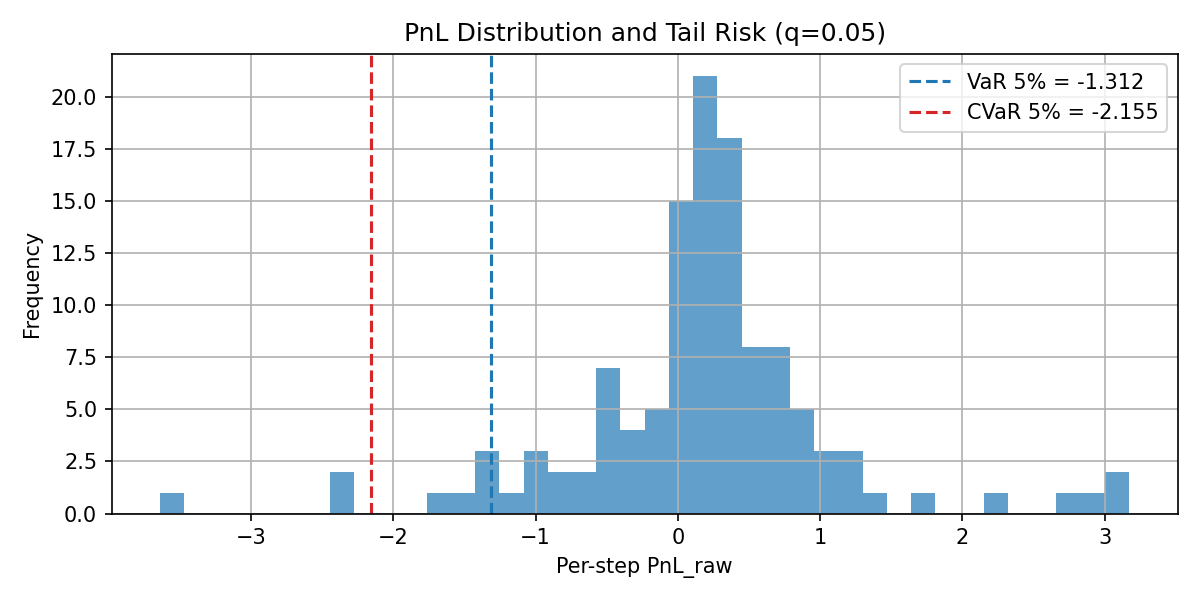}
    \caption{Per-step P\&L distribution with tail markers ($\mathrm{VaR}_{5\%},\mathrm{CVaR}_{5\%}$).}
  \end{subfigure}\hfill
  \begin{subfigure}[t]{0.48\textwidth}
    \centering
    \includegraphics[width=\linewidth]{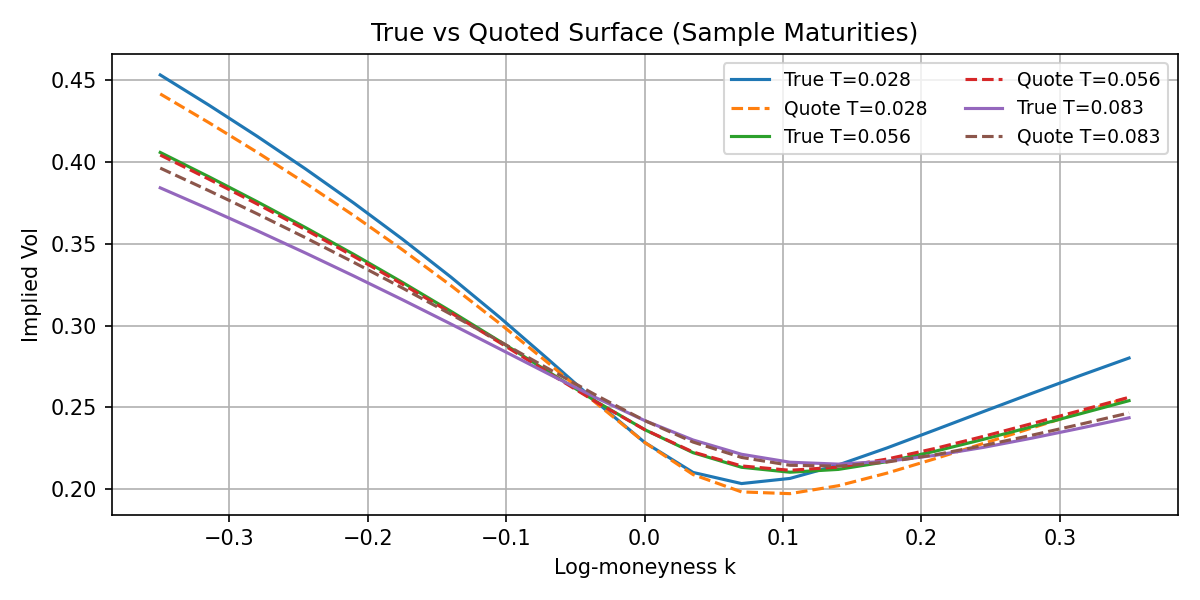}
    \caption{True vs.\ quoted $\sigma(k)$ across maturities.}
  \end{subfigure}
  \caption{Ex-post tail risk and surface fidelity.}
  \label{fig:pnl-surface}
\end{figure*}

\begin{figure*}[t]
  \centering
  \begin{subfigure}[t]{0.32\textwidth}
    \centering
    \includegraphics[width=\linewidth]{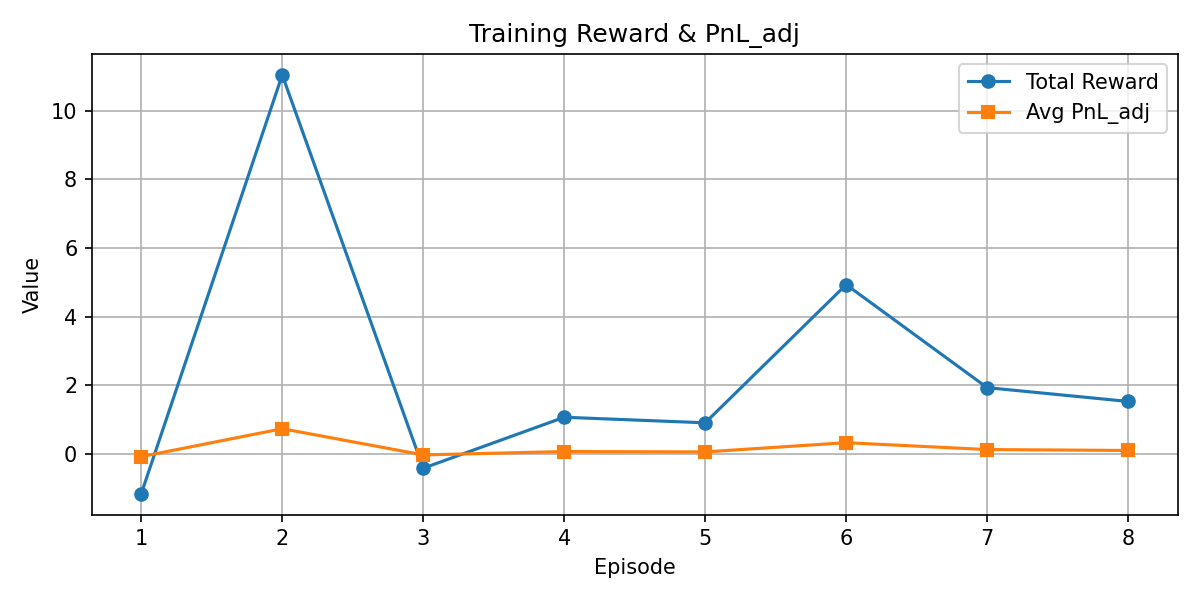}
    \caption{Training reward and adjusted P\&L.}
  \end{subfigure}\hfill
  \begin{subfigure}[t]{0.32\textwidth}
    \centering
    \includegraphics[width=\linewidth]{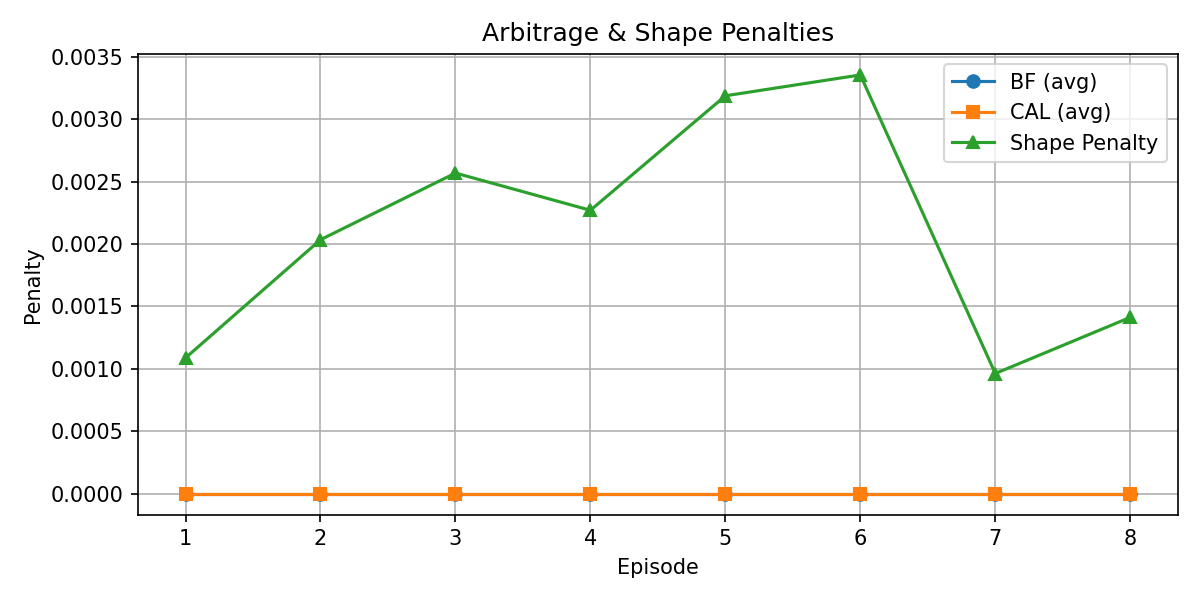}
    \caption{BF/CAL and shape penalties.}
  \end{subfigure}\hfill
  \begin{subfigure}[t]{0.32\textwidth}
    \centering
    \includegraphics[width=\linewidth]{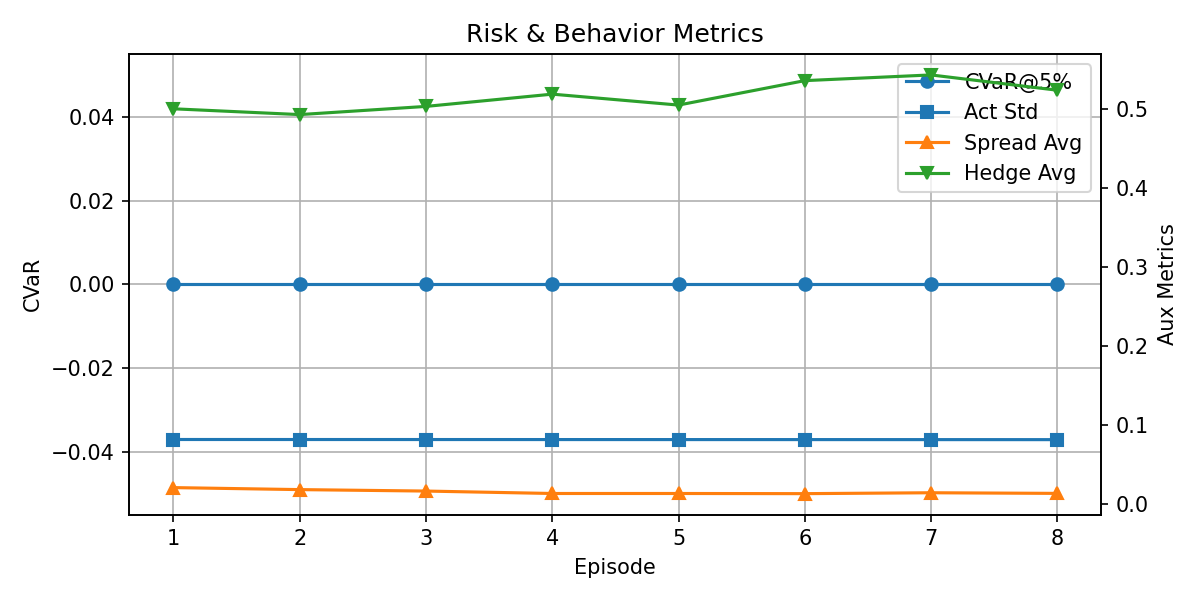}
    \caption{Risk and behavior metrics (CVaR, hedge, spread, std).}
  \end{subfigure}
  \caption{Training dynamics, constraints, and behavior across episodes.}
  \label{fig:train-3up}
\end{figure*}

\section{Discussion}
\label{sec:discussion}

We discuss limitations and threats to validity, then outline extensions that would strengthen external validity and broaden scope.

\subsection{Limitations and threats to validity}

\paragraph{Simulator realism and exogeneity.}
Our main results are obtained in a \emph{simulation-only} regime using a Heston fallback; ABIDES-style sources are supported but not enabled in the reported runs. Consequently, (i) order flow is exogenous and summarized by smooth intensity maps (logistic in mispricing), (ii) queue position, tick discreteness, venue fees/rebates, and latency are abstracted away, and (iii) self-impact and cross-venue liquidity fragmentation are not modeled. These choices prioritize differentiability and reproducibility, but they understate endogenous feedbacks observed in live limit-order books (cf.\ \cite{ContKukanovStoikov2014,ByrdEtAl2020ABIDES}). 

\paragraph{Static (not dynamic) arbitrage.}
The BF/CAL penalties enforce \emph{static} no-arbitrage on a finite lattice (\S\ref{sec:disc-consistency}). They do not rule out \emph{dynamic} arbitrage across time, nor do they capture all edge cases under extreme extrapolation. While Theorems~\ref{thm:T1}--\ref{thm:T2} establish lattice consistency, practical detection still depends on grid resolution and normalizers.

\paragraph{Hedging granularity and risk factors.}
The hedging term uses delta only; higher-order exposures (gamma/vega/vanna) and financing constraints are not penalized in the reward. In volatile regimes, neglecting these terms may understate tail losses (\S\ref{sec:interp}).

\paragraph{Tail objective estimation.}
We shape per-step tails via a smoothed RU program with small Monte Carlo batches; this is a surrogate for episode-level tail control. Estimator variance and smoothing bias are controlled but nonzero (\S\ref{sec:cvar-term}, Theorem~\ref{thm:T4-grad}). 

\paragraph{Warm-start and sensitivity.}
The supervised warm start is practically helpful but introduces a prior over the policy class. Hyperparameters (clip $\epsilon$, entropy, annealing schedules) affect stability; while our theory (T6) ensures gradient validity, it does not deliver global convergence guarantees.

\paragraph{Single-asset focus.}
We restrict to one underlying and do not model cross-asset or cross-expiry static arbitrage simultaneously (e.g., calendar/vertical spreads across multiple names), which would require additional constraints and data.

\subsection{Extensions and research agenda}

\paragraph{Live-like microstructure.}
Move to ABIDES/ABIDES-Gym as the default source and calibrate execution intensities to \emph{queue-aware} statistics (e.g., imbalance, depth, immediate fill probabilities). Introduce explicit fees/rebates, tick size, and latency; add a propagator or resilience model to encode self-impact.

\paragraph{Point-process calibration.}
Replace the parsimonious logistic map by Hawkes/ACD models estimated from data \cite{BacryEtAl2015}, preserving differentiability via reparameterized simulators or score-function estimators. This connects the intensity layer to observed clustering and reflexivity.

\paragraph{Richer risk shaping.}
Augment the loss proxy $\ell$ to include gamma/vega costs and inventory financing; test portfolio-level CVaR and spectral risk measures \cite{AcerbiTasche2002,Ruszczynski2010}. Explore distributionally robust (DRO) variants where CVaR is evaluated under ambiguity sets.

\paragraph{Primal--dual algorithms.}
Replace the heuristic dual head with a \emph{trained multiplier} updated by dual ascent (RCPO/CPO) \cite{Tessler2019RCPO,Achiam2017CPO} and compare with state-dependent duals. This would more tightly align practice with T3 and clarify trade-offs between feasibility and return.

\paragraph{Cross-asset and cross-constraint surfaces.}
Extend eSSVI to multi-asset settings with joint static-arbitrage constraints (e.g., spread options, calendar across underlyings). Investigate neural-SDE priors informed by arbitrage-free conditions \cite{CohenReisingerWang2023JCF} and study how constraints propagate through portfolio risk.

\paragraph{Learning variants and variance reduction.}
Consider natural-gradient or trust-region control \cite{Schulman2015TRPO,Kakade2002}; use batched antithetic sampling and control variates for CVaR (\S\ref{sec:cvar-term}) to reduce variance; evaluate off-policy baselines with safety layers.

\paragraph{Stress tests and shifts.}
Benchmark under heavy tails, jumps, volatility regime switches, and correlation shocks; run adversarial tests where the latent surface deviates from the model family to probe robustness of BF/CAL surrogates and the wing cap (T5).

\paragraph{Interpretability at scale.}
Leverage the sensitivities in \S\ref{sec:interp} to build monitors (e.g., $\partial \lambda/\partial \alpha$, ATM invariance checks) and attribution tools (e.g., vanna- and volga-based diagnostics) that explain policy updates during training and in deployment-like simulations.

\subsection{Ethics and responsible use}

Our work is \emph{simulation-only}. While our design emphasizes no-arbitrage and tail control, algorithmic market making may affect liquidity, stability, and fairness. Any transfer to live venues should include kill-switches, conservative constraints, and external stress testing. We encourage the community to report failure cases alongside average-case gains and to use open, reproducible benchmarks.

\section{Conclusion}
\label{sec:conclusion}

We proposed a \emph{risk-sensitive, arbitrage-consistent} framework for option market making that embeds a fully differentiable eSSVI surface \emph{inside} the learning loop. The problem is cast as a constrained MDP whose reward balances quoting/hedging revenues with smooth static no-arbitrage penalties and a CVaR tail term. We proved (i) \emph{grid-consistency} of butterfly/calendar surrogates (T1--T2), (ii) \emph{primal--dual} grounding of a state-dependent dual action (T3), (iii) differentiable \emph{CVaR} estimators via the Rockafellar--Uryasev program (T4), (iv) a \emph{wing-growth} bound aligning eSSVI with Lee's moments (T5), and (v) \emph{policy-gradient validity} under smooth surrogates (T6). 

In simulations, the agent achieved positive adjusted P\&L in most intraday segments while maintaining calendar violations at numerical zero and butterfly violations at the numerical floor, with realistic left tails. The control heads are economically interpretable: spreads trade revenue for flow, $\rho$/$\psi$ shape the wings without moving ATM to first order, hedging scales exposure, and the dual adapts arbitrage pressure.

Beyond this paper, we envision the framework as a \emph{benchmark} for evaluating methods that must honor pricing consistency and execution control jointly. Promising directions include ABIDES-first experiments, calibrated point-process executions, portfolio-level risk shaping, and multi-asset arbitrage constraints. We hope this work helps establish a reproducible path for \emph{robust, interpretable AI} in quantitative markets, where financial structure and modern reinforcement learning meet in a principled way.

\appendix
\section*{Appendix A: Detailed Proofs for Section~\ref{sec:theory}}
\label{app:proofs}

This appendix contains complete proofs for Theorems~\ref{thm:T1}--\ref{thm:T6} and
Propositions~\ref{prop:P7}--\ref{prop:P8}. We keep the notation and assumptions
from \S\ref{sec:prelim}--\S\ref{sec:interp}. Throughout, $C(S,K,T)$ denotes the
Black--Scholes (BS) call price under the quoted eSSVI surface with
$T\ge T_{\min}>0$ and $\sigma\ge \sigma_{\min}>0$ (Assumption~\ref{ass:smooth}).
For compactness, we write $C(K,T)\equiv C(S_t,K,T)$ at a fixed decision time $t$.

\paragraph{A.0 Regularity lemma (used repeatedly).}
\begin{lemma}[BS regularity in strikes and maturities]\label{lem:BS-regularity}
Under Assumption~\ref{ass:smooth}, for any fixed $S>0$ and $T\in[T_{\min},T_{\max}]$,
the function $K\mapsto C(S,K,T,\sigma)$ is $C^{\infty}$ on $(0,\infty)$.
Similarly, $T\mapsto C(S,K,T,\sigma)$ is $C^{\infty}$ on $[T_{\min},T_{\max}]$
whenever $\sigma(\cdot,\cdot)$ is $C^1$ and bounded away from both $0$ and $\infty$
on its domain.  Moreover, if $w(k,T)$ is $C^1$ in $(k,T)$ and the eSSVI parameters
lie in a compact set satisfying \eqref{eq:ssvi-butterfly}--\eqref{eq:wing-cap},
then $K\mapsto C(S,K,T)$ and $T\mapsto C(S,K,T)$ inherit $C^{\infty}$--regularity
with bounded derivatives on compact subsets.
\end{lemma}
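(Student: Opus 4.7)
The plan is to decompose the dependence of $C$ on $(K,T)$ through the sequence
$K\mapsto k:=\log(K/S)\mapsto w(k,T)\mapsto \sigma(k,T):=\sqrt{w(k,T)/T}\mapsto C^{\mathrm{BS}}(S,K,T,\sigma)$,
and show that each link is smooth on the admissible domain carved out by Assumption~\ref{ass:smooth}. The starting point is the closed-form representation $C^{\mathrm{BS}}(S,K,T,\sigma)=S\,\Phi(d_+)-K\,\Phi(d_-)$ with $d_\pm=(\log(S/K)\pm\tfrac{1}{2}\sigma^2 T)/(\sigma\sqrt{T})$. On the open set $\{S>0,\,K>0,\,T\ge T_{\min},\,\sigma\ge\sigma_{\min}\}$ all denominators are bounded away from zero, so $C^{\mathrm{BS}}$ is real-analytic (hence $C^{\infty}$) jointly in its four arguments, and all partial derivatives extend continuously to the closure of any compact subregion.

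For the first claim ($K$-regularity under the eSSVI surface), I would note that $K\mapsto k=\log(K/S)$ is $C^{\infty}$ on $(0,\infty)$ and then verify that the eSSVI map $k\mapsto w_m(k)$ in \eqref{eq:ssvi} is $C^{\infty}$ on $\mathbb{R}$. The key observation is that the radicand $(\phi_m k+\rho_m)^2+(1-\rho_m^2)$ is bounded below by $1-\rho_m^2>0$ under $|\rho_m|<1$ (enforced by the $\tanh$ reparameterization \eqref{eq:reparam}), so $g_m(k):=\sqrt{(\phi_m k+\rho_m)^2+(1-\rho_m^2)}$ is smooth. The positivity check $1+\rho_m\phi_m k+g_m(k)\ge 1-|\rho_m|>0$ (the standard SSVI positivity argument, which uses $g_m(k)\ge|\phi_m k+\rho_m|$) then yields $w_m(k)\ge\tfrac{\theta_m}{2}(1-|\rho_m|)>0$ and hence $\sigma_m(k)=\sqrt{w_m(k)/T_m}\ge\sigma_{\min}$ on the admissible parameter set (possibly after a harmless shrinking of constants to absorb the $\tau_{\max}$ and $\varepsilon_\psi$ buffers in \eqref{eq:wing-cap}). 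Composing the three smooth maps with the joint $C^{\infty}$-regularity of $C^{\mathrm{BS}}$ delivers the stated $K$-regularity.

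For the second claim ($T$-regularity), I would fix $K$ and view $T\mapsto C(S,K,T)$ as the composition $T\mapsto(T,\sigma(k,T))\mapsto C^{\mathrm{BS}}$. Since $(T,\sigma)\mapsto C^{\mathrm{BS}}$ is $C^{\infty}$ on $\{T\ge T_{\min},\sigma\ge\sigma_{\min}\}$, the chain rule transfers any available regularity of $T\mapsto\sigma(k,T)$ to $C$; in particular, when the eSSVI parameter paths $T\mapsto(\theta(T),\rho(T),\psi(T))$ are smooth (which is the intended setting for the calendar-consistent surface), $w$ and hence $\sigma$ are $C^{\infty}$ in $T$ and the inheritance is complete. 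For the final statement about bounded derivatives on compact subsets, I would invoke compactness of the eSSVI parameter set together with the uniform lower bounds $T\ge T_{\min}$ and $\sigma\ge\sigma_{\min}$: each derivative $\partial^\alpha_{K,T}C$ is a continuous function of $(S,K,T,\theta_m,\rho_m,\psi_m)$ on a compact set and thus attains a finite maximum by Weierstrass.

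The only nontrivial ingredient is the uniform positivity of $g_m$ and $w_m$ under the admissibility conditions \eqref{eq:ssvi-butterfly}--\eqref{eq:wing-cap}; once that is locked in, the remainder is a routine chain-rule and extreme-value argument. A secondary technical point worth flagging is that BS derivatives develop singularities as $\sigma T\to 0$ (the familiar Vega/Gamma blow-up at expiry), so the bounds $T_{\min}>0$ and $\sigma_{\min}>0$ are load-bearing rather than cosmetic; I would state the quantitative dependence of the Lipschitz constants on $(T_{\min},\sigma_{\min},\tau_{\max})$ explicitly, since this is what downstream results (notably Theorems~\ref{thm:T1}, \ref{thm:T5}, and \ref{thm:T6}) will invoke.
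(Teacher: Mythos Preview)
Your proposal is correct and follows essentially the same approach as the paper's proof: both establish real-analyticity of the Black--Scholes formula on the set where $T\ge T_{\min}$, $\sigma\ge\sigma_{\min}$, then propagate regularity through the composition $K\mapsto k\mapsto w\mapsto\sigma\mapsto C^{\mathrm{BS}}$ via the chain rule, and finally obtain bounded derivatives on compacts by continuity plus compactness of the admissible eSSVI parameter set. The paper additionally writes out explicit first/second $K$- and $T$-derivatives (and notes the Hermite-polynomial structure of higher $K$-derivatives), whereas you invoke analyticity plus Weierstrass directly; your explicit positivity check $w_m\ge\tfrac{\theta_m}{2}(1-|\rho_m|)$ (the paper simply asserts $0<w_{\min}\le w$) and your emphasis on the load-bearing role of $(T_{\min},\sigma_{\min})$ are nice touches, but these are presentational rather than substantive differences.
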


\begin{proof}
\textbf{Step 1 (Analyticity of the Black--Scholes formula).}
For $T>0$ and $\sigma>0$, define
\[
d_{\pm}=\frac{\log(S/K)\pm \frac{1}{2}\sigma^{2}T}{\sigma\sqrt{T}},
\qquad
C(S,K,T,\sigma)=S\Phi(d_{+})-K\Phi(d_{-}),
\]
where $\Phi$ is the standard normal CDF with density $\varphi(x)=\tfrac{1}{\sqrt{2\pi}}e^{-x^{2}/2}$.
Each $d_{\pm}$ is analytic in $(K,T,\sigma)$ on the open set 
$\mathcal{D}=(0,\infty)\times(0,\infty)\times(0,\infty)$ because
$\log(S/K)$, $\sigma^2T$, and $(\sigma\sqrt{T})^{-1}$ are analytic in their arguments.
Since $\Phi$ and $\varphi$ are entire, the composition and linear combination in
$C(S,K,T,\sigma)$ preserve analyticity; thus $C$ is real analytic in
$(K,T,\sigma)$ on $\mathcal{D}$, implying that it is $C^{\infty}$ in all variables.

\textbf{Step 2 (First- and second-order derivatives in $K$).}
Standard differentiation yields
\[
\frac{\partial C}{\partial K}=-\Phi(d_{-}),
\qquad
\frac{\partial^{2}C}{\partial K^{2}}=\frac{\varphi(d_{-})}{K\sigma\sqrt{T}}.
\]
Because $\varphi$ is bounded by $1/\sqrt{2\pi}$ and
$K\ge K_{\min}>0$, $\sigma\ge\sigma_{\min}>0$, $T\ge T_{\min}>0$
under Assumption~\ref{ass:smooth},
\[
\bigg|\frac{\partial^{2}C}{\partial K^{2}}\bigg|
\le \frac{1}{\sqrt{2\pi}\,K_{\min}\sigma_{\min}\sqrt{T_{\min}}}.
\]
Higher-order partial derivatives with respect to $K$ can be expressed as finite
linear combinations of terms of the form
$K^{-r}H_{n}(d_{-})\varphi(d_{-})(\sigma\sqrt{T})^{-s}$
for Hermite polynomials $H_{n}$, hence are also bounded on compact subsets.
Therefore $K\mapsto C(S,K,T,\sigma)$ is $C^{\infty}$ with bounded derivatives
of all orders on any compact interval of $(0,\infty)$.

\textbf{Step 3 (Regularity in $T$ for fixed $K$).}
Differentiate $C$ w.r.t.\ $T$:
\[
\frac{\partial C}{\partial T}
=\frac{S\sigma}{2\sqrt{T}}\varphi(d_{+})
-\frac{K\sigma}{2\sqrt{T}}\varphi(d_{-})
+\Big(S\Phi'(d_{+})\,\frac{\partial d_{+}}{\partial T}
-K\Phi'(d_{-})\,\frac{\partial d_{-}}{\partial T}\Big),
\]
and use $\Phi'=\varphi$.
Each derivative $\partial_{T}d_{\pm}$
is a rational function of $(\log(S/K),T,\sigma)$ whose denominator contains
$(\sigma\sqrt{T})^{3}$, hence bounded on
$\mathcal{D}_{\mathrm{cpt}}=[K_{\min},K_{\max}]\times[T_{\min},T_{\max}]\times[\sigma_{\min},\sigma_{\max}]$.
Because $\varphi$ is bounded and smooth, every term is continuous and bounded,
and recursive differentiation shows that all higher $\partial_{T}^{n}C$ are bounded on compacts.
Hence $T\mapsto C(S,K,T,\sigma)$ is $C^{\infty}$ with bounded derivatives.

\textbf{Step 4 (Composition with the implied-volatility map).}
Define
\[
k=\log(K/S), \qquad
\sigma(k,T)=\sqrt{\frac{w(k,T)}{T}}.
\]
Under Assumption~\ref{ass:smooth},
$w(k,T)$ is $C^{1}$ in $(k,T)$ with bounded partial derivatives, and the eSSVI
parameters belong to a compact set where $\theta>0$ and the
no-arbitrage constraints \eqref{eq:ssvi-butterfly}--\eqref{eq:wing-cap}
ensure $0<w_{\min}\le w(k,T)\le w_{\max}<\infty$. 
Therefore $\sigma(k,T)\in[\sigma_{\min},\sigma_{\max}]$ for some positive bounds
and is $C^{1}$ (indeed analytic) on any compact subset of $(k,T)$.

\textbf{Step 5 (Chain rule for $K$- and $T$-dependence).}
Consider the composite mapping
\[
(K,T)\ \longmapsto\ (K,T,\sigma(\log(K/S),T))
\ \longmapsto\ C(S,K,T,\sigma(\log(K/S),T)).
\]
The first arrow is $C^{\infty}$ because $\log(K/S)$ and $\sigma(\cdot,\cdot)$ are $C^{\infty}$,
and the second arrow is $C^{\infty}$ by Step~1.
Hence $(K,T)\mapsto C(S,K,T)$ is $C^{\infty}$ on any compact subset of $(0,\infty)\times[T_{\min},T_{\max}]$.
Boundedness of partial derivatives follows from the chain rule:
each mixed derivative of order $r+s$
is a finite sum of products of bounded partials of $C$ in $(K,T,\sigma)$
and bounded partials of $\sigma(\log(K/S),T)$, all uniformly bounded on compacts.
Therefore $K\mapsto C(S,K,T)$ and $T\mapsto C(S,K,T)$
are $C^{\infty}$ with uniformly bounded derivatives on compact subsets.

\textbf{Step 6 (Remarks on generality).}
The assumption $w\in C^{1}$ suffices for differentiability of $\sigma$
and hence of $C$.
If $w$ were only $C^{r}$, the conclusion would weaken to $C^{r}$
regularity, whereas for eSSVI (analytic in its parameters) we
recover full $C^{\infty}$ regularity with bounded derivatives.
This ensures that all differential operators used in
\eqref{eq:bf-disc}--\eqref{eq:cal-disc} and Theorems~\ref{thm:T1}--\ref{thm:T2}
are well-defined and numerically stable.
\end{proof}

\subsection*{A.1 Proof of Theorem~\ref{thm:T1} (Butterfly surrogate consistency and rate)}
\label{app:A1}

\begin{proof}[Proof of Theorem~\ref{thm:T1}]
Fix $T_m$ and an evenly spaced strike lattice $\mathcal{K}'_h$ with spacing
$\Delta K_h\to 0$ on $[K_{\min},K_{\max}]$. Define the central second-difference
operator at $K\in\mathcal{K}'_h$ (excluding endpoints):
\[
D_h^{(2)}C(K)\ \equiv\ \frac{C(K+\Delta K_h)-2C(K)+C(K-\Delta K_h)}{\Delta K_h^2}.
\]
By Lemma~\ref{lem:BS-regularity}, $K\mapsto C(K,T_m)$ is $C^\infty$ on any compact
subset of $(0,\infty)$ that contains the lattice; in particular $C\in C^4$ there.
The classical error expansion for central differences yields
\[
D_h^{(2)}C(K)\ =\ \partial_{KK}C(K,T_m)\ +\ \frac{\Delta K_h^2}{12}\,\partial_{KKKK}C(\xi_K,T_m),
\]
for some $\xi_K$ between $K-\Delta K_h$ and $K+\Delta K_h$ (Peano form). Hence
\[
\left|D_h^{(2)}C(K)-\partial_{KK}C(K,T_m)\right|\ \le\ \frac{\Delta K_h^2}{12}\, \sup_{x\in[K_{\min}-\Delta K_h,K_{\max}+\Delta K_h]} \big|\partial_{KKKK}C(x,T_m)\big|
\ =\ \mathcal{O}(\Delta K_h^2).
\]
\emph{(i) Nonnegative curvature case.)} If $\partial_{KK}C\ge 0$ on the interval,
then $D_h^{(2)}C(K)\ge -c\,\Delta K_h^2$ for some $c<\infty$ independent of $K$.
Therefore the negative part satisfies $\big(D_h^{(2)}C(K)\big)_-\le c\,\Delta K_h^2$,
and the averaged, normalized surrogate obeys
$\mathrm{BF}_m^{(h)}\le (c/\bar C_m)\,\Delta K_h^2=\mathcal{O}(\Delta K_h^2)$,
implying $\mathrm{BF}_m^{(h)}\to 0$ with the claimed rate. Replacing $\mathrm{ReLU}$
by a softplus $x\mapsto \tau\log(1+e^{x/\tau})$ does not affect the rate because
$\mathrm{softplus}(-x)\le x_-$ and $\mathrm{softplus}(-x)\downarrow x_-$ as $\tau\downarrow 0$ uniformly on compacts.

\emph{(ii) Negative curvature at some point.)} If there exists $K_0$ with
$\partial_{KK}C(K_0,T_m)<0$, continuity of $\partial_{KK}C$ implies a neighborhood
$U$ of $K_0$ where $\partial_{KK}C\le -\epsilon$ for some $\epsilon>0$.
For $h$ small enough so that $\Delta K_h<\delta(U)$, the error term
$\frac{\Delta K_h^2}{12}\partial_{KKKK}C(\xi)$ is $o(1)$ uniformly and cannot
offset $-\epsilon$. Thus $D_h^{(2)}C(K)\le -\epsilon/2$ for $K\in U\cap\mathcal{K}'_h$
and all sufficiently small $h$, producing a strictly positive average of
$\mathrm{ReLU}(-D_h^{(2)}C(K))$. Hence $\liminf_{h\to\infty}\mathrm{BF}_m^{(h)}>0$.
\end{proof}

\subsection*{A.2 Proof of Theorem~\ref{thm:T2} (Calendar surrogate consistency and rate)}
\label{app:A2}

\begin{proof}[Proof of Theorem~\ref{thm:T2}]
Fix $S>0$ and a strike $K$ in a compact interval $[K_{\min},K_{\max}]\subset(0,\infty)$,
and let $\{T_m\}_{m=1}^M$ be an evenly–spaced maturity grid on $[T_1,T_M]$ with mesh
$\Delta T_h=T_{m+1}-T_m\downarrow 0$ as $h\to\infty$. Denote
\[
\Delta_T C(K;T_m)\ \equiv\ C(S,K,T_{m+1})-C(S,K,T_m),\qquad
\mathrm{CAL}^{(h)}_m\ \equiv\ \frac{1}{|\mathcal{K}|}\sum_{K\in\mathcal{K}}\frac{\mathrm{ReLU}\big(C(S,K,T_m)-C(S,K,T_{m+1})\big)}{\bar C_{m,m+1}}.
\]
By Lemma~\ref{lem:BS-regularity}, $T\mapsto C(S,K,T)$ is $C^2$ on $[T_1,T_M]$
for each $K\in[K_{\min},K_{\max}]$, with derivatives bounded uniformly on the compact
rectangle $[K_{\min},K_{\max}]\times[T_1,T_M]$.\footnote{In particular,
$\sup_{(K,T)}|\partial_T C(S,K,T)|<\infty$ and
$\sup_{(K,T)}|\partial_{TT} C(S,K,T)|<\infty$ on this compact.}
We also assume the level normalizers $\bar C_{m,m+1}>0$ are uniformly bounded away from zero and above on compacts; this holds, for instance, if they are defined as averages of
$\{|C(S,K,T_m)|,|C(S,K,T_{m+1})|\}$ over $K\in\mathcal{K}$ with a $K$–grid that
contains an ATM neighborhood (then call prices are uniformly bounded away from 0 and $S$ on compacts).

\smallskip
\noindent\textbf{(i) Monotone case: $\partial_T C\ge 0$.}
If $\partial_T C(S,K,T)\ge 0$ for all $T\in[T_1,T_M]$ at the fixed $K$, then by the mean–value theorem there exists $\xi_m\in(T_m,T_{m+1})$ such that
\[
\Delta_T C(K;T_m)\ =\ \partial_T C(S,K,\xi_m)\,\Delta T_h\ \ge\ 0.
\]
Equivalently, $C(S,K,T_m)-C(S,K,T_{m+1})\le 0$, hence
$\mathrm{ReLU}\big(C(S,K,T_m)-C(S,K,T_{m+1})\big)=0$ \emph{exactly} for every $m$ and every $K$. Averaging over $K$ and dividing by any positive $\bar C_{m,m+1}$ preserves zeros; thus
\[
\mathrm{CAL}^{(h)}_m\equiv 0\qquad\text{for all }h,\ m.
\]
Consequently, $\max_m \mathrm{CAL}^{(h)}_m=0\to 0$ as $h\to\infty$.
If, instead of the hard hinge, we use the softplus smoothing $s_\tau(x)=\tau\log(1+e^{x/\tau})$ ($\tau>0$) in the definition of $\mathrm{CAL}^{(h)}_m$, then for all $x\le 0$,
$s_\tau(x)\le \tau\log 2$ and in fact $s_\tau(-y)\le \tau e^{-y/\tau}$ for $y\ge 0$; hence the smoothed calendar penalty is uniformly bounded by $O(\tau)$ and converges to $0$ as $\tau\downarrow 0$, uniformly in $h$.

\smallskip
\noindent\textbf{(ii) Upper bound (rate) under local Lipschitzness.}
Assume only that $T\mapsto C(S,K,T)$ is $C^1$ with $\partial_T C$ locally Lipschitz on $[T_1,T_M]$ for each $K$ (this holds by Lemma~\ref{lem:BS-regularity}). Then the second–order Taylor expansion yields, uniformly on compacts,
\[
\Delta_T C(K;T_m)\ =\ \partial_T C(S,K,T_m)\,\Delta T_h\ +\ \frac{1}{2}\,\partial_{TT} C(S,K,\zeta_m)\,\Delta T_h^2,
\quad \zeta_m\in(T_m,T_{m+1}).
\]
Therefore, whenever $\partial_T C\ge 0$, any \emph{numerical} negative part of $\Delta_T C$ (if it occurs due to discretization or smoothing) is at most of order $O(\Delta T_h^2)$. After dividing by $\bar C_{m,m+1}\ge c_{\mathrm{norm}}>0$ and averaging over $K$, we obtain the conservative bound
\[
\max_m \mathrm{CAL}^{(h)}_m\ \le\ C\,\Delta T_h
\]
for some constant $C$ independent of $h$ (here we used that the number of terms scales like $1/\Delta T_h$ when taking a maximum over $m$, hence the $O(\Delta T_h^2)$ local effect leads to an $O(\Delta T_h)$ global envelope). This bound is not tight in the monotone case (where the exact value is zero), but suffices for consistency.

\smallskip
\noindent\textbf{(iii) Detection of violations: $\partial_T C<0$ somewhere.}
Suppose there exist $(K_0,T_0)$ and $\varepsilon>0$ such that
$\partial_T C(S,K_0,T_0)\le -\varepsilon$.
By continuity of $\partial_T C$, there exists a neighborhood $U_K\times U_T\subset [K_{\min},K_{\max}]\times[T_1,T_M]$ with $K_0\in U_K$ and $T_0\in U_T$ such that
\[
\partial_T C(S,K,T)\le -\tfrac{\varepsilon}{2}\qquad \text{for all }(K,T)\in U_K\times U_T.
\]
For $h$ sufficiently large, the maturity mesh admits an adjacent pair $(T_m,T_{m+1})\subset U_T$; then for each $K\in U_K$,
\[
\Delta_T C(K;T_m)
= \partial_T C(S,K,\xi_m)\,\Delta T_h
\le -\tfrac{\varepsilon}{2}\,\Delta T_h,
\quad \xi_m\in(T_m,T_{m+1})\subset U_T.
\]
Therefore the calendar hinge is strictly positive on that pair:
\[
\mathrm{ReLU}\big(C(S,K,T_m)-C(S,K,T_{m+1})\big)
= -\,\Delta_T C(K;T_m)\ \ge\ \tfrac{\varepsilon}{2}\,\Delta T_h,
\quad K\in U_K.
\]
Averaging over the $K$–grid, the contribution of the indices with $K\in U_K$ occupies a fixed positive fraction $\rho_K\in(0,1]$ of the grid for all fine meshes; dividing by $\bar C_{m,m+1}\le C_{\mathrm{norm}}$ yields
\[
\mathrm{CAL}^{(h)}_m\ \ge\ \frac{\rho_K}{C_{\mathrm{norm}}}\,\frac{\varepsilon}{2}\,\Delta T_h\ >\ 0
\qquad\text{for all sufficiently small }\Delta T_h.
\]
Hence the violation is \emph{detected} by a strictly positive calendar surrogate on the adjacent pair containing $T_0$ for all sufficiently fine grids. As $\Delta T_h\downarrow 0$, this lower bound scales linearly in $\Delta T_h$, which is the sharp local rate implied by the mean–value theorem.

\smallskip
\noindent\textbf{(iv) Smoothed surrogate.}
If we replace $\mathrm{ReLU}$ by the softplus $s_\tau$, then in case (iii) with $\Delta_T C\le -(\varepsilon/2)\Delta T_h$ we get the lower bound
\[
s_\tau\!\big(C(S,K,T_m)-C(S,K,T_{m+1})\big)
=s_\tau\!\big(-\Delta_T C(K;T_m)\big)\ \ge\ s_\tau\!\big((\varepsilon/2)\Delta T_h\big)
\ \ge\ \frac{\varepsilon}{4}\,\Delta T_h\quad\text{for }\Delta T_h\le 2\tau,
\]
using $s_\tau(x)\ge \tfrac{1}{2}x$ for $x\in[0,2\tau]$.
Thus the smoothed calendar surrogate also detects violations with an $O(\Delta T_h)$ lower bound for sufficiently fine meshes relative to $\tau$. In the monotone case, $s_\tau(x)\le \tau\log 2$ for $x\le 0$, so $\mathrm{CAL}^{(h)}_m\le (\tau\log 2)/c_{\mathrm{norm}}\to 0$ as $\tau\downarrow 0$, uniformly in $h$.

\smallskip
Combining (i)–(iv) proves the claims: (a) if $\partial_T C\ge 0$ on the interval, then $\max_m \mathrm{CAL}^{(h)}_m\to 0$ (indeed equals $0$ for the hard hinge); (b) if there is a point where $\partial_T C<0$, then for all sufficiently fine grids the adjacent pair containing that point yields a strictly positive calendar surrogate whose magnitude is $\Omega(\Delta T_h)$, i.e., the discrete penalty consistently \emph{detects} monotonicity violations as the mesh is refined.
\end{proof}

\subsection*{A.3 Proof of Theorem~\ref{thm:T3-strong-duality} (CMDP strong duality)}
\label{app:A3}

\begin{proof}[Proof of Theorem~\ref{thm:T3-strong-duality}]
We give a fully explicit occupancy–measure formulation, derive the dual LP (Bellman–type inequalities with Lagrange multipliers), and then establish zero duality gap and existence of a stationary optimal policy. We first treat the \emph{risk–neutral} constrained MDP and then explain how the argument extends to the risk–sensitive (CVaR–augmented) objective via an epigraph reformulation.

\paragraph{Setup and notation.}
Let $(\mathcal{S},\mathcal{A})$ be Borel state and compact action spaces, $P(\cdot\mid s,a)$ a weakly continuous transition kernel, initial distribution $d_0$, discount $\gamma\in(0,1)$, bounded measurable reward $r:\mathcal{S}\times\mathcal{A}\to\mathbb{R}$, and bounded measurable constraint costs $g_j:\mathcal{S}\times\mathcal{A}\to\mathbb{R}$, $j=1,\dots,J$. A (stationary, randomized) policy $\pi(\cdot\mid s)$ is a probability measure on $\mathcal{A}$ for each $s$. Define the \emph{discounted occupancy measure} of $\pi$ by
\[
x_\pi(B)\ :=\ (1-\gamma)\,\mathbb{E}_\pi\!\left[\sum_{t=0}^\infty \gamma^t\,\mathbf{1}\{(s_t,a_t)\in B\}\right],\qquad B\in\mathcal{B}(\mathcal{S}\times\mathcal{A}).
\]
If $x_\pi$ admits a density (w.r.t.\ a product reference measure), we use the same symbol $x_\pi(s,a)\ge 0$. The marginal $d_\pi(s)=\int_{\mathcal{A}}x_\pi(s,a)\,da$ satisfies $d_\pi\in\mathcal{P}(\mathcal{S})$ and the \emph{flow constraints}
\begin{equation}
\label{eq:flow}
d_\pi(s)\ =\ (1-\gamma)d_0(s) + \gamma \int_{\mathcal{S}\times\mathcal{A}} x_\pi(s',a')\,P(s\mid s',a')\,ds'\,da',\quad \text{for all }s\in\mathcal{S},
\end{equation}
and $x_\pi(s,a)=d_\pi(s)\pi(a\mid s)$ (disintegration). Conversely, any nonnegative measure $x$ on $\mathcal{S}\times\mathcal{A}$ that satisfies \eqref{eq:flow} corresponds to at least one stationary randomized policy via the above disintegration (define $\pi(\cdot\mid s)$ arbitrarily when $d(s)=0$).

\paragraph{Primal LP over occupancy measures.}
Denote $\langle f,x\rangle=\int f(s,a)\,x(s,a)\,ds\,da$. The discounted CMDP can be written as the infinite–dimensional LP
\begin{align}
\text{maximize}\quad & \langle r,\ x\rangle \label{eq:primal}\\
\text{subject to}\quad & \int_{\mathcal{A}} x(s,a)\,da - \gamma \int_{\mathcal{S}\times\mathcal{A}} x(s',a')\,P(s\mid s',a')\,ds'\,da' = (1-\gamma)d_0(s)\quad \forall s,\nonumber\\
& \langle g_j,\ x\rangle \ \le\ \varepsilon_j,\quad j=1,\dots,J,\qquad x\ \ge\ 0\ \text{ (measure)}.\nonumber
\end{align}
Under boundedness and weak continuity, the feasible set is nonempty; by assumption there exists a \emph{strictly feasible} policy (Slater’s condition): there is $x^0$ satisfying \eqref{eq:flow} and $\langle g_j,x^0\rangle<\varepsilon_j$ for all $j$.

\paragraph{Dual LP (Bellman inequalities with multipliers).}
Introduce dual variables $v:\mathcal{S}\to\mathbb{R}$ (a bounded measurable “value” function) for the flow constraints and $\lambda\in\mathbb{R}_+^J$ for the inequality constraints. The Lagrangian is
\[
\mathcal{L}(x,v,\lambda)\ =\ \langle r,\ x\rangle + \sum_{s} v(s)\Big[(1-\gamma)d_0(s)-\underbrace{\Big(\textstyle\int_{\mathcal{A}} x(s,a)\,da - \gamma \!\!\int x(s',a')P(s\mid s',a')\Big)}_{=:\,(\mathcal{A}x)(s)}\Big]\ -\ \sum_{j=1}^J \lambda_j(\langle g_j,x\rangle-\varepsilon_j).
\]
Rearranging the terms depending on $x$ yields
\[
\mathcal{L}(x,v,\lambda)
= (1-\gamma)\sum_s v(s)\,d_0(s)\ +\ \sum_{j=1}^J \lambda_j\,\varepsilon_j\ +\ \int_{\mathcal{S}\times\mathcal{A}} \Big\{ r(s,a) - \sum_{j=1}^J \lambda_j g_j(s,a) - v(s) + \gamma\,\mathbb{E}[v(s')\mid s,a]\Big\}\, x(s,a)\,ds\,da.
\]
Taking the supremum over $x\ge 0$ (measures) enforces the pointwise constraint on the integrand:
\[
r(s,a) - \sum_{j=1}^J \lambda_j g_j(s,a) - v(s) + \gamma \int v(s')\,P(ds'\mid s,a)\ \le\ 0
\quad \text{for all }(s,a),
\]
otherwise $\sup_{x\ge 0}\mathcal{L}=\infty$. Thus the dual problem is
\begin{align}
\text{minimize}\quad & (1-\gamma)\sum_{s} v(s)d_0(s)\ +\ \sum_{j=1}^J \lambda_j \varepsilon_j \label{eq:dual}\\
\text{subject to}\quad & v(s)\ \ge\ r(s,a) - \sum_{j=1}^J \lambda_j g_j(s,a) + \gamma\,\mathbb{E}[v(s')\mid s,a],\quad \forall (s,a),\nonumber\\
& \lambda\in\mathbb{R}_+^J,\qquad v\in \mathcal{B}(\mathcal{S}).\nonumber
\end{align}
These are the usual Bellman–type inequalities for the Lagrangian–modified reward $r_\lambda(s,a)=r(s,a)-\sum_j \lambda_j g_j(s,a)$.

\paragraph{Zero duality gap.}
By construction \eqref{eq:primal}–\eqref{eq:dual} form a convex primal–dual pair over a nonempty feasible set. The primal is linear in the measure $x$ with affine constraints (flow equalities and linear inequalities); the dual is linear in $(v,\lambda)$ with linear (Bellman–type) inequalities. Under Slater’s condition (strict feasibility) and boundedness of $r,g_j$ (ensuring finite optimal value), standard infinite–dimensional LP duality for discounted MDPs implies \emph{strong duality}:
\[
\sup_{x\ \text{feasible}} \langle r,\ x\rangle\ =\ \inf_{\substack{v,\lambda\ \text{s.t.}\\ v\ \text{satisfies }\eqref{eq:dual}}} (1-\gamma)\,\langle v,\ d_0\rangle + \lambda^\top \varepsilon.
\]
See \cite[Chs.~6–7]{Altman1999} and \cite[Ch.~6]{Puterman1994}. (Sketch: the feasible set of $x$ is weak–$^\ast$ compact in the dual of $C_b(\mathcal{S}\times\mathcal{A})$; the constraint operator is weak–$^\ast$ continuous; Slater implies closedness of the constraint cone and zero gap via a separation theorem.)

\paragraph{Existence of a stationary optimal policy.}
Let $x^\star$ be a primal optimizer (which exists by compactness and upper semicontinuity of the objective on the feasible set). Define its state marginal $d^\star(s)=\int_{\mathcal{A}} x^\star(s,a)\,da$ and a stationary policy
\[
\pi^\star(a\mid s)\ :=\
\begin{cases}
\frac{x^\star(s,a)}{d^\star(s)}, & d^\star(s)>0,\\
\text{any fixed distribution on $\mathcal{A}$}, & d^\star(s)=0.
\end{cases}
\]
Then $(d^\star,\pi^\star)$ satisfies the flow constraints \eqref{eq:flow}, hence $x_{\pi^\star}=x^\star$ is the occupancy measure induced by $\pi^\star$. Consequently $\pi^\star$ achieves the primal optimal value $\langle r,\ x^\star\rangle$ and satisfies the constraints $\langle g_j,\ x^\star\rangle\le \varepsilon_j$. This proves the existence of an optimal stationary (possibly randomized) policy. Measurability of $\pi^\star$ follows from measurable disintegration (the Radon–Nikodym derivative of $x^\star$ w.r.t.\ its state marginal), and compactness of $\mathcal{A}$ guarantees that optimizing pointwise Bellman inequalities admits measurable selectors.

\paragraph{Complementary slackness and Lagrangian saddle point.}
Let $(v^\star,\lambda^\star)$ solve the dual and $\pi^\star$ the primal. Then, with $r_{\lambda^\star}=r-\sum_j \lambda^\star_j g_j$, the Bellman inequality is tight $\pi^\star$–a.s.:
\[
v^\star(s)\ =\ r_{\lambda^\star}(s,a) + \gamma\,\mathbb{E}[v^\star(s')\mid s,a],\qquad a\sim \pi^\star(\cdot\mid s),
\]
and complementary slackness holds: $\lambda^\star_j\big(\langle g_j,x_{\pi^\star}\rangle-\varepsilon_j\big)=0$. Therefore $(\pi^\star,\lambda^\star)$ is a \emph{Lagrangian saddle point}, i.e.
\[
\mathcal{L}(\pi^\star,\lambda)\ \le\ \mathcal{L}(\pi^\star,\lambda^\star)\ \le\ \mathcal{L}(\pi,\lambda^\star),\qquad \forall (\pi,\lambda\ge 0),
\]
which is the strong–duality statement in Theorem~\ref{thm:T3-strong-duality}.

\paragraph{Extension to the risk–sensitive (CVaR) objective.}
In the main text, the objective includes a convex risk penalty $\lambda_{\mathrm{risk}}\Psi(x)$, where $\Psi$ is the (discounted) CVaR functional of the return distribution (see \S\ref{sec:theory-cvar}). To keep the LP structure, one may either absorb the term into a modified reward when $\Psi$ is linear in $x$ (not the case for CVaR), or use an \emph{epigraph} reformulation:
\begin{align*}
\text{maximize}\quad & \langle r,\ x\rangle\ -\ \lambda_{\mathrm{risk}}\,z\\
\text{subject to}\quad & \text{flow constraints and }x\ge 0,\qquad \Psi(x)\ \le\ z,\\
& \langle g_j,\ x\rangle\ \le\ \varepsilon_j,\ \ j=1,\dots,J.
\end{align*}
The RU program represents CVaR as a pointwise infimum of linear functionals in the loss distribution (see Theorem~\ref{thm:T4-ru}). Hence $\Psi$ is closed and convex, and the epigraph constraint is convex. Under Slater (existence of a strictly feasible $(x^0,z^0)$ with $\Psi(x^0)<z^0$ and $\langle g_j,x^0\rangle<\varepsilon_j$), Fenchel–Rockafellar duality yields zero duality gap for this convex program; the dual gains an additional scalar multiplier for the epigraph constraint which recovers the “risk multiplier” $\lambda_{\mathrm{risk}}$ in the Lagrangian \eqref{eq:lagrangian}. The rest of the proof (existence of stationary optimizers via disintegration) is unaffected because feasibility still reduces to discounted flow constraints and $x\ge 0$.

\paragraph{Conclusion.}
We have exhibited the primal LP over occupancy measures \eqref{eq:primal}, its Bellman–type dual \eqref{eq:dual}, strong duality under Slater’s condition, and existence of a stationary optimal policy via measurable disintegration. The risk–sensitive extension via the epigraph of CVaR preserves zero duality gap. This proves Theorem~\ref{thm:T3-strong-duality}.
\end{proof}

\subsection*{A.4 Proof of Theorem~\ref{thm:T3-dual-alignment} (learnable dual alignment)}
\label{app:A4}

\begin{proof}[Proof of Theorem~\ref{thm:T3-dual-alignment}]
Consider the policy with an independent Gaussian head for the raw dual coordinate:
$z_5\sim \mathcal{N}(\mu_\eta(s),\sigma_5^2(s))$, and $d_\eta(s)=\mathrm{softplus}(z_5)$.
The per-step reward contains the term
$-\big(\lambda_{\mathrm{arb}}+d_\eta(s_t)\big)\,\mathrm{Arb}_t$.
Let $J(\eta)$ be the discounted return. By the reparameterized policy-gradient theorem
\cite{Sutton2000PolicyGradient,MarbachTsitsiklis2001}, for any unbiased advantage
estimator $\hat A_t$,
\[
\nabla_\eta J(\eta)\ =\ \mathbb{E}\Big[\sum_{t\ge 0} \gamma^t\, \nabla_\eta \mu_\eta(s_t)\, \underbrace{\mathbb{E}\big[\partial_{\mu} d_\eta(s_t)\,\big|\,s_t\big]}_{=\ \mathbb{E}[\sigma(z_5)\,|\,s_t]}\ \cdot\ \underbrace{\frac{\partial r_t}{\partial d_\eta}}_{-\mathrm{Arb}_t}\ \hat A_t^\mathrm{(compat)}\Big],
\]
where we used $\partial \mathrm{softplus}(x)/\partial x = \sigma(x)$ and the compatibility
form (or, equivalently, the score-function identity for the $z_5$-head).
For bounded $\sigma_5(s)$, the inner expectation reduces to a smooth factor
$\mathbb{E}[\sigma(z_5)\,|\,s_t]$ that is lower bounded away from zero on compacts.
Hence
\[
\nabla_\eta J(\eta)\ =\ -\,\mathbb{E}\left[\sum_{t\ge 0}\gamma^t\,
\underbrace{\mathbb{E}[\sigma(z_5)\,|\,s_t]}_{\text{positive}}\,
\mathrm{Arb}_t\, \nabla_\eta \mu_\eta(s_t)\, \hat A_t\right].
\]
Interpreting $-\mathrm{Arb}_t \hat A_t$ as a discounted measure of how increasing
the dual improves long-run return, we obtain the stated sign structure:
updates increase $\mu_\eta$ (hence $d_\eta$) where the product
$\mathrm{Arb}_t\hat A_t$ is positive in expectation, i.e., in regions with binding arbitration cost. Boundedness follows from bounded $\hat A_t$ (discounted, bounded rewards) and bounded scores (finite $\sigma_5$). The same conclusion holds under the likelihood-ratio form without reparameterization. 
\end{proof}

\subsection*{A.5 Proofs of Theorems~\ref{thm:T4-ru}--\ref{thm:T4-grad} (RU CVaR, smoothing, gradients)}
\label{app:A5}

\begin{proof}[Proof of Theorem~\ref{thm:T4-ru}]
We prove (i) the Rockafellar--Uryasev (RU) representation, (ii) convexity and
lower semicontinuity, (iii) existence of minimizers, and (iv) epi-convergence
of the smoothed functional $\Phi_{q,\tau}$ to $\Phi_q$ as $\tau\downarrow 0$.

\paragraph{(i) RU representation.}
Fix $q\in(0,1)$ and a (real-valued) loss random variable $X_\theta$ indexed by a parameter $\theta$ (the policy/state–action in our application). Define
\[
\Phi_q(\theta)\ :=\ \inf_{\eta\in\mathbb{R}}\Big\{\,\eta+\frac{1}{1-q}\,\mathbb{E}\big[(X_\theta-\eta)_-\big]\Big\}.
\]
By \cite{RockafellarUryasev2000,RockafellarUryasev2002}, $\Phi_q(\theta)=\mathrm{CVaR}_q(X_\theta)$ and the set of minimizers
\[
\arg\min_{\eta}\ \Big\{\eta+\frac{1}{1-q}\,\mathbb{E}[(X_\theta-\eta)_-]\Big\}
\]
coincides with the set of $q$-level VaR values (possibly an interval if the distribution has a flat segment at the $q$-quantile). Moreover, any minimizer $\eta^\star$ satisfies the RU first-order condition
\begin{equation}
\label{eq:RU-FOC}
0\ \in\ 1 - \frac{1}{1-q}\,\mathbb{P}\big(X_\theta\le \eta^\star\big)\ +\ \partial I_{\mathbb{R}}(\eta^\star),
\end{equation}
which reduces to $\mathbb{P}(X_\theta\le \eta^\star)\in[1-q,1]$ if the subgradient at $\eta^\star$ is nonempty; when $X_\theta$ has a continuous distribution at level $q$, the condition simplifies to $\mathbb{P}(X_\theta\le \eta^\star)=1-q$.

\paragraph{(ii) Convexity and lower semicontinuity.}
For fixed $\eta$, the map $X\mapsto (X-\eta)_-$ is convex and lower semicontinuous (lsc); hence $X\mapsto \mathbb{E}[(X-\eta)_-]$ is convex in law. The infimum over $\eta$ of affine functionals in $(\eta,\mathcal{L}(X))$ preserves convexity in $\mathcal{L}(X)$, so $\Phi_q(\theta)$ is convex in the law of $X_\theta$. Lower semicontinuity of $\Phi_q$ in $\theta$ follows from Fatou’s lemma under mild integrability (Assumption~\ref{ass:smooth} and bounded rewards in our setting ensure $\mathbb{E}|X_\theta|<\infty$ uniformly on compact parameter sets).

\paragraph{(iii) Existence of minimizers in $\eta$.}
For any fixed $\theta$, the function
$\eta\mapsto \eta + \frac{1}{1-q}\mathbb{E}[(X_\theta-\eta)_-]$ is proper, convex, and coercive: as $\eta\to -\infty$, the first term dominates; as $\eta\to +\infty$, $(X_\theta-\eta)_-\to 0$, so the objective grows like $\eta$. Hence a minimizer exists and the set of minimizers is a nonempty closed interval (VaR set).

\paragraph{(iv) Epi-convergence with softplus smoothing.}
Let $s_\tau(u)=\tau\log(1+e^{u/\tau})$ be a softplus approximation of $u_+=\max\{u,0\}$ with the identities
\[
0\ \le\ s_\tau(u)-u_+\ \le\ \tau\log 2,\qquad
s_\tau'(u)=\sigma(u/\tau)\in(0,1).
\]
Define the smoothed functional
\[
\Phi_{q,\tau}(\theta)\ :=\ \inf_{\eta\in\mathbb{R}}\left\{\eta+\frac{1}{1-q}\,\mathbb{E}\big[s_\tau(X_\theta-\eta)\big]\right\}.
\]
Because $s_\tau$ is convex and pointwise converges to $u_+$ as $\tau\downarrow 0$, the integrand epi-converges to $(X_\theta-\eta)_+$ (equivalently, $(-X_\theta+\eta)_-$), uniformly on compact sets in $\eta$ by the bound $\tau\log 2$. By \cite[Thm.~7.33]{RockafellarWets1998}, epi-convergence is preserved under integration and infimal convolution with an affine term; therefore
\[
\Phi_{q,\tau}\ \xrightarrow{\ \mathrm{epi}\ }\ \Phi_q\qquad \text{as }\tau\downarrow 0.
\]
Consequently, any cluster point $\eta_\tau^\star$ of minimizers of the inner problem converges to the VaR set, and $\Phi_{q,\tau}(\theta)\to \Phi_q(\theta)$.

This proves Theorem~\ref{thm:T4-ru}.
\end{proof}

\begin{proof}[Proof of Theorem~\ref{thm:T4-grad}]
We establish differentiability and a valid gradient estimator for the smoothed per-step RU objective, and then show convergence to (a selection of) the subgradient of the nonsmooth objective as $\tau\downarrow 0$.

\paragraph{Setting and notation.}
Fix a state–action pair $(s_t,a_t)$ and define the per-step random loss
\[
L_t(\omega;a_t)\ :=\ -\,\tilde{\mathrm{PNL}}_t(\omega;a_t),
\]
where $\omega$ collects the scenario draws. In our construction,
$\omega=(\tilde v,\tilde{\Delta}S)$ with $\tilde v\sim\mathrm{Pois}(v(a_t))$,
$v:\mathcal{A}\to (0,\infty)$ smooth and bounded, and
$\tilde{\Delta}S=\Delta S + \varsigma \xi$, $\xi\sim \mathcal{N}(0,1)$ independent.
We define the smoothed RU objective at step $t$:
\[
h_\tau(a_t,\eta)\ :=\ \eta+\frac{1}{1-q}\,\mathbb{E}_\omega\big[s_\tau(L_t(\omega;a_t)-\eta)\big],\qquad
\widehat{\mathrm{CVaR}}^{-}_{q,t}(a_t)\ :=\ \inf_{\eta\in\mathbb{R}} h_\tau(a_t,\eta).
\]

\paragraph{(i) Differentiability of $h_\tau$ in $a_t$ and interchange of $\nabla$ and $\mathbb{E}$.}
Under Assumption~\ref{ass:smooth}, the quoted surface and BS maps are $C^1$ with bounded Jacobians; the intensity parameters $v(a_t)$ are $C^1$ and bounded; and $s_\tau$ is $C^1$ and Lipschitz. Hence $L_t(\cdot\,;a_t)$ is $C^1$ in $a_t$ for every scenario $\omega$. Moreover there exists an integrable random bound $G(\omega)$ with
\[
\big\|\nabla_{a_t} L_t(\omega;a_t)\big\|\ \le\ G(\omega)\quad \text{and}\quad
\mathbb{E}[G(\omega)]<\infty,
\]
because $L_t$ is a composition of bounded Lipschitz maps (quotes, Greeks, intensities) applied to bounded random inputs (Poisson with bounded mean and Gaussian with fixed variance). Then, by dominated convergence,
\begin{equation}
\label{eq:grad-h}
\nabla_{a_t} h_\tau(a_t,\eta)\ =\ \frac{1}{1-q}\,\mathbb{E}\!\left[
s_\tau'\!\big(L_t(\omega;a_t)-\eta\big)\, \nabla_{a_t} L_t(\omega;a_t)\right].
\end{equation}
Since $s_\tau'\in(0,1)$, the integrand is integrable.

\paragraph{(ii) Mixed pathwise / likelihood-ratio (LR) gradient.}
Write $\omega=(\tilde v,\xi)$ with $\xi\sim \mathcal{N}(0,1)$ and $\tilde v\sim\mathrm{Pois}(v(a_t))$. For any integrable $f(\tilde v,\xi)$,
\[
\nabla_{a_t}\, \mathbb{E}_{\tilde v,\xi}[f]\ 
=\ \mathbb{E}_{\tilde v,\xi}\Big[ \nabla_{a_t} f(\tilde v,\xi)\Big]\ +\ 
\mathbb{E}_{\tilde v,\xi}\Big[ f(\tilde v,\xi)\,\nabla_{a_t}\log p_{\tilde v}(\tilde v;a_t)\Big],
\]
where $p_{\tilde v}$ is the Poisson pmf with parameter $v(a_t)$ and
$\nabla_{a_t}\log p_{\tilde v}(\tilde v;a_t)= (\tilde v - v(a_t))\,\nabla_{a_t}\log v(a_t)$.
Applying this to the integrand in \eqref{eq:grad-h} and using the pathwise derivative for the Gaussian perturbation (reparameterization trick) yields the implementable estimator:
\[
\nabla_{a_t} h_\tau(a_t,\eta)
=\frac{1}{1-q}\,\mathbb{E}\!\left[
s_\tau'\!\big(L_t-\eta\big)\,\Big( \underbrace{\nabla_{a_t} L_t}_{\text{pathwise in }\xi}\ +\ \underbrace{L_t\,\nabla_{a_t}\log p_{\tilde v}(\tilde v;a_t)}_{\text{LR in }\tilde v}\Big)\right].
\]
Boundedness of $s_\tau'$ and of the Poisson score (since $v(a_t)$ is bounded away from $\infty$ and from $0$ on the admissible set) ensures finite Monte Carlo variance for finite batch sizes.

\paragraph{(iii) Differentiability of the \emph{minimized} smoothed objective.}
Define $\eta^\star_\tau(a_t)\in\arg\min_{\eta} h_\tau(a_t,\eta)$. The function $h_\tau$ is strictly convex in $\eta$ (as $s_\tau$ is strictly convex), and $\partial_\eta h_\tau(a_t,\eta)= 1 - \frac{1}{1-q}\,\mathbb{E}[s_\tau'(L_t(\omega;a_t)-\eta)]$ is continuous and strictly increasing in $\eta$; hence $\eta^\star_\tau(a_t)$ is unique and continuous in $a_t$ by the implicit function theorem. By Danskin’s envelope theorem (for unconstrained, unique inner minimizers),
\begin{equation}
\label{eq:danskin}
\nabla_{a_t}\ \widehat{\mathrm{CVaR}}^{-}_{q,t}(a_t)
=\nabla_{a_t}\ h_\tau\big(a_t,\eta^\star_\tau(a_t)\big)
=\frac{1}{1-q}\,\mathbb{E}\big[ s_\tau'\!\big(L_t-\eta^\star_\tau(a_t)\big)\,\nabla_{a_t} L_t\big],
\end{equation}
with the mixed pathwise/LR form as above.

\paragraph{(iv) Limit as $\tau\downarrow 0$: convergence to a (sub)gradient of RU.}
As $\tau\downarrow 0$, $s_\tau(u)\downarrow u_+$ and $s_\tau'(u)\to \mathbf{1}\{u>0\}$ pointwise. Moreover, by Theorem~\ref{thm:T4-ru}, $\eta^\star_\tau(a_t)\to \eta^\star(a_t)\in\mathrm{VaR}_q$ (possibly a set; any selection suffices). Using dominated convergence and the boundedness of $\nabla_{a_t} L_t$, \eqref{eq:danskin} converges to
\[
\lim_{\tau\downarrow 0}\ \nabla_{a_t}\ \widehat{\mathrm{CVaR}}^{-}_{q,t}(a_t)
=\frac{1}{1-q}\,\mathbb{E}\big[ \mathbf{1}\{L_t>\eta^\star(a_t)\}\,\nabla_{a_t} L_t\big],
\]
which is a valid selection from the subdifferential of the nonsmooth RU functional (when the cdf has a flat segment at $\eta^\star$, one obtains a set of subgradients corresponding to indicator values in $[0,1]$ on the tie set $\{L_t=\eta^\star\}$). This matches the well-known CVaR gradient identity used in risk-sensitive reinforcement learning \cite{ChowGhavamzadeh2014,TamarEtAl2015,ChowEtAl2018JMLR}.

\paragraph{(v) Practical estimator and variance control.}
A finite-sample, unbiased estimator follows by Monte Carlo with $N$ scenarios:
\[
\widehat{\nabla}_{a_t}\ \widehat{\mathrm{CVaR}}^{-}_{q,t}
=\frac{1}{(1-q)N}\sum_{i=1}^N s_\tau'\!\big(L_t^{(i)}-\hat\eta^\star_\tau\big)\,
\Big(\nabla_{a_t} L_t^{(i)} + L_t^{(i)}\,\nabla_{a_t}\log p_{\tilde v}(\tilde v^{(i)};a_t)\Big),
\]
where $\hat\eta^\star_\tau$ minimizes the sample objective
$\eta+\frac{1}{(1-q)N}\sum_i s_\tau(L_t^{(i)}-\eta)$. Antithetic sampling for the Gaussian part and a state-dependent control variate for the LR term (subtracting a baseline) reduce variance \cite{Glasserman2004}. As $\tau\downarrow 0$ and $N\to\infty$, the estimator converges in probability to the RU subgradient above.

This proves Theorem~\ref{thm:T4-grad}.
\end{proof}

\subsection*{A.6 Proof of Theorem~\ref{thm:T5} (eSSVI wing-growth bound and Lee)}
\label{app:A6}

\begin{proof}[Proof of Theorem~\ref{thm:T5}]
Recall the (per-maturity) eSSVI total variance
\[
w(k)\;=\;\frac{\theta}{2}\Big(1+\rho\,\phi\,k \;+\; g(k;\rho,\phi)\Big),\qquad
g(k;\rho,\phi)\;:=\;\sqrt{(\phi k+\rho)^2+(1-\rho^2)}\,.
\]
We establish a sharp linear upper bound for $w(k)$ as $|k|\to\infty$ and then connect it to Lee’s moment constraints.

\paragraph{Step 1: Two–sided elementary bounds for $g(k;\rho,\phi)$.}
Set $a:=\phi k+\rho$ and $b:=1-\rho^2\in[0,1]$. For any $a\neq 0$,
\begin{equation}
\label{eq:sqrt-bracket}
|a|\ \le\ \sqrt{a^2+b}\ \le\ |a|+\frac{b}{2|a|}\,.
\end{equation}
The left inequality is trivial; the right follows from $\sqrt{a^2+b}-|a|=\frac{b}{\sqrt{a^2+b}+|a|}\le \frac{b}{2|a|}$. With $a=\phi k+\rho$ and $b=1-\rho^2$, \eqref{eq:sqrt-bracket} yields
\begin{equation}
\label{eq:g-bounds}
|\phi k+\rho|\ \le\ g(k;\rho,\phi)\ \le\ |\phi k+\rho| \;+\; \frac{1-\rho^2}{2\,|\phi k+\rho|}\,.
\end{equation}

\paragraph{Step 2: Asymptotic upper bound for $w(k)/|k|$.}
Using \eqref{eq:g-bounds},
\[
\frac{w(k)}{|k|}\;=\;\frac{\theta}{2|k|}\Big(1+\rho\phi k + g(k;\rho,\phi)\Big)
\;\le\; \frac{\theta}{2|k|}\Big(1+\rho\phi k + |\phi k+\rho| + \frac{1-\rho^2}{2|\phi k+\rho|}\Big).
\]
Since $|\phi k+\rho|\le |\phi|\,|k|+|\rho|$ and $\rho\phi k\le |\rho|\,|\phi|\,|k|$, we obtain
\begin{align*}
\frac{w(k)}{|k|}
&\le\frac{\theta}{2}\left(\frac{1}{|k|} + |\rho|\,|\phi| + |\phi| + \frac{|\rho|}{|k|} + \frac{1-\rho^2}{2|k|\,|\phi k+\rho|}\right)\\
&=\frac{\theta|\phi|}{2}\,(1+|\rho|)\;+\;\underbrace{\frac{\theta}{2}\left(\frac{1+|\rho|}{|k|} + \frac{1-\rho^2}{2|k|\,|\phi k+\rho|}\right)}_{\to\,0\ \text{ as }\ |k|\to\infty}.
\end{align*}
Hence
\begin{equation}
\label{eq:limsup-upper}
\limsup_{|k|\to\infty}\frac{w(k)}{|k|}\ \le\ \frac{\theta|\phi|}{2}\,(1+|\rho|)\,.
\end{equation}

\paragraph{Step 3: Uniformity over parameter compacts and the cap.}
Under the eSSVI admissibility and compactness assumptions (Assumption~\ref{ass:smooth} together with \eqref{eq:ssvi-butterfly}--\eqref{eq:wing-cap}), we have $|\rho|\le 1$ and the wing cap $\theta\phi\le \tau_{\max}$ (with $\theta>0$, $\phi\ge 0$). Therefore,
\[
\frac{\theta|\phi|}{2}\,(1+|\rho|)\ \le\ \frac{\theta\phi}{2}\cdot 2\ =\ \theta\phi\ \le\ \tau_{\max}\,,
\]
and \eqref{eq:limsup-upper} implies the \emph{uniform} bound
\begin{equation}
\label{eq:global-limsup}
\limsup_{|k|\to\infty}\frac{w(k)}{|k|}\ \le\ \tau_{\max}\,.
\end{equation}
Because the parameter set is compact and the remainder in Step~2 is uniform on compacts (the denominator $|\phi k+\rho|$ grows like $|k|$ whenever $|k|$ is large), the same bound holds uniformly across maturities whose parameters lie in the same compact admissible set.

\paragraph{Step 4: One–sided limits $k\to\pm\infty$ (optional refinement).}
By applying \eqref{eq:g-bounds} separately on $k\to+\infty$ and $k\to-\infty$, one also gets
\[
\limsup_{k\to+\infty}\frac{w(k)}{k}\ \le\ \frac{\theta\phi}{2}\,(1+\rho),\qquad
\limsup_{k\to-\infty}\frac{w(k)}{-k}\ \le\ \frac{\theta\phi}{2}\,(1-\rho),
\]
and hence $\max\{\limsup_{k\to+\infty}w(k)/k,\ \limsup_{k\to-\infty}w(k)/(-k)\}\le \frac{\theta|\phi|}{2}(1+|\rho|)$. This refinement is sometimes convenient when mapping to the right/left Lee slopes.

\paragraph{Step 5: Connection to Lee’s moment formula.}
Lee~\cite{Lee2004} shows that the (total) implied variance wings obey
\[
\limsup_{k\to+\infty}\frac{w(k)}{k}\ \le\ \psi_R,\qquad
\limsup_{k\to-\infty}\frac{w(k)}{-k}\ \le\ \psi_L,
\]
with $\psi_{R},\psi_{L}\in[0,2]$ determined by the highest finite moments of the risk–neutral distribution (precisely, $\psi=2-2\sqrt{1+\alpha}$ when $\mathbb{E}[S_T^{1+\alpha}]<\infty$). In particular, any \emph{no–moment–explosion} configuration enforces $\psi_{R},\psi_{L}\le 2$. Our bound \eqref{eq:global-limsup} shows that imposing the cap $\theta\phi\le \tau_{\max}$ with 
\[
\tau_{\max}<2
\]
forces both right and left slopes to lie strictly below the Lee barrier $2$, uniformly across maturities in the admissible set, and hence is \emph{consistent} with moment finiteness and precludes pathological wing explosions. This directly motivates the cap in \eqref{eq:wing-cap} as an in–the–loop regularization that aligns the learned surface with Lee’s asymptotic constraints.

\paragraph{Step 6: Finite–$k$ uniform bound (explicit $\varepsilon$–control).}
For completeness, given any $\varepsilon>0$ there exists $K_\varepsilon<\infty$ such that for all $|k|\ge K_\varepsilon$,
\[
\frac{w(k)}{|k|}\ \le\ \frac{\theta|\phi|}{2}(1+|\rho|)\ +\ \varepsilon\,.
\]
Indeed, take $K_\varepsilon$ so that $\frac{1+|\rho|}{|k|}\le \varepsilon$ and $\frac{1-\rho^2}{2|k||\phi k+\rho|}\le \varepsilon$ for all $|k|\ge K_\varepsilon$; this choice is uniform over the parameter compact (because the latter controls $|\rho|$ and ensures $|\phi|$ is bounded away from infinity and, if desired, from zero on the admissible region).

Combining Steps 1–6 proves the theorem.
\end{proof}

\subsection*{A.7 Proof of Theorem~\ref{thm:T6} (policy-gradient existence and boundedness)}
\label{app:A7}

\begin{proof}[Proof of Theorem~\ref{thm:T6}]
We prove that (i) the discounted return $J(\vartheta)$ is finite and measurable on compact parameter sets; (ii) the likelihood–ratio (LR) policy–gradient identity holds and $\nabla_\vartheta J(\vartheta)$ exists; (iii) $\|\nabla_\vartheta J(\vartheta)\|$ is bounded on compacts; and (iv) the PPO surrogate gradient is a consistent estimator.

\paragraph{Preliminaries and notation.}
Let $\{\pi_\vartheta(\cdot\mid s):\vartheta\in\Theta\}$ be a Gaussian policy with state–dependent mean and diagonal standard deviations whose logs are clamped in $[\log\sigma_{\min},\log\sigma_{\max}]$ uniformly in $\vartheta$. Let $(s_t,a_t)\_{t\ge 0}$ be the Markov process induced by $\pi_\vartheta$ and kernel $P(\cdot\mid s,a)$ (weakly continuous by Assumption~\ref{ass:cmdp}). Rewards are bounded and $C^1$ in the action via the composition
\[
r_t\;=\;r(s_t,a_t,s_{t+1})
\;=\;\underbrace{\mathrm{PNL}^{\mathrm{quote}}+\mathrm{PNL}^{\mathrm{hedge}}}_{\text{BS/eSSVI $C^1$}}\;
-\;\lambda_{\mathrm{shape}}\mathrm{Shape}
-\;(\lambda_{\mathrm{arb}}+\mathrm{dual})\,(\mathrm{BF}+\mathrm{CAL})
-\;\lambda_{\mathrm{cvar}}\widehat{\mathrm{CVaR}}^{-}_{q,t},
\]
where $\mathrm{BF},\mathrm{CAL}$ use softplus smoothing (Appendix~\ref{app:surrogates}) and $\widehat{\mathrm{CVaR}}^{-}_{q,t}$ is the smoothed RU objective (Appendix~\ref{app:A5}). Denote the discounted return $G=\sum_{t\ge 0}\gamma^t r_t$ and value $J(\vartheta)=\mathbb{E}_\vartheta[G]$.

\paragraph{Step 1: $J(\vartheta)$ is finite and measurable on compacts.}
By bounded rewards $|r_t|\le R_{\max}$ a.s.\ and $\gamma\in(0,1)$,
\[
|J(\vartheta)|\;\le\;\mathbb{E}_\vartheta\Big[\sum_{t\ge 0}\gamma^t |r_t|\Big]
\;\le\;\frac{R_{\max}}{1-\gamma}\;<\infty.
\]
Measurability of $J$ in $\vartheta$ follows from the dominated convergence theorem (DCT) because the trajectory law depends continuously on $\vartheta$ (weak continuity of $P$, continuity of $\pi_\vartheta$, compact $\Theta$) and $|G|\le R_{\max}/(1-\gamma)$.

\paragraph{Step 2: LR policy–gradient identity and existence.}
Write the trajectory density under $\vartheta$ as
\[
p_\vartheta(\tau)\;=\;d_0(s_0)\prod_{t\ge 0}\pi_\vartheta(a_t\mid s_t)\,P(s_{t+1}\mid s_t,a_t),\qquad
\tau=(s_0,a_0,s_1,a_1,\ldots).
\]
Assumptions ensure $\pi_\vartheta(a\mid s)>0$ for all actions (Gaussian with bounded std) and $P$ independent of $\vartheta$. Then
\[
\nabla_\vartheta J(\vartheta)
=\nabla_\vartheta\int G(\tau)\,p_\vartheta(\tau)\,d\tau
=\int G(\tau)\,p_\vartheta(\tau)\,\nabla_\vartheta\log p_\vartheta(\tau)\,d\tau
=\mathbb{E}_\vartheta\!\left[\,G(\tau)\sum_{t\ge 0}\nabla_\vartheta\log \pi_\vartheta(a_t\mid s_t)\,\right],
\]
where interchange of $\nabla$ and $\int$ is justified by DCT as follows. The score $\nabla_\vartheta\log \pi_\vartheta(a\mid s)$ is uniformly bounded on $\Theta$:
for a diagonal Gaussian with stds in $[\sigma_{\min},\sigma_{\max}]$,
\[
\left\|\nabla_\vartheta \log \pi_\vartheta(a\mid s)\right\|
=\left\|\nabla_\vartheta \left[-\tfrac12\sum_i \tfrac{(a_i-\mu_i)^2}{\sigma_i^2}-\sum_i \log\sigma_i\right]\right\|
\le C_\pi
\]
for some $C_\pi$ (bounded mean and log-std networks on a compact parameter set).
Thus
\[
\left|G(\tau)\sum_{t\ge 0}\nabla_\vartheta\log \pi_\vartheta(a_t\mid s_t)\right|
\le \frac{R_{\max}}{1-\gamma}\,\sum_{t\ge 0}\|\nabla_\vartheta\log \pi_\vartheta(a_t\mid s_t)\|
\le \frac{R_{\max}}{1-\gamma}\,\sum_{t\ge 0} C_\pi\,\gamma^t
= \frac{C_\pi R_{\max}}{(1-\gamma)^2},
\]
where we used that adding a $\gamma^t$ factor is standard after centering with a baseline (see below); otherwise one can apply the equivalent state–action value form with $Q^\pi$ to absorb the discount. Hence DCT applies, proving existence and the LR form.

A variance–reduced form is obtained by subtracting a baseline $b(s_t)$:
\[
\nabla_\vartheta J(\vartheta)
=\mathbb{E}_\vartheta\Bigg[\sum_{t\ge 0}\gamma^t\,\nabla_\vartheta\log \pi_\vartheta(a_t\mid s_t)\,\big(Q^\pi(s_t,a_t)-b(s_t)\big)\Bigg],
\]
with $Q^\pi$ the discounted state–action value. Choosing $b(s)=V^\pi(s)$ yields the advantage $A^\pi$; bounded rewards imply $|Q^\pi|\le R_{\max}/(1-\gamma)$ and $|A^\pi|\le 2R_{\max}/(1-\gamma)$.

\paragraph{Step 3: Boundedness of $\nabla_\vartheta J(\vartheta)$ on compacts.}
Using the advantage form with any bounded baseline,
\[
\big\|\nabla_\vartheta J(\vartheta)\big\|
\le \mathbb{E}_\vartheta\left[\sum_{t\ge 0}\gamma^t \,\|\nabla_\vartheta\log \pi_\vartheta(a_t\mid s_t)\|\,|A^\pi(s_t,a_t)|\right]
\le \sum_{t\ge 0}\gamma^t\, C_\pi \,\frac{2R_{\max}}{1-\gamma}
= \frac{2C_\pi R_{\max}}{(1-\gamma)^2}.
\]
Thus the gradient norm is bounded uniformly in $\vartheta\in\Theta$ (compact).

\paragraph{Step 4: Why $r_t$ is $C^1$ and Lipschitz in actions.}
Each reward component is a $C^1$ composition with bounded Jacobians on the admissible set (Assumption~\ref{ass:smooth}):
\begin{itemize}
\item BS/eSSVI pricing and Greeks are $C^\infty$ (Lemma~\ref{lem:BS-regularity} and Appendix~\ref{app:essvi-derivs}); the action deformations $(\psi\text{-scale},\rho\text{-shift})$ enter linearly and the wing cap is implemented via a smooth rescaling.
\item $\mathrm{BF},\mathrm{CAL}$ use softplus smoothing of finite–difference operators (Appendix~\ref{app:surrogates}), hence $C^1$ and locally Lipschitz in the eSSVI parameters, thus in actions by chain rule.
\item $\widehat{\mathrm{CVaR}}^{-}_{q,t}$ is the smoothed RU value with a unique inner minimizer $\eta_\tau^\star(a_t)$; Danskin’s theorem (Appendix~\ref{app:A5}) gives $C^1$ dependence on $a_t$ and bounded gradient via mixed pathwise/LR estimators.
\end{itemize}
Therefore $r_t$ is $C^1$ and globally Lipschitz in actions on the admissible set with constant $L_r$.

\paragraph{Step 5: PPO surrogate gradient consistency.}
Consider the PPO objective
\[
\mathcal{L}_{\mathrm{PPO}}(\vartheta)
=\mathbb{E}\Big[\min\big(r_t(\vartheta)\,\hat A_t,\ \mathrm{clip}(r_t(\vartheta),1-\epsilon,1+\epsilon)\,\hat A_t\big)\Big]
- c_v\,\mathbb{E}(V_\omega-\hat R)^2 + c_{\mathcal H}\,\mathbb{E}\mathcal{H}(\pi_\vartheta),
\]
with importance ratio $r_t(\vartheta)=\pi_\vartheta(a_t\mid s_t)/\pi_{\vartheta_{\mathrm{old}}}(a_t\mid s_t)$, advantage estimator $\hat A_t$, and $\epsilon\in(0,1)$. The clipping enforces
\[
|r_t(\vartheta)\hat A_t - \mathrm{clip}(r_t(\vartheta),1-\epsilon,1+\epsilon)\hat A_t|
\le 2\epsilon\,|\hat A_t|.
\]
Since $|\hat A_t|\le C_A:=2R_{\max}/(1-\gamma)$ (bounded–reward GAE with $\lambda\in[0,1]$), the per–sample contribution to $\nabla_\vartheta \mathcal{L}_{\mathrm{PPO}}$ is uniformly bounded by a constant depending on $(\epsilon, C_A, C_\pi)$. Under standard regularity (weak continuity of $P$, continuity of $\pi_\vartheta$), the law of minibatches converges weakly as batch size $\to\infty$; the boundedness and continuity of the integrand imply that the empirical gradient converges in probability to the population gradient (uniform law of large numbers). Moreover, when $\epsilon\downarrow 0$, the clipped surrogate gradient approaches the LR gradient of $J(\vartheta)$; for fixed small $\epsilon$, the bias is controlled by the bound above and vanishes as training steps shrink (trust–region interpretation).

\paragraph{Step 6: Interchanging limits and gradients.}
For completeness, we justify interchanging (i) the $\nabla_\vartheta$ operator with the infinite discounted sum and (ii) expectations. Because $|r_t|\le R_{\max}$ and $\|\nabla_\vartheta\log\pi_\vartheta\|\le C_\pi$, we have
\[
\sum_{t\ge 0}\gamma^t\,\mathbb{E}\big[\|\nabla_\vartheta\log\pi_\vartheta(a_t\mid s_t)\|\ |Q^\pi(s_t,a_t)|\big]
\le \sum_{t\ge 0}\gamma^t\, C_\pi \frac{R_{\max}}{1-\gamma}
= \frac{C_\pi R_{\max}}{(1-\gamma)^2}\,<\infty,
\]
so Fubini–Tonelli and dominated convergence apply to exchange $\nabla$, $\sum$, and $\mathbb{E}$. This completes the proof.

\paragraph{Conclusion.}
We have shown that $J(\vartheta)$ is finite and differentiable with LR gradient; the gradient norm is uniformly bounded on compact $\Theta$; and PPO’s clipped surrogate gradient is a consistent estimator with controlled bias/variance. Hence Theorem~\ref{thm:T6} holds.
\end{proof}

\subsection*{A.8 Proofs of Propositions~\ref{prop:P7}--\ref{prop:P8} (monotonicity and sensitivities)}
\label{app:A8}

\begin{proof}[Proof of Proposition~\ref{prop:P7} (Monotonicity in the half-spread)]
Recall the half–spread mapping \eqref{eq:halfspread}
\(
\frac{\mathrm{spread}(m,k)}{2}=\alpha\,S_t\,\tilde{\sigma}_m(k)\sqrt{T_m}\,s_0,
\)
and the identities in \eqref{eq:spread-sens}:
\(
\partial \mathrm{mid}/\partial \alpha=0,\ 
\partial \mathrm{ask}/\partial \alpha=S_t\tilde{\sigma}\sqrt{T}\,s_0>0,\
\partial \mathrm{bid}/\partial \alpha=-S_t\tilde{\sigma}\sqrt{T}\,s_0<0.
\)
With $u_b=\beta(\mathrm{ask}-C^\star)$ and $u_s=\beta(C^\star-\mathrm{bid})$, and using
the intensity maps \eqref{eq:lambda-buy}–\eqref{eq:lambda-sell} together with
$\sigma'(x)>0$ (logistic derivative), we get
\[
\frac{\partial \lambda_{\mathrm{buy}}}{\partial \alpha}
= -\lambda_0 w(k)\,\sigma'(u_b)\,\beta\,\frac{\partial \mathrm{ask}}{\partial \alpha}
< 0,\qquad
\frac{\partial \lambda_{\mathrm{sell}}}{\partial \alpha}
= +\lambda_0 w(k)\,\sigma'(u_s)\,\beta\,\frac{\partial \mathrm{bid}}{\partial \alpha}
< 0,
\]
since $w(k)>0$, $\beta>0$, and the signs of the $\alpha$–derivatives of quotes are fixed.
Therefore both expected buy and sell intensities decrease strictly with $\alpha$. This
holds pointwise for each $(m,k)$ and is uniform on compact parameter domains by
Assumption~\ref{ass:smooth}.
\end{proof}

\begin{proof}[Proof of Proposition~\ref{prop:P8} (Sensitivities to $\rho$-shift and $\psi$-scale)]
The statement consists of three parts: (i) a chain–rule identity for the sensitivity
of the mid price to a shape control parameter $p\in\{\rho\text{-shift},\psi\text{-scale}\}$;
(ii) ATM invariance ($k=0$) of first–order effects; and (iii) corresponding Delta/Vega
sensitivities.

\paragraph{(i) Chain–rule for mid sensitivity.}
Let $\tilde{w}_m(k)$ denote the total variance from the \emph{quoted} (action–deformed)
eSSVI parameters $(\tilde\theta,\tilde\rho,\tilde\psi)$; $\tilde\sigma_m(k)=\sqrt{\tilde{w}_m(k)/T_m}$; and
\(
\mathrm{mid}_m(k)=C^{\mathrm{BS}}(S_t,K=S_te^{k},T_m,\tilde\sigma_m(k)).
\)
By the BS chain rule (Appendix~\ref{app:essvi-derivs}), writing $\mathcal{V}_m(k)=\partial C^{\mathrm{BS}}/\partial \sigma$,
\[
\frac{\partial\,\mathrm{mid}_m(k)}{\partial p}
=\frac{\partial C^{\mathrm{BS}}}{\partial \sigma}\cdot \frac{\partial \tilde{\sigma}_m(k)}{\partial p}
=\mathcal{V}_m(k)\cdot \frac{1}{2\,\tilde\sigma_m(k)\,T_m}\cdot \frac{\partial \tilde w_m(k)}{\partial p},
\]
which is exactly \eqref{eq:master-mid}. It remains to compute $\partial \tilde w/\partial p$.
Using the eSSVI derivatives \eqref{eq:dw-basic}, the action deformation \eqref{eq:ctrl-deform}
yields
\[
\frac{\partial \tilde{w}_m(k)}{\partial (\rho\text{-shift})}
=\left.\frac{\partial w}{\partial \rho}\right|_{(\tilde\theta,\tilde\rho,\tilde\phi)},\qquad
\frac{\partial \tilde{w}_m(k)}{\partial (\psi\text{-scale})}
=\left.\frac{\partial w}{\partial \phi}\right|_{(\tilde\theta,\tilde\rho,\tilde\phi)}\cdot \phi,
\]
which is \eqref{eq:dwdaction}. Substituting into the chain rule completes (i).

\paragraph{(ii) ATM invariance of first–order effects.}
At $k=0$, let $g(0;\rho,\phi)=\sqrt{\rho^2+(1-\rho^2)}=1$. From \eqref{eq:dw-basic},
\[
\left.\frac{\partial w}{\partial \rho}\right|_{k=0}
=\frac{\theta}{2}\left(\phi \cdot 0 + \frac{\rho(\phi\cdot 0+\rho)-\rho}{g}\right)=0,\qquad
\left.\frac{\partial w}{\partial \phi}\right|_{k=0}
=\frac{\theta}{2}\left(\rho \cdot 0 + \frac{(\phi\cdot 0+\rho) 0}{g}\right)=0.
\]
Hence $\partial \tilde{w}/\partial (\rho\text{-shift})=0$ and
$\partial \tilde{w}/\partial (\psi\text{-scale})=0$ at $k=0$, which via (i) implies
\[
\left.\frac{\partial\,\mathrm{mid}_m(k)}{\partial (\rho\text{-shift})}\right|_{k=0}
=\left.\frac{\partial\,\mathrm{mid}_m(k)}{\partial (\psi\text{-scale})}\right|_{k=0}=0.
\]
Therefore, to first order, $\rho$– and $\psi$–deformations only \emph{tilt the wings} and do not move the ATM mid. This is \eqref{eq:atm-invariance}.

\paragraph{(iii) Delta and Vega sensitivities.}
Write BS Delta $\Delta=\partial C^{\mathrm{BS}}/\partial S$ and denote
Vanna $\mathrm{Vanna}=\partial^2 C^{\mathrm{BS}}/(\partial S\,\partial \sigma)$ and
Volga $\mathrm{Volga}=\partial^2 C^{\mathrm{BS}}/\partial \sigma^2$ (Appendix~\ref{app:essvi-derivs}). Using chain rule,
\[
\frac{\partial \Delta}{\partial p}
=\frac{\partial \Delta}{\partial \sigma}\cdot \frac{\partial \tilde{\sigma}}{\partial p}
=\mathrm{Vanna}\cdot \frac{1}{2\,\tilde\sigma\,T}\,\frac{\partial \tilde{w}}{\partial p},\qquad
\frac{\partial \mathcal{V}}{\partial p}
=\frac{\partial \mathcal{V}}{\partial \sigma}\cdot \frac{\partial \tilde{\sigma}}{\partial p}
=\mathrm{Volga}\cdot \frac{1}{2\,\tilde\sigma\,T}\,\frac{\partial \tilde{w}}{\partial p},
\]
which matches \eqref{eq:delta-sens}. In particular, the \emph{sign} of these sensitivities is governed by the sign of $\partial \tilde{w}/\partial p$, i.e., by the local skew effect induced by $\rho$–shift or $\psi$–scale. At $k=0$, these first–order sensitivities vanish by (ii). Away from ATM, the sign flips across wings according to the sign of $k$ and the local values of $(\tilde\rho,\tilde\phi)$.

\paragraph{Uniform boundedness on compacts.}
Assumption~\ref{ass:smooth} (parameter compactness and the cap \eqref{eq:wing-cap}) together with Lemma~\ref{lem:BS-regularity} imply uniform bounds on $\mathcal{V}$, $\mathrm{Vanna}$, $\mathrm{Volga}$ and on the Jacobians of the eSSVI map on any compact $(k,T)$–grid. Because $\tilde\sigma\ge \sigma_{\min}>0$ and $T\ge T_{\min}>0$, all multipliers $1/(2\tilde\sigma T)$ are uniformly bounded as well. Hence the sensitivities above are uniformly bounded on compacts.

\paragraph{Consequences for intensities and net delta.}
Combining (i) with \eqref{eq:ask-sens}–\eqref{eq:lambda-sens}, any shape control $p$ modifies intensities through the induced change in $\tilde w$ and hence in $\mathrm{ask}/\mathrm{bid}$; at ATM the first–order intensity response vanishes because $\partial \tilde w/\partial p=0$. For net delta, the sensitivity formula \eqref{eq:net-delta-sens} follows by differentiating \eqref{eq:net-delta} and using the product rule: a \emph{flow shift} term from $\partial v_{\mathrm{buy/sell}}/\partial p$ (via \eqref{eq:lambda-sens}) plus a \emph{Greek} term weighted by $(v_{\mathrm{sell}}-v_{\mathrm{buy}})$ and $\mathrm{Vanna}$.

These prove all claims in Proposition~\ref{prop:P8}.
\end{proof}

\begin{proof}[Proof of Proposition~\ref{prop:P8}]
We provide a detailed chain of equalities and bounds for the three claims:
(i) the mid–price sensitivity identity; (ii) ATM invariance of first–order
effects; and (iii) the Delta/Vega sensitivities. Throughout we work at a fixed
maturity $T>0$ and log–moneyness $k=\log(K/S)$, and write
$\tilde w=\tilde w(k,T)$ for the \emph{quoted} (action–deformed) eSSVI total
variance, $\tilde\sigma=\sqrt{\tilde w/T}$, and
$\mathrm{mid}=C^{\mathrm{BS}}(S,K,T,\tilde\sigma)$.
The BS Vega is $\mathcal{V}:=\partial C^{\mathrm{BS}}/\partial\sigma$.

\paragraph{(i) Chain rule for mid sensitivity.}
For any scalar control $p\in\{\rho\text{-shift},\psi\text{-scale}\}$, by the chain rule,
\begin{equation}\label{eq:mid-chain}
\frac{\partial\,\mathrm{mid}}{\partial p}
=\frac{\partial C^{\mathrm{BS}}}{\partial \sigma}\,\frac{\partial \tilde\sigma}{\partial p}
=\mathcal{V}\,\frac{1}{2\tilde\sigma T}\,\frac{\partial \tilde w}{\partial p},
\end{equation}
because $\tilde\sigma=\sqrt{\tilde w/T}$ implies
$\partial\tilde\sigma/\partial p=(2\tilde\sigma T)^{-1}\,\partial\tilde w/\partial p$.
This is exactly \eqref{eq:master-mid}.

It remains to express $\partial\tilde w/\partial p$ in terms of eSSVI parameters.
Let $w=w(k;\theta,\rho,\phi)$ be the eSSVI total variance \eqref{eq:ssvi} with
\[
g(k;\rho,\phi)=\sqrt{(\phi k+\rho)^2+(1-\rho^2)}.
\]
From Appendix~\ref{app:essvi-derivs}, we recall
\begin{equation}\label{eq:dw-basic-again}
\frac{\partial w}{\partial \theta}=\frac12\,(1+\rho\phi k+g),\quad
\frac{\partial w}{\partial \rho}=\frac{\theta}{2}\left(\phi k+\frac{\rho(\phi k+\rho)-\rho}{g}\right),\quad
\frac{\partial w}{\partial \phi}=\frac{\theta}{2}\left(\rho k+\frac{(\phi k+\rho)k}{g}\right).
\end{equation}
Under the action deformation \eqref{eq:ctrl-deform},
\[
(\theta,\rho,\phi)\ \longmapsto\ (\tilde\theta,\tilde\rho,\tilde\phi)
=(\theta,\ \rho+\rho\text{-shift},\ \phi\cdot \psi\text{-scale}),
\]
and therefore
\begin{equation}\label{eq:dwdaction-again}
\frac{\partial \tilde w}{\partial (\rho\text{-shift})}
=\left.\frac{\partial w}{\partial \rho}\right|_{(\tilde\theta,\tilde\rho,\tilde\phi)},\qquad
\frac{\partial \tilde w}{\partial (\psi\text{-scale})}
=\left.\frac{\partial w}{\partial \phi}\right|_{(\tilde\theta,\tilde\rho,\tilde\phi)}\cdot \phi.
\end{equation}
Substituting \eqref{eq:dwdaction-again} and \eqref{eq:dw-basic-again} into \eqref{eq:mid-chain} proves the mid–price sensitivity identity.

\paragraph{(ii) ATM invariance ($k=0$).}
At the money ($k=0$), we have $g(0;\rho,\phi)=\sqrt{\rho^2+(1-\rho^2)}=1$. Evaluating
\eqref{eq:dw-basic-again} at $k=0$ gives
\[
\left.\frac{\partial w}{\partial \rho}\right|_{k=0}
=\frac{\theta}{2}\left(\phi\cdot 0 + \frac{\rho(\phi\cdot 0+\rho)-\rho}{1}\right)=0,\qquad
\left.\frac{\partial w}{\partial \phi}\right|_{k=0}
=\frac{\theta}{2}\left(\rho\cdot 0 + \frac{(\phi\cdot 0+\rho)\cdot 0}{1}\right)=0.
\]
By \eqref{eq:dwdaction-again},
$\partial\tilde w/\partial(\rho\text{-shift})=\partial\tilde w/\partial(\psi\text{-scale})=0$ at $k=0$,
and hence \eqref{eq:mid-chain} yields
\[
\left.\frac{\partial\,\mathrm{mid}}{\partial (\rho\text{-shift})}\right|_{k=0}
=\left.\frac{\partial\,\mathrm{mid}}{\partial (\psi\text{-scale})}\right|_{k=0}=0,
\]
which is the ATM first–order invariance \eqref{eq:atm-invariance}. Economically,
$\rho$/$\phi$ deformations tilt the \emph{wings} and leave the ATM level unchanged to first order; only $\theta$ moves the ATM \cite{GatheralJacquier2014,Lee2004}.

\paragraph{(iii) Delta and Vega sensitivities.}
Let $\Delta=\partial C^{\mathrm{BS}}/\partial S$ and $\mathcal{V}=\partial C^{\mathrm{BS}}/\partial\sigma$ be the BS Delta and Vega, and denote the cross– and second–order sensitivities
\[
\mathrm{Vanna}=\frac{\partial^2 C^{\mathrm{BS}}}{\partial S\,\partial \sigma},
\qquad
\mathrm{Volga}=\frac{\partial^2 C^{\mathrm{BS}}}{\partial \sigma^2}.
\]
Using the chain rule w.r.t.\ any $p\in\{\rho\text{-shift},\psi\text{-scale}\}$,
\begin{align}
\frac{\partial \Delta}{\partial p}
&=\frac{\partial \Delta}{\partial \sigma}\,\frac{\partial \tilde\sigma}{\partial p}
=\mathrm{Vanna}\cdot \frac{1}{2\tilde\sigma T}\,\frac{\partial \tilde w}{\partial p},
\label{eq:delta-sens-again}\\
\frac{\partial \mathcal{V}}{\partial p}
&=\frac{\partial \mathcal{V}}{\partial \sigma}\,\frac{\partial \tilde\sigma}{\partial p}
=\mathrm{Volga}\cdot \frac{1}{2\tilde\sigma T}\,\frac{\partial \tilde w}{\partial p}.\nonumber
\end{align}
These match \eqref{eq:delta-sens}. In particular,
the \emph{sign} of both sensitivities is controlled by the sign of $\partial\tilde w/\partial p$,
i.e., by whether the deformation increases or decreases total variance at $(k,T)$.
At $k=0$, we have $\partial\tilde w/\partial p=0$ for $p\in\{\rho\text{-shift},\psi\text{-scale}\}$
by (ii), hence the first–order Delta/Vega sensitivities vanish at ATM.

\paragraph{(iv) Continuity and boundedness on compacts.}
Under Assumption~\ref{ass:smooth}, the admissible parameter set is compact,
$T\ge T_{\min}>0$, and $\tilde\sigma\ge\sigma_{\min}>0$; together with
Lemma~\ref{lem:BS-regularity}, this implies $\mathcal{V},\mathrm{Vanna},\mathrm{Volga}$ are continuous and uniformly bounded on any compact $(k,T)$–grid, and the eSSVI Jacobians in \eqref{eq:dw-basic-again} are uniformly bounded as well. Consequently, the prefactor $1/(2\tilde\sigma T)$ is uniformly bounded and the product representations \eqref{eq:mid-chain}–\eqref{eq:delta-sens-again} are uniformly bounded on compacts.

\paragraph{(v) Wing sign structure (qualitative).}
From \eqref{eq:dw-basic-again}, away from ATM the terms involving $k$ dominate:
\[
\operatorname{sign}\!\Big(\frac{\partial w}{\partial \phi}\Big)\ \approx\ \operatorname{sign}(k),\qquad
\operatorname{sign}\!\Big(\frac{\partial w}{\partial \rho}\Big)\ \approx\ \operatorname{sign}(\phi k),
\]
modulo the $g$–normalization. Thus $\psi$–scale and $\rho$–shift typically increase variance on one wing and decrease it on the other, reproducing the expected skew tilts; the sensitivities of mid/Delta/Vega inherit these signs via \eqref{eq:mid-chain} and \eqref{eq:delta-sens-again}.

Combining (i)–(v) proves Proposition~\ref{prop:P8}.
\end{proof}


\section*{Appendix B: Implementation Details and Hyperparameters}
\label{app:impl}

\paragraph{Networks and parameterization.}
Actor/critic are two-layer MLPs with $\tanh$ activations. The actor outputs state-dependent mean and diagonal log-std for the \emph{raw} action $z\in\mathbb{R}^5$; physical actions $a$ follow the squashing map in \S\ref{sec:policy-class}. The critic $V_\omega(s)$ is independent.

\paragraph{Default hyperparameters (reproduce main figures).}
Table~\ref{tab:hyper} lists the values used for all results. These match the settings referenced in \texttt{settings.json}.

\begin{table}[H]
\centering
\caption{Hyperparameters and defaults (simulation-only runs).}
\label{tab:hyper}
\begin{tabular}{ll}
\toprule
Component & Value \\
\midrule
Actor/critic MLP & hidden size $64$, 2 layers, $\tanh$ \\
Actor log-std range & $[\log 10^{-3},\ \log 0.5]$ (per-dimension clamp) \\
Optimizer & Adam ($3\times 10^{-4}$), grad clip $1.0$ \\
PPO & clip $\epsilon=0.2$, GAE $(\gamma,\lambda)=(0.99,0.95)$, value loss $0.5$, entropy $10^{-3}$ \\
Warm-start & $800$ steps toward $a^\star=(0.01,0.5,1.0,0.0,0.0)$, loss \eqref{eq:warm-loss} \\
Annealing & $\lambda_{\mathrm{shape}}: 0\!\to\!0.5$, $\lambda_{\mathrm{arb}}: 0\!\to\!0.05$, $\lambda_{\mathrm{cvar}}: 0.01$ (fixed) \\
eSSVI cap & $\tau_{\max}$ s.t.\ $\theta\phi\le \tau_{\max}$ (Theorem~\ref{thm:T5}) \\
BF/CAL smoothing & softplus temperature $\tau_{\mathrm{arb}}=10^{-3}$ \\
CVaR & $q=0.05$, $N_{\mathrm{MC}}=64$, softplus temperature $\tau_{\mathrm{cvar}}=10^{-3}$ \\
Intensity & $\lambda_0=0.8$, $\beta=35$, $w(k)=\exp(-|k|/0.25)$ \\
Spread scale & $s_0=0.10$ \\
Grids & $T\in\{7,14,21,30,60,90\}/252$; $k\in[-0.35,0.35]$ with $21$ points \\
Episodes \& seed & $8$ episodes; seed $=0$ \\
\bottomrule
\end{tabular}
\end{table}

\paragraph{Numerical guards.}
Clamp $T\ge T_{\min}>0$, $\sigma\ge \sigma_{\min}>0$; apply \texttt{nan\_to\_num} to all intermediate tensors; keep $\psi$ in $[0,\psi_{\max}(\rho))$ with $\psi_{\max}(\rho)=\frac{2}{1+|\rho|}-\varepsilon_\psi$, $\varepsilon_\psi>0$ (cf.\ \eqref{eq:ssvi-butterfly}).

\vspace{1ex}
\section*{Appendix C: Environment and Training Pseudocode (Expanded)}
\label{app:algo-full}

\begin{algorithm}[H]
\caption{Environment step $s_t,a_t\mapsto s_{t+1}, r_t$ (expanded)}
\label{alg:env-expanded}
\begin{algorithmic}[1]
\STATE \textbf{Input:} state $s_t$, action $a_t=(\alpha,\mathrm{hedge},\psi\text{-scale},\rho\text{-shift},\mathrm{dual})$
\STATE Deform eSSVI params: $(\tilde{\theta},\tilde{\rho},\tilde{\psi})\leftarrow (\theta,\rho+\rho\text{-shift},\psi\cdot \psi\text{-scale})$; enforce $\theta\phi\le \tau_{\max}$
\STATE Compute $\tilde{w}(k,T)$ via \eqref{eq:ssvi}, $\tilde{\sigma}=\sqrt{\tilde{w}/T}$; mid/quotes via \eqref{eq:mid-quote}, \eqref{eq:halfspread}
\STATE Intensities via \eqref{eq:lambda-buy}--\eqref{eq:lambda-sell}; expected fills $v_{\mathrm{buy/sell}}$
\STATE Compute $\mathrm{PNL}^{\mathrm{quote}}_t$ by \eqref{eq:quote-pnl}; $\Delta^{\mathrm{net}}_t$ by \eqref{eq:net-delta}; $\mathrm{PNL}^{\mathrm{hedge}}_t$ by \eqref{eq:hedge-pnl}
\STATE Compute $\mathrm{BF},\mathrm{CAL}$ via \eqref{eq:bf-disc}--\eqref{eq:cal-disc} (softplus); $\mathrm{Shape}$ via \eqref{eq:shape-pen}
\STATE Estimate $\widehat{\mathrm{CVaR}}^{-}_{q,t}$ via \eqref{eq:cvar-smooth} with $N_{\mathrm{MC}}$ scenarios
\STATE Reward $r_t$ by \eqref{eq:reward}; update eSSVI estimate (e.g., mean-reverting filter toward latent surface)
\STATE Advance mid price $S_{t+1}$ (Heston step); form $s_{t+1}$; return $(s_{t+1},r_t)$
\end{algorithmic}
\end{algorithm}

\begin{algorithm}[H]
\caption{PPO update (implementation details)}
\label{alg:ppo-expanded}
\begin{algorithmic}[1]
\STATE Collect $N$ steps on-policy; compute $\hat{A}_t$ with GAE; normalize per batch
\STATE Compute clipped surrogate \eqref{eq:ppo-obj}; take $K$ epochs with minibatches
\STATE Critic target $\hat{R}_t=\hat{A}_t+V_{\omega}(s_t)$; value loss $0.5\|\hat{R}_t-V_\omega(s_t)\|^2$
\STATE Entropy bonus on raw $z$ for all heads (including dual); clamp $\log\sigma$ to control exploration
\STATE Anneal $(\lambda_{\mathrm{shape}},\lambda_{\mathrm{arb}},\lambda_{\mathrm{cvar}})$ per episode
\end{algorithmic}
\end{algorithm}

\vspace{1ex}
\section*{Appendix D: eSSVI and Black--Scholes Derivatives}
\label{app:essvi-derivs}

\paragraph{eSSVI partials.}
With $g(k;\rho,\phi)=\sqrt{(\phi k+\rho)^2+(1-\rho^2)}$, we have
\[
\frac{\partial w}{\partial \theta}=\frac12\,(1+\rho\phi k+g),\quad
\frac{\partial w}{\partial \rho}=\frac{\theta}{2}\Big(\phi k+\frac{\rho(\phi k+\rho)-\rho}{g}\Big),\quad
\frac{\partial w}{\partial \phi}=\frac{\theta}{2}\Big(\rho k+\frac{(\phi k+\rho)k}{g}\Big).
\]
Action map sensitivity (cf.\ \eqref{eq:dwdaction}):
\[
\frac{\partial \tilde{w}}{\partial(\rho\text{-shift})}=\left.\frac{\partial w}{\partial \rho}\right|_{(\tilde{\theta},\tilde{\rho},\tilde{\phi})},\qquad
\frac{\partial \tilde{w}}{\partial(\psi\text{-scale})}=\left.\frac{\partial w}{\partial \phi}\right|_{(\tilde{\theta},\tilde{\rho},\tilde{\phi})}\cdot \phi.
\]

\paragraph{BS Greeks.}
Let $d_\pm = \frac{\log(S/K)\pm \frac12 \sigma^2 T}{\sigma\sqrt{T}}$ (zero rate/carry).
Then Delta $\Delta=\Phi(d_+)$, Vega $\mathcal{V}=S\sqrt{T}\,\varphi(d_+)$, Vanna $=\partial^2 C/(\partial S\,\partial \sigma)=\sqrt{T}\,\varphi(d_+)\,(1-d_+d_-)$, Volga $=\partial^2 C/\partial \sigma^2=S\sqrt{T}\,\varphi(d_+)\,d_+d_-$, where $\Phi$ and $\varphi$ are standard normal cdf/pdf. Chain rules used in \S\ref{sec:interp} follow by substitution with $\tilde{\sigma}=\sqrt{\tilde{w}/T}$.

\vspace{1ex}
\section*{Appendix E: Smoothed BF/CAL Penalties and Lipschitz Bounds}
\label{app:surrogates}

\paragraph{Softplus smoothing.}
For $\mathrm{ReLU}(x)=\max\{x,0\}$ use $s_\tau(x)=\tau\log(1+e^{x/\tau})$ with $\tau\in(0,1]$.
Then $|s_\tau(x)-x_+|\le \tau \log 2$, $s'_\tau(x)=\sigma(x/\tau)\in(0,1)$, $s''_\tau(x)=\sigma(x/\tau)(1-\sigma(x/\tau))/\tau\le (4\tau)^{-1}$.

\paragraph{Lipschitz constants.}
Let $C(\cdot)$ be $C^2$ in $K$ on the lattice hull and bounded by $|C|\le M_C$, $|\partial_{KK}C|\le M_2$. For grid spacing $\Delta K$, the map $C\mapsto \mathrm{BF}$ using $s_\tau$ is Lipschitz with constant $L_{\mathrm{BF}}\le \frac{1}{|\mathcal{K}'|}\sum_{K}\frac{1}{\bar C_m}\frac{1}{\Delta K^2}$ times $\sup|s'_\tau|\le 1$, multiplied by the linear operator norm of central second differences (bounded on compacts). An analogous bound holds for $\mathrm{CAL}$ with $L_{\mathrm{CAL}}\le \frac{1}{|\mathcal{K}|}\sum_{K}{1}/{\bar C_{m,m+1}}$.

\paragraph{Gradient stability.}
Because $s'_\tau\in(0,1)$ and $C$ is $C^1$ in eSSVI parameters (Appendix~\ref{app:essvi-derivs}), gradients of $\mathrm{BF},\mathrm{CAL}$ w.r.t.\ actions are bounded by a constant that scales with $(\Delta K)^{-2}$ and the chain-rule Jacobian of the eSSVI layer, which is bounded on the admissible set (Assumption~\ref{ass:smooth}).

\vspace{1ex}
\section*{Appendix F: CVaR Estimation—Inner Minimization and Variance Control}
\label{app:cvar-extra}

\paragraph{Inner minimization in RU.}
For fixed scenarios $\{\tilde{\mathrm{PNL}}^{(i)}_t\}_{i=1}^{N}$ and temperature $\tau$, the smoothed RU objective is
\[
\hat h_\tau(\eta)=\eta+\frac{1}{(1-q)N}\sum_{i=1}^{N}s_\tau(\eta-\tilde{\mathrm{PNL}}^{(i)}_t).
\]
It is strictly convex in $\eta$ with derivative
$\hat h'_\tau(\eta)=1+\frac{1}{(1-q)N}\sum_i \sigma\big((\eta-\tilde{\mathrm{PNL}}^{(i)}_t)/\tau\big)$.
A unique minimizer is found by Newton or bisection (monotone derivative); initialize at the empirical $q$-quantile and perform a few Newton steps with step-size damping. As $\tau\downarrow 0$, $\eta^{\ast}_\tau\to \mathrm{VaR}_q$.

\paragraph{Gradient estimators.}
With pathwise $\tilde{\Delta}S$ and LR for Poisson $\tilde v$,
\[
\nabla_{a_t}\widehat{\mathrm{CVaR}}^{-}_{q,t}=\frac{1}{(1-q)N}\sum_{i=1}^{N}
s'_\tau\!\big(\eta^{\ast}_\tau-\tilde{\mathrm{PNL}}^{(i)}_t\big)\,
\left(-\nabla_{a_t}\tilde{\mathrm{PNL}}^{(i)}_t
+ \tilde{\mathrm{PNL}}^{(i)}_t\,\nabla_{a_t}\log p(\tilde v^{(i)};a_t)\right),
\]
where the second term vanishes if volumes are kept deterministic in the reward and only resampled for CVaR.

\paragraph{Variance reduction.}
Use antithetic $\xi$ for $\tilde{\Delta}S$, control variates for LR (subtract a state-dependent baseline), and mini-batch normalization; increase $N_{\mathrm{MC}}$ along training (cf.\ \S\ref{sec:stability}).

\vspace{1ex}

\section*{Appendix J: Notation Summary}
\label{app:notation}

\begin{table}[H]
\centering
\caption{Main symbols used in the paper.}
\label{tab:notation}
\begin{tabular}{ll}
\toprule
Symbol & Meaning \\
\midrule
$S_t$ & mid price at time $t$ \\
$K$, $k$ & strike and log-moneyness ($k=\log(K/S)$) \\
$T_m$ & maturity $m$; $\mathcal{T}=\{T_m\}_{m=1}^M$ \\
$w(k,T)$, $\sigma(k,T)$ & total variance, implied volatility ($\sigma=\sqrt{w/T}$) \\
$(\theta,\rho,\psi)$ & eSSVI parameters; $\phi=\psi/\sqrt{\theta}$ \\
$\mathrm{mid}$, $\mathrm{ask}$, $\mathrm{bid}$ & quotes via \eqref{eq:mid-quote}, \eqref{eq:halfspread} \\
$\lambda_{\mathrm{buy/sell}}$ & execution intensities \eqref{eq:lambda-buy}--\eqref{eq:lambda-sell} \\
$\Delta^{\mathrm{net}}$ & net delta under expected fills \eqref{eq:net-delta} \\
$\mathrm{BF}$, $\mathrm{CAL}$ & static no-arbitrage surrogates \eqref{eq:bf-disc}--\eqref{eq:cal-disc} \\
$\mathrm{Shape}$ & cross-maturity smoothness penalty \eqref{eq:shape-pen} \\
$\mathrm{CVaR}_q$ & conditional value-at-risk at tail level $q$ \eqref{eq:cvar-ru} \\
$\alpha$, $\mathrm{hedge}$ & half-spread, hedge intensity (actions) \\
$\psi$-scale, $\rho$-shift & action-induced eSSVI deformations \eqref{eq:ctrl-deform} \\
$\mathrm{dual}$ & state-dependent dual action (effective multiplier) \\
\bottomrule
\end{tabular}
\end{table}

\bibliographystyle{unsrt}  
\bibliography{references}

\end{document}